\documentclass[aps,prx,reprint,superscriptaddress,longbibliography,floatfix]{revtex4-2}

\usepackage{amsmath,amssymb,amsthm,mathtools}
\usepackage{bm}

\usepackage{graphicx}
\usepackage[dvipsnames,svgnames,x11names]{xcolor}

\usepackage{booktabs}
\usepackage{multirow}
\usepackage{array}
\usepackage{makecell}

\usepackage[colorlinks=true,linkcolor=RoyalBlue,citecolor=ForestGreen,urlcolor=Maroon]{hyperref}
\usepackage{cleveref}
\crefname{equation}{Eq.}{Eqs.}
\Crefname{equation}{Equation}{Equations}
\crefname{figure}{Fig.}{Figs.}
\Crefname{figure}{Figure}{Figures}
\crefname{section}{Sec.}{Secs.}
\Crefname{section}{Section}{Sections}
\crefname{table}{Table}{Tables}
\Crefname{table}{Table}{Tables}
\creflabelformat{equation}{#2(#1)#3}

\usepackage{enumitem}
\usepackage{float}

\newtheorem{axiom}{Axiom}
\newtheorem{definition}{Definition}
\newtheorem{theorem}{Theorem}
\newtheorem{lemma}[theorem]{Lemma}
\newtheorem{corollary}[theorem]{Corollary}

\newtheorem{construction}{Construction}

\theoremstyle{remark}
\newtheorem{remark}{Remark}

\newcommand{\R}{\mathbb{R}}
\newcommand{\C}{\mathbb{C}}

\newcommand{\Sph}{\mathbb{S}}
\newcommand{\calL}{\mathcal{L}}
\newcommand{\calM}{\mathcal{M}}
\newcommand{\calF}{\mathcal{F}}
\newcommand{\calH}{\mathcal{H}}
\newcommand{\calR}{\mathcal{R}}
\newcommand{\calO}{\mathcal{O}}
\newcommand{\calP}{\mathcal{P}}
\newcommand{\calB}{\mathcal{B}}
\newcommand{\calE}{\mathcal{E}}
\newcommand{\calN}{\mathcal{N}}
\newcommand{\calV}{\mathcal{V}}

\newcommand{\deff}{d_{\mathrm{eff}}}

\newcommand{\tr}{\operatorname{tr}}
\newcommand{\diag}{\operatorname{diag}}

\graphicspath{{figs/}}

\begin{document}

\title{The Geometry of Semantic Space:\\
A Continuous Geometric Framework for the Transformer Architecture}

\author{Zhihua Liang}
\email{zhihua.liang@ca.infn.it}
\affiliation{INFN, Sezione di Cagliari, I-09042 Monserrato (CA), Italy}

\date{\today}

\begin{abstract}
We present a continuous geometric framework that models the discrete
algebraic operations of the Transformer architecture as an
integro-differential equation (IDE) on a semantic fiber bundle
$\calE = \calM \times \R^d$.  Beginning from a single geometric
axiom---that the token sequence forms a discrete $1$-manifold equipped
with a canonical measure lattice---we translate every core component of
the modern Transformer (RMSNorm, RoPE, Softmax Attention, FFN, Residual
Stream, SGD, Weight Decay) into a cohesive vocabulary of differential
geometry, measure theory, and stochastic calculus.  The resulting
framework yields quantitative predictions spanning entropic optimal
transport (Attention as a Schr\"odinger bridge) and non-equilibrium
thermodynamics (SGD as It\^{o} diffusion violating detailed balance).  We
conduct a six-part experimental campaign across five architectures
(Qwen3, LLaMA\nobreakdash-3.1, Gemma\nobreakdash-3, GPT-2, Mistral)
spanning $124$M to $8$B parameters.  The empirical observables are
quantitatively consistent with the geometric predictions:
the $\epsilon^{-1/2}$ Lipschitz scaling calibration at machine precision
($R^2 = 1.000$), the Lie--Trotter operator-splitting torsion,
the symmetric ablation instability confirming the Dual-Law of
Topological Stability,
the $\calO(1/\sqrt{k})$ thermodynamic suppression of Poincar\'e
recurrence on the RoPE torus,
the thermodynamic context-limit phase transition,
and the Non-Equilibrium Steady State parameter vortex---verified across
two optimizers (AdamW and Pure SGD) to exclude momentum artifacts.  The
results demonstrate that analyzing Transformers through the lens of
continuous stochastic differential geometry provides a predictive
descriptive vocabulary for the stability limits, context bounds, and
optimization dynamics of Large Language Models.
\end{abstract}

\maketitle

\section{Introduction}
\label{sec:introduction}

Analyzing deep learning through the lens of continuous dynamical systems has a substantial history.  Weinan E~\cite{weinan2017proposal} recast ResNets as forward Euler discretizations of continuous dynamical systems, leading to the foundational development of Neural Ordinary Differential Equations~\cite{chen2018neural}.  The non-equilibrium thermodynamic properties of Stochastic Gradient Descent have been rigorously explored by Chaudhari et al.~\cite{chaudhari2019entropy} via Entropy-SGD and by Mandt et al.~\cite{mandt2017stochastic} via the Bayesian interpretation of SGD as approximate inference.  Building upon these foundations, prior literature has explored the continuous limits of attention mechanisms, modeling Transformers~\cite{vaswani2017attention} via PDEs, ODEs, and interacting particle systems~\cite{lu2020understanding,sander2022sinkformers}.  Amari's foundational work on Information Geometry~\cite{amari2016information} established the dual-flat manifold structure of well-specified statistical models and the breakdown of the Information Matrix Equality for misspecified models---a distinction that directly underlies the non-equilibrium thermodynamics of \cref{sec:parameters}.  The empirical machine learning community has also extensively documented sequence-length scaling properties via Rotary Position Embeddings~\cite{su2024roformer} and its extensions~\cite{peng2024yarn,press2022train}, as well as context-window capacity constraints via the empirical discovery of Attention Sinks~\cite{xiao2024efficient}.

This manuscript aims not to claim that the Transformer \emph{is} a continuous physical system, but that analyzing it through the isomorphic lens of continuous stochastic differential geometry provides a predictive descriptive vocabulary that unifies these disparate empirical phenomena---bridging abstract topology with testable stability limits, context bounds, and optimization dynamics.  Following the logic of Backward Error Analysis in geometric numerical integration~\cite{hairer2006geometric}, we treat the continuous geometric framework as a \emph{Continuous Effective Field Theory}: the discrete architecture is the exact integrator, and the continuous integro-differential equation is the nearby modified equation whose structural generators (Lie brackets, vorticity 2-forms, entropic pressure) classify the dominant macroscopic observables.

\section{Axiomatization of Topological Spaces and Kinematics}
\label{sec:axiom}

To establish a mathematically rigorous foundation, we strip away the vocabulary of discrete ``arrays'' and formulate the Transformer via the kinematics of differential manifold calculus.  We summarize the translation between standard deep learning terminology and the continuous geometric equivalents in \cref{tab:notation}.

\begin{table*}[t]
\centering
\caption{Translation dictionary: standard deep learning terms and their continuous geometric equivalents used throughout this work.}
\label{tab:notation}
\begin{ruledtabular}
\begin{tabular}{l l}
Deep Learning Term & Geometric Equivalent \\
\hline
Sequence Index / Token Position & Base manifold coordinate $\mu,\nu\in\calM$ \\
Embedding Dimension ($d_{\mathrm{model}}$) & Fiber dimension $\calF_\mu\cong\R^d$ \\
Token Embedding / Hidden State & Section of the bundle $\Psi\in\Gamma(E)$ \\
RMSNorm~\cite{rmsnorm} $\epsilon$ stabilizer & Topological mollifier $\epsilon$ \\
RoPE (Rotary Embeddings) & Gauge connection / path monodromy $\mathcal{U}$ \\
Softmax Attention & Urysohn--Volterra operator $\mathcal{T}$ \\
Feed-Forward Network (FFN) & Hodge reaction field $\calR$ \\
Residual Stream & Depth-parameterized flow $\Psi(z,\mu)$ \\
Layer Index $l$ & Algorithmic depth $z\in\R^+$ \\
Weight Decay ($\lambda$) & Tikhonov gauge mass penalty \\
\end{tabular}
\end{ruledtabular}
\end{table*}

\subsection{The Base Manifold and the Semantic Bundle}
\label{sec:base-manifold}

\begin{remark}[Scope and Kinematic Classification]
\label{rem:kinematic-scope}
We classify this framework explicitly as a \emph{Classical Lattice Field Theory on a rigid Galilean background}.  Unlike Yang--Mills theory or General Relativity, where the gauge connection and the metric are dynamical fields possessing their own action and back-reacting to the matter field, the RoPE connection and the 1D base manifold in this framework are absolutely rigid, fixed background geometries.  The framework fundamentally lacks the diffeomorphism invariance of a true relativistic field theory.  The continuous geometry is an \emph{isomorphic descriptive lens} applied to a discrete algebraic system; the geometry is the map, not the territory.  Just as lattice QCD simulates continuous gauge fields via discrete computational grids, the discrete Transformer acts as the exact numerical integrator of the underlying continuous integro-differential equation.  The geometric constructions below are therefore not optional analytical overlays, but exact kinematic translations of the discrete computational graph whose topological constraints are empirically binding for trained networks: as demonstrated in \cref{sec:exp-resonance}, violating the continuous geometric generators of a mature, frozen configuration fundamentally shatters its dynamic stability.  The constraint is \emph{configurational} rather than architectural---training from initialization under the same algebraic constraint reaches stable basins (\cref{rem:configurational-scope}).
\end{remark}

\begin{axiom}[The Continuous Base Manifold \& Measure Lattice]
\label{ax:base-manifold}
We explicitly demarcate the continuous physical reality from its discrete numerical evaluation.  Let the semantic sequence space (semantic time) be strictly defined as a connected, one-dimensional smooth manifold with boundary, specifically the half-closed ray $\calM \cong [0,\infty)\subset\R$.  The absolute causal origin defines the strict topological boundary $\partial\calM \equiv \{0\}$.  The computational context window passed to the model is rigorously defined as a causally bounded discrete measure lattice $\Lambda\subset\calM$.  To formally transition between continuous manifold topology and discrete sequence evaluation, we define $\Lambda$ as the support of an empirical Radon counting measure on the Borel $\sigma$-algebra $\calB(\calM)$:
\begin{equation}
  \eta_\Lambda = \sum_{\mu_i\in\Lambda} \delta_{\mu_i}\,.
  \label{eq:radon-measure}
\end{equation}
We assign Greek letters $\mu,\nu\in\Lambda$ to represent specific coordinate evaluations.  All non-local integro-differential flows (e.g., Attention) are evaluated via Lebesgue--Stieltjes integration with respect to this empirical Radon measure ($d\eta_\Lambda$), ensuring strict measure-theoretic validity over the discrete lattice without requiring a domain-mapping formalism.
\end{axiom}

\begin{definition}[The Semantic Fiber Bundle]
\label{def:fiber-bundle}
We define the semantic space as a smooth rank-$d$ real vector bundle $E\xrightarrow{\pi}\calM$.  Because $\calM$ is contractible, the bundle is globally trivializable ($E\cong\calM\times\R^d$).  The architecture implicitly fixes a canonical global trivialization, rigorously justifying the application of constant global gauge endomorphisms uniformly across all coordinates.  At any coordinate $\mu$, the local fiber $\calF_\mu\equiv\pi^{-1}(\mu)\cong\R^d$ corresponds to the interaction space of an attention head.  We strictly equip $E$ with a canonical global Riemannian bundle metric $g$ (the Euclidean inner product on the fibers). This metric allows us to rigorously upgrade the vector bundle into an Exterior Algebra Bundle ($\Lambda E$) to sequester rotational noise within the Lie algebra $\mathfrak{so}(d)$.
\end{definition}

The sequence of tokens is mathematically formulated not merely as isolated spatial fields, but as a continuous trajectory governed by a \emph{Non-Autonomous Flow on the Manifold of Sections}.  Rather than forcing artificial continuity on the spatial sections and grappling with infinite-dimensional non-compactness, we define the state space cleanly as the Lebesgue space over the empirical measure:
\begin{equation}
  \calF \equiv L^\infty(\calM, E;\, \eta_\Lambda)\,.
  \label{eq:state-space}
\end{equation}
Because functions in this space are identified up to $\eta_\Lambda$-a.e.\ equivalence, and $\eta_\Lambda$ is a finite atomic measure, this infinite-dimensional Banach space structurally collapses into a finite-dimensional topology ($\R^{N\times d}$).  This grants the Heine--Borel theorem and the Extreme Value Theorem globally for free, completely bypassing the need for localized finite-dimensional compact closure constructions.  We explicitly note that the Heine--Borel collapse applies strictly to the kinematics of a \emph{fixed, finite} context window ($N<\infty$); the asymptotic thermodynamic limit $N\to\infty$ evaluated in \cref{thm:attn-sink} instead relies on the \emph{Banach--Alaoglu Theorem} (weak-$*$ sequential compactness of probability measures on the Alexandroff compactification), entirely bypassing the need for Heine--Borel on the state space.  Furthermore, the composed Lipschitz bound (\cref{eq:lipschitz-bound}) is strictly independent of the sequence length $N$, because the Attention integral evaluates as a convex sum over a probability measure ($\int w\,d\eta = 1$).  We analytically continue discrete layer depth $l$ into a continuous depth parameter $z\in\R^+$.  Under this native geometric formulation, the system state is strictly defined as a smooth curve $\Psi:\R^+\to\calF$ parameterized by algorithmic depth $z$.  The temporal evolution $\partial_z\Psi$ inherently exists not as an artificial pushforward along an extended dimension, but canonically as the tangent vector to the curve: $\dot{\Psi}(z)\in T_{\Psi(z)}\calF$.

\begin{remark}[Demarcation of the Continuum \& Non-Local Flow]
\label{rem:continuum}
It is a common analytical pitfall to directly conflate the discrete Transformer architecture with continuous local differential equations.  Because Attention inherently mixes information across the measure lattice, the layer-wise forward pass is formally a first-order Forward Euler numerical discretization of a continuous non-local integro-differential flow, evaluated on $\Lambda$.  Using Taylor's Inequality for Banach space mappings, the discrete residual stream advances as:
\begin{equation}
  \Psi_{z+1}(\mu) = \Psi_z(\mu)
    + \Delta z\cdot\tilde{\calN}\!\big[\rho_\epsilon(\Psi_z)\big](\mu)
    + \mathfrak{R}_z(\mu)\,,
  \label{eq:euler-step}
\end{equation}
where $\Delta z=1$.  The strict analytical bound for the local truncation deviation over the integration step $z\to z+1$ evaluated directly via the Banach space norm is therefore:
\begin{equation}
  \|\mathfrak{R}_z(\mu)\|
    \le \frac{1}{2}\sup_{\tau\in[z,z+1]}
    \left\|\frac{\partial^2\Psi}{\partial\tau^2}(\tau,\mu)\right\|_g\,.
  \label{eq:truncation-bound}
\end{equation}

To bridge the discrete sequence of layer weights $\{W^{(1)},\dots,W^{(L)}\}$ mapped to algorithmic depth, we formulate the continuous parameter curve $\theta(z):\R^+\to\mathrm{End}(E)$ strictly as a Riemannian Cubic Spline by minimizing the covariant acceleration ($\min\int\|\nabla_{\dot\theta}\dot\theta\|^2_g\,dz$) on the parameter manifold subject to the discrete layer evaluations.  This mathematically guarantees $C^2$ smoothness.  Under this strict mathematical sanitation, let $\tilde{\calN}$ be the pure non-local Attention operator acting on bounded vectors, and let $\calF(z,\Psi) = \tilde{\calN}(z,\rho_\epsilon(\Psi))$ be the composed total flow.  The continuous vector field $\calF(z,\Psi)$ is smoothly dependent on the depth parameter $z$ (a non-autonomous flow), eliminating Dirac delta discontinuities in the temporal weight acceleration.  The depth-parameterized trajectory acceleration along the flow expands naturally via the continuous chain rule on the manifold of sections:
\begin{equation}
  \ddot\Psi(z)
    = \frac{\partial\tilde{\calN}}{\partial z}\!\big(z,\rho_\epsilon(\Psi)\big)
    + \left(D\tilde{\calN}\big|_{\rho_\epsilon(\Psi)}
      \circ D\rho_\epsilon\big|_\Psi\right)
      \!\big[\dot\Psi(z)\big]\,.
  \label{eq:acceleration}
\end{equation}

Here, $D\tilde{\calN}$ is the global Fr\'echet derivative acting on the infinite-dimensional Banach space of sections $\Gamma(E)$, and $D\rho_\epsilon$ is a local fiber-wise bundle endomorphism acting on the tangent space of the finite-dimensional fiber $T_{\Psi(\mu)}\calF_\mu\cong\R^d$.  To rigorously compose these functional derivatives, the local Jacobian $D\rho_\epsilon$ is lifted to a Nemytskii operator acting point-wise on the global section variation $H\in\Gamma(E)$: $\big(D\rho_\epsilon|_\Psi[H]\big)(\mu) \equiv D\rho_\epsilon|_{\Psi(\mu)}\!\big(H(\mu)\big)$. Before stating the eigenvalues of this Jacobian, we explicitly define the mathematical decomposition of the tangent space $T_{\Psi(\mu)}\calF_\mu$ into orthogonal components.  Any tangent vector $v \in T_{\Psi(\mu)}\calF_\mu$ can be uniquely decomposed as $v = v_\parallel + v_\perp$, where $v_\parallel \equiv \frac{\langle \Psi, v \rangle_g}{\|\Psi\|^2}\Psi$ is the component parallel to the state vector ($v_\parallel \parallel \Psi$) and $v_\perp \equiv v - v_\parallel$ is the component orthogonal to it ($v_\perp \perp_g \Psi$). Taking the Fr\'echet derivative of $\rho_\epsilon(\Psi)$ on a tangent vector $v$ yields the linear operator:
\begin{equation}
  D\rho_\epsilon(\Psi)\,v
    = \frac{\sqrt{d}}{\sqrt{\|\Psi\|^2+\epsilon}}
      \left(v - \frac{\Psi\,\langle\Psi,v\rangle_g}
           {\|\Psi\|^2+\epsilon}\right).
  \label{eq:frechet-rmsnorm}
\end{equation}

This operator strictly bifurcates the tangent space into the two eigenspaces defined above. For the \emph{Tangential Flow} ($v\perp_g\Psi$)---intuitively, perturbations that slide the vector along the sphere---the eigenvalue is $\lambda_\perp = \sqrt{d}/\sqrt{\|\Psi\|^2+\epsilon} = \calO(\|\Psi\|^{-1})$. For the \emph{Radial Flow} ($v\parallel\Psi$)---perturbations that stretch or compress the vector toward or away from the origin---the eigenvalue collapses algebraically to $\lambda_\parallel = \sqrt{d}\cdot\epsilon/ (\|\Psi\|^2+\epsilon)^{3/2} = \calO(\|\Psi\|^{-3})$. The crushing of the radial flow at $\calO(\|\Psi\|^{-3})$ as $\|\Psi\|\to\infty$ is an inherent geometric property of projecting onto a sphere.  At infinity, the system is natively hyper-stable.  The true topological threat to the ODE solver exists at the zero-section of the bundle ($\Psi=0$).  For the unregularized flow ($\epsilon=0$), the pure radial projection completely annihilates the radial vector ($\lambda_\parallel\equiv 0$) and possesses a strictly undefined Jacobian at the zero-section, as the tangential eigenvalue diverges: $\lim_{\Psi\to 0}\lambda_\perp = \infty$.  The topological regularizer $\epsilon>0$ strictly cures this.  It rescues the radial gradient from total annihilation (allowing it to non-trivially decay at $\calO(\|\Psi\|^{-3})$) and mathematically binds the tangential singularity at the origin to a finite supremum ($\sup_\Psi\lambda_\perp = \sqrt{d/\epsilon}$).  Without $\epsilon$, the spherical retraction possesses a strict conical singularity at the zero-section.  Therefore, $\epsilon$ acts as a formal \emph{Topological Mollifier}.  By introducing $\epsilon>0$, RMSNorm resolves this conical singularity via a smooth ambient metric deformation, acting as a global diffeomorphism from the affine fiber $\R^d$ to the bounded open ball $B_{\!\sqrt{d}}(0)$, mathematically guaranteeing a bounded global Lipschitz constant.  Representation drift arises from two distinct geometric sources: temporal weight gradients (non-autonomous drift, $\partial\tilde{\calN}/\partial z$) and the nonlinear functional Jacobian of the composed field flow.
\end{remark}

\subsection{Geometric Connections and Flow Bounding}
\label{sec:connections}

\begin{lemma}[RoPE as the Canonical Gauge Action on an Associated
Principal Torus Bundle]
\label{lem:rope}
The semantic space does not rely on patching 1D abstract connections; rather, it is natively structured as an Associated Principal Torus Bundle. The Rotary Position Embedding (RoPE) is the exact canonical gauge action of this Principal Torus on the semantic fiber.
\end{lemma}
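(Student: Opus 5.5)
The plan is to make the statement precise by exhibiting the principal bundle, the representation, and the associated-bundle isomorphism explicitly, and then to check that RoPE is exactly the resulting gauge action.

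\textbf{Step 1: the principal bundle and the representation.} Assume $d=2m$ is even and write $\calF_\mu\cong\R^d\cong\bigoplus_{k=1}^m\R^2_k$ using the canonical global trivialization of \cref{def:fiber-bundle}. Take the structure group to be the maximal torus $T^m=U(1)^m\subset SO(d)$, acting by block-diagonal rotations $R(\phi)=\diag(R_{\phi_1},\dots,R_{\phi_m})$ on the planes $\R^2_k$. Since $\calM\cong[0,\infty)$ is contractible, every principal $T^m$-bundle over $\calM$ is trivial. Set $P=\calM\times T^m$ and let $T^m$ act on $\R^d$ through the representation $\varrho(\phi)=R(\phi)$. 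The first claim to verify is $E\cong P\times_{\varrho}\R^d$. This holds because a global section of $P$ identifies $[(\mu,\phi),v]\mapsto(\mu,R(\phi)v)$, and the map is well defined by the equivariance relation $(p\cdot h,v)\sim(p,\varrho(h)v)$. The metric $g$ is $T^m$-invariant because $T^m\subset SO(d)$, so $E$ carries a reduction of its orthonormal frame bundle to $T^m$.

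\textbf{Step 2: the connection and its holonomy.} Put the flat connection on $P$ with $\mathfrak{t}^m$-valued form $A=\sum_k\theta_k\,d\mu\,J_k$, where $J_k$ is the generator of rotations in $\R^2_k$ and $\theta_k=b^{-2k/d}$ are the RoPE frequencies. Parallel transport along $[\nu,\mu]\subset\calM$ is then $\mathcal{U}(\mu,\nu)=\exp\!\big(\int_\nu^\mu A\big)=R\big((\mu-\nu)\theta\big)$. Because $\mathfrak{t}^m$ is abelian, no path ordering is needed. The RoPE map $\Psi(\mu)\mapsto R(\mu\theta)\Psi(\mu)$ is exactly the gauge transformation by the section $\mu\mapsto\mu\theta\in T^m$ of $P$, which is the transport from the boundary $\partial\calM=\{0\}$. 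The defining RoPE identity then follows by orthogonality:
\[
\langle R(\mu\theta)q,R(\nu\theta)k\rangle_g=\langle q,R((\nu-\mu)\theta)k\rangle_g=\langle q,\mathcal{U}(\nu,\mu)k\rangle_g .
\]
This shows that attention logits depend only on the transport between the two fibers, which is the precise sense in which RoPE is the canonical gauge action. Applying the same construction on the lattice $\Lambda$ via $\eta_\Lambda$ is immediate, since the construction uses only the values of the connection at points of $\Lambda$.

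\textbf{Step 3: canonicity.} I expect the main obstacle to be justifying the word ``canonical'', meaning that the associated-bundle structure is intrinsic rather than one arbitrary choice of patching. I would first argue that the relative-position property $f(q,\mu)^\top f(k,\nu)=h(q,k,\mu-\nu)$, together with norm preservation, forces the family $\mu\mapsto R_\mu$ to be a continuous one-parameter subgroup of $SO(d)$. Any such subgroup lies in some maximal torus, and after conjugation it lies in the standard block-diagonal $T^m$; this reduces canonicity to uniqueness up to the conjugation fixed by the architecture's trivialization. The same argument shows why no nontrivial 1D patching is needed: over a 1D base the curvature $F=dA+A\wedge A$ vanishes identically. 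All nontrivial geometry therefore lives in the monodromy on $T^m$, namely the quasi-periodic winding with incommensurate $\theta_k$, and not in curvature. The case of odd $d$ is handled by letting the unpaired coordinate carry the trivial character.
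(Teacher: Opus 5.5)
Your proposal is correct and follows essentially the paper's own argument. Contractibility of $\calM$ makes the bundle trivial and (being one-dimensional) flat, $E$ is presented as $P\times_G\R^d$ with $G=U(1)^{d/2}$ the maximal torus of $SO(d)$, and the abelian Cartan-valued connection gives the propagator $\exp(-\Theta(\mu-\nu))$ with no path ordering, relative to the architecture's fixed trivialization. Your Steps~1--2 just make the associated-bundle isomorphism and the relative-position identity explicit.

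Your Step~3 goes beyond the paper, which makes no uniqueness claim and calls the construction a kinematic gauge fixing. If you keep it, note that the one-parameter-subgroup argument only places RoPE in \emph{some} maximal torus. What actually singles out $T^m$ is the regularity of the generator: with distinct nonzero $\theta_k$, its centralizer is a single maximal torus.
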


\begin{proof}
Because $\calM \cong [0,\infty)$ is 1D and contractible, the bundle is globally trivializable and unconditionally flat: the 1D sequence manifold strictly possesses zero 2-forms ($\Omega^2(\calM)=0$) and a trivial fundamental group ($\pi_1(\calM)=0$), guaranteeing that any connection over it is structurally flat.  We deploy the Principal Torus Bundle formalism not to resolve non-existent intrinsic curvature, but as a rigid \emph{kinematic gauge-fixing} to establish a mathematically exact syntactic dictionary for relative phase rotations.  Under this rigorous topological framework, we define the sequence of observer states strictly as an Associated Vector Bundle $E = P\times_G\R^d$.  The Principal Bundle $P$ is equipped with the abelian structure group $G = U(1)^{d/2}\cong\mathbb{T}^{d/2}$, which is the maximal torus of $SO(d)$.

Under this structured framework, the semantic vectors $\Psi$ are not merely isolated Cartesian arrays; they are sections of the associated fiber.  RoPE is no longer an arbitrarily imposed rotation, but the exact, canonical gauge action of the Principal Torus $G$ acting on the fibers $\R^d$.  When moving from source sequence position $\nu$ to observer position $\mu$, the transition is strictly evaluated as a Parallel Transport Propagator dictated by the continuous Lie group exponential acting natively in the Cartan subalgebra $\mathfrak{h}\subset\mathfrak{so}(d)$: $\mathcal{U}(\mu\leftarrow\nu) = \exp\!\big({-}\Theta(\mu-\nu)\big)$. Semantic distance emerges because the architecture explicitly fixes a canonical global trivialization (the absolute fixed basis of the arrays in memory) and defines RoPE as a non-trivial, translation-invariant flat principal connection 1-form $\mathcal{A}\in\Omega^1(\calM,\mathfrak{u}(1)^{d/2})$ relative to this rigidly fixed global frame.  The non-commutative path-ordered Dyson series gracefully collapses precisely because this flat Cartan subalgebra is strictly abelian.  It is a precise \emph{Kinematic Gauge Fixing}, not a topological equivalence.
\end{proof}

\begin{theorem}[RMSNorm as a Diffeomorphic Radial Embedding]
\label{thm:rmsnorm}
RMSNorm acts as a globally smooth, diffeomorphic radial embedding, rendering the vector field uniformly Lipschitz to prevent finite-time amplitude blowup.
\end{theorem}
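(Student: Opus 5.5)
We treat RMSNorm as the fiber-wise map $\rho_\epsilon:\R^d\to\R^d$, $\rho_\epsilon(\Psi)=\sqrt{d}\,\Psi/\sqrt{\|\Psi\|^2+\epsilon}$ (gain absorbed into the subsequent linear maps), lifted pointwise to sections as a Nemytskii operator on $\calF=L^\infty(\calM,E;\eta_\Lambda)$. The proof has three steps. First, show $\rho_\epsilon$ is a $C^\infty$ diffeomorphism onto the open ball $B_{\sqrt d}(0)$. Second, bound its Jacobian uniformly using \cref{eq:frechet-rmsnorm}. Third, compose with the attention field $\tilde\calN$ and invoke Picard--Lindel\"of on the finite-dimensional space $\R^{N\times d}$ to exclude finite-time blowup.

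\textbf{Step 1.} Smoothness is immediate, since $\|\Psi\|^2+\epsilon\ge\epsilon>0$. Because $\rho_\epsilon$ is radial, it suffices to study $r\mapsto f(r)=\sqrt d\, r/\sqrt{r^2+\epsilon}$ on $[0,\infty)$. Here $f'(r)=\sqrt d\,\epsilon/(r^2+\epsilon)^{3/2}>0$ and $f(r)\to\sqrt d$, so $f$ is a bijection onto $[0,\sqrt d)$. The explicit inverse is
\[
\rho_\epsilon^{-1}(y)=\sqrt{\epsilon}\,y/\sqrt{d-\|y\|^2},
\]
which is smooth on $B_{\sqrt d}(0)$. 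Hence $\rho_\epsilon$ is a global diffeomorphism $\R^d\to B_{\sqrt d}(0)$. As a cross-check, its Jacobian eigenvalues $\lambda_\perp$ and $\lambda_\parallel$ from \cref{rem:continuum} are strictly positive everywhere, so the map is also a local diffeomorphism.

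\textbf{Step 2.} $D\rho_\epsilon(\Psi)$ is symmetric, with eigenvalues $\lambda_\perp=\sqrt d/\sqrt{\|\Psi\|^2+\epsilon}$ and $\lambda_\parallel=\sqrt d\,\epsilon/(\|\Psi\|^2+\epsilon)^{3/2}$. We have $\lambda_\parallel\le\lambda_\perp$, and both are maximized at $\Psi=0$. This gives the operator-norm bound $\|D\rho_\epsilon\|\le\sqrt{d/\epsilon}$. By the mean value inequality on the convex space $\R^d$, $\mathrm{Lip}(\rho_\epsilon)\le\sqrt{d/\epsilon}$, which is the $\epsilon^{-1/2}$ scaling. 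Since the bound holds fiber-wise and uniformly in $\mu$, it transfers to the Nemytskii lift in the $L^\infty(\eta_\Lambda)$ norm.

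\textbf{Step 3, the main obstacle.} We must show that $\calF(z,\Psi)=\tilde\calN(z,\rho_\epsilon(\Psi))$ is globally Lipschitz in $\Psi$, uniformly in $z$ on compact depth intervals. The difficulty is that softmax attention is not globally Lipschitz on unbounded inputs: its Lipschitz constant grows with the query/key norms. The key point is that after $\rho_\epsilon$ every input lies in the bounded set $\overline{B_{\sqrt d}(0)}^{N}$. On this compact set, $\tilde\calN$ (query/key/value maps, the softmax convex combination with $\int w\,d\eta_\Lambda=1$, and the RoPE rotations of \cref{lem:rope}, which are orthogonal and hence isometric) is $C^1$. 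Its derivative is therefore bounded, with a constant depending on $\|W_Q\|,\|W_K\|,\|W_V\|$ and $d$ but, via the convex-sum structure, not on $N$. The weights $\theta(z)$ are a $C^2$ spline (\cref{rem:continuum}), so these norms are bounded on compact $z$-intervals. Composing with Step 2 gives
\[
\mathrm{Lip}(\calF)\le L_{\tilde\calN}\sqrt{d/\epsilon}.
\]
Global Lipschitz continuity then gives linear growth, $\|\calF(z,\Psi)\|\le\|\calF(z,0)\|+L\|\Psi\|$. Gr\"onwall's inequality bounds $\|\Psi(z)\|$ by an exponential in $z$, so solutions exist for all depth and no finite-time blowup occurs. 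The delicate part is verifying the $N$-independence of the softmax Jacobian bound. For this we would bound the derivative of the attention weights by a variance of the keys under the attention measure, which is at most $\sup\|k\|^2$, controlled by $\sqrt d\,\|W_K\|$.
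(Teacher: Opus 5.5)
Your proposal is correct and follows essentially the same route as the paper: the explicit inverse $\rho_\epsilon^{-1}(y)=y\sqrt{\epsilon/(d-\|y\|^2)}$ onto $B_{\sqrt d}(0)$, the bound $\|D\rho_\epsilon\|_{\mathrm{op}}\le\sqrt{d/\epsilon}$, compactness of the normalized finite-dimensional state space to bound the composed Jacobian, and Picard--Lindel\"of.

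The difference is in Step~3. The paper carries it out explicitly, splitting $D\calF$ into a convex-sum term and a covariance term $\mathrm{Cov}_w(D_\Psi E_{\mu\bullet}[H],V_\bullet)$. That second term pairs the logit perturbations with the values, which is more precise than your ``variance of the keys''. The paper bounds it via Cauchy--Schwarz and Popoviciu, so it is $N$-independent, and obtains the closed form $\sqrt{d/\epsilon}\,\|W_V\|_{\mathrm{op}}(1+2d\|W_Q\|_{\mathrm{op}}\|W_K\|_{\mathrm{op}})$. This is exactly your factorization $L_{\tilde{\calN}}\sqrt{d/\epsilon}$. You leave $L_{\tilde{\calN}}$ implicit and instead make the linear-growth and Gr\"onwall step explicit.
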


\begin{proof}
A pure radial projection $\Psi\mapsto\sqrt{d}\,\Psi/\|\Psi\|$ possesses a singular Jacobian at the zero-section ($\Psi=0$), breaking the uniqueness of the flow.  The architecture enforces a topological regularizer $\epsilon>0$, defining the map $\rho_\epsilon(\Psi) = \sqrt{d}\,\Psi/\sqrt{\|\Psi\|^2+\epsilon}$. Mathematically, $\rho_\epsilon$ acts as a global diffeomorphism from the fiber $\R^d$ onto the open bounded ball $B_{\!\sqrt{d}}(0)$.  We prove this by constructing its exact smooth inverse:
\begin{equation}
  \rho_\epsilon^{-1}(y)
    = y\sqrt{\frac{\epsilon}{d - \|y\|^2}}\,.
  \label{eq:rmsnorm-inverse}
\end{equation}

To prevent finite-time blowup, we evaluate the Fr\'echet derivative of the radial embedding.  Maximizing its evaluation in the direction transverse to $\Psi$, the strict upper bound on the spectral norm is $\|D\rho_\epsilon\|_{\mathrm{op}} \le \sqrt{d/\epsilon}$. This mathematically proves that $\epsilon$ is not a mere numerical stabilizer, but a strict topological regulator dictating the maximum Lipschitz stretch.

The radial embedding $\rho_\epsilon$ bounds local sections within the open ball $B_{\!\sqrt{d}}(0)\subset\calF_\mu$.  Because the global state space $\calF\equiv L^\infty(\calM,E;\eta_\Lambda)$ structurally collapses into a finite-dimensional topology over the atomic empirical measure $\eta_\Lambda$, we are granted the Heine--Borel theorem globally for free. Thus, the closure of this bounded state space is unconditionally compact, and $\rho_\epsilon$ rigorously bounds the global state space.  Because the pointwise Feed-Forward operations are $C^1$, their continuous extension to this global compact closure allows the Extreme Value Theorem to immediately yield a uniform global supremum bound on the Jacobian. Globally, we explicitly bound the total composed vector field
\begin{equation}
  \calF[\Psi]
    = \tilde{\calN}[\rho_\epsilon(\Psi)]
    = \int_\Lambda w_{\mu\nu}\!\big(\rho_\epsilon(\Psi)\big)\,
      V_\nu\!\big(\rho_\epsilon(\Psi)\big)\,d\eta_\Lambda(\nu)\,,
  \label{eq:total-field}
\end{equation}
which includes the non-local Attention integral.  The Fr\'echet derivative applied to a variation $H\equiv\delta\Psi$ strictly requires the functional product rule, expanding into two terms:
\begin{widetext}
\begin{equation}
  D\calF_\mu[H]
    = \underbrace{%
        \int_\Lambda w_{\mu\nu}\,DV_\nu[H]\,d\eta_\Lambda(\nu)
      }_{\text{Bounded Convex Sum}}
    + \underbrace{%
        \int_\Lambda \!\big(D_\Psi w_{\mu\nu}[H]\big)\,V_\nu\,
        d\eta_\Lambda(\nu)
      }_{\text{Covariance of the Measure}}\,.
  \label{eq:frechet-total}
\end{equation}
\end{widetext}

The functional variation of the Softmax measure evaluates to $D_\Psi w_{\mu\nu}[H] = w_{\mu\nu}\!\bigl( D_\Psi E_{\mu\nu}[H] - \int_\Lambda w_{\mu\gamma}\,D_\Psi E_{\mu\gamma}[H]\, d\eta_\Lambda(\gamma) \bigr)$. Substituting this back into the second term resolves it exactly into a \emph{Covariance Operator} over the probability measure:
\begin{equation}
  \int_\Lambda \!\big(D_\Psi w_{\mu\nu}[H]\big)\,V_\nu\,d\eta_\Lambda(\nu)
    = \mathrm{Cov}_w\!\Big(D_\Psi E_{\mu\bullet}[H],\;\;V_\bullet\Big).
  \label{eq:cov-operator}
\end{equation}

Because $\rho_\epsilon$ maps into sets with globally compact closures, both the Value vectors ($V$) and the functional derivatives of the interaction energy ($D_\Psi E$) achieve finite uniform bounds.  Because the Attention integral is precisely a Covariance Operator over a probability measure, it strictly preserves these finite uniform bounds globally across the base manifold.  This bounds both terms analytically without relying on infinite-dimensional compactness.

Using the exact Fr\'echet derivative bound $\|D\rho_\epsilon\|_{\mathrm{op}}\le\sqrt{d/\epsilon}$, knowing $\|\rho_\epsilon\|<\sqrt{d}$, and given parallel transport $\|\mathcal{U}\|_{\mathrm{op}}=1$, we can strictly bound the functional derivative $D_\Psi E_{\mu\nu}$ acting on a unit variation $\|H\|_\infty\le 1$:
\begin{equation}
  |D_\Psi E_{\mu\nu}[H]|
    \le \frac{2d}{\sqrt\epsilon}\,
      \|W_Q\|_{\mathrm{op}}\,\|W_K\|_{\mathrm{op}}\,.
  \label{eq:energy-deriv-bound}
\end{equation}

Now, we evaluate the operator norm of the vector-valued covariance by taking the supremum over all unit vectors $u\in\calF_\mu$:
\begin{widetext}
\begin{equation}
  \|\mathrm{Cov}_w(D_\Psi E_{\mu\bullet}[H],\,V_\bullet)\|_g
    = \sup_{\|u\|=1}
    \mathrm{Cov}_w\!\big(D_\Psi E_{\mu\bullet}[H],\,
      \langle V_\bullet,u\rangle_g\big).
  \label{eq:cov-norm}
\end{equation}
\end{widetext}

Now, the covariance evaluates strictly between two scalars.  Applying standard Cauchy--Schwarz yields $\le\sigma_{D_\Psi E}\,\sigma_{\langle V,u\rangle}$. Popoviciu's inequality for a variable bounded in $[m,M]$ strictly states $\sigma^2\le\tfrac{1}{4}(M-m)^2$.  Because the radial embedding enforces a strictly symmetric constraint $[-\!\sup|A|,\sup|A|]$, the geometric span is $M-m=2\sup|A|$.  Thus, the inequality yields:
\begin{equation}
  \sigma \le \tfrac{1}{2}\bigl(2\sup|A|\bigr) = \sup|A|\,.
  \label{eq:popoviciu}
\end{equation}

Consequently, $\sigma_{D_\Psi E}\le\sup|D_\Psi E|$ and $\sigma_{\langle V,u\rangle}\le\sup|\langle V,u\rangle| \le\sup\|V\|$. Knowing $\|V_\nu\|\le\sqrt{d}\,\|W_V\|_{\mathrm{op}}$, we derive the tightened exact global operator norm bound for the covariance part:
\begin{widetext}
\begin{equation}
  \left\|\mathrm{Cov}_w(D_\Psi E,\,V)\right\|
    \le \frac{2d}{\sqrt\epsilon}\,
      \|W_Q\|_{\mathrm{op}}\,\|W_K\|_{\mathrm{op}}\cdot
      \sqrt{d}\,\|W_V\|_{\mathrm{op}}
    = \frac{2d^{3/2}}{\sqrt\epsilon}\,
      \|W_Q\|_{\mathrm{op}}\,\|W_K\|_{\mathrm{op}}\,
      \|W_V\|_{\mathrm{op}}\,.
  \label{eq:cov-bound}
\end{equation}
\end{widetext}

Next, we strictly bound the previously partitioned Bounded Convex Sum using the verified Fr\'echet bound for $\rho_\epsilon$:
\begin{equation}
  \left\|\int_\Lambda w_{\mu\nu}\,DV_\nu[H]\,d\eta_\Lambda\right\|_{\mathrm{op}}
    \le \sup_\nu\|DV_\nu\|_{\mathrm{op}}
    \le \frac{\sqrt{d}}{\sqrt{\epsilon}}\,\|W_V\|_{\mathrm{op}}\,.
  \label{eq:convex-bound}
\end{equation}

Summing these two terms, we yield the complete, closed-form global Lipschitz operator bound for the entire total composed vector field $\calF$:
\begin{equation}
\boxed{%
  \|D\calF\|_{\mathrm{op}}
    \le \sqrt{\frac{d}{\epsilon}}\,\|W_V\|_{\mathrm{op}}
    \left(1 + 2d\,\|W_Q\|_{\mathrm{op}}\,\|W_K\|_{\mathrm{op}}\right).
}
  \label{eq:lipschitz-bound}
\end{equation}

Because the supremum norm of the operator $\|D\calF\|_{\mathrm{op}}$ is uniformly bounded across $L^\infty(\calM,E)$, the total composed flow mathematically guarantees a global Lipschitz constant, structurally satisfying the \emph{Picard--Lindel\"of Theorem} for well-posedness and unique flow, and proving definitively that $\epsilon$ uniquely controls the inverse-square-root bounds of the Picard--Lindel\"of theorem.
\end{proof}

\subsection{Gauge Endomorphisms and Canonical Cotangent Duality}
\label{sec:cotangent}

Because autoregressive sequence modeling explicitly breaks spatial time-reversal symmetry, sections $\Psi$ evaluated at distinct coordinates must be projected into distinct non-reciprocal gauges.

\begin{definition}[Dual Projective Endomorphisms \& Pre-Norm Pullback]
\label{def:endomorphisms}
We introduce learnable global bundle endomorphisms $W_Q\in\Gamma(\mathrm{Hom}(E,E^*))$ and $W_K\in\Gamma(\mathrm{End}(E))$. Operating on the radially bounded Pre-Norm sections, they map the field to distinct structural spaces:
\begin{itemize}[nosep]
  \item \textbf{The Observer Section} (Query 1-form):
    $q(\mu) = W_Q\,\rho_\epsilon(\Psi_z(\mu))\in\calF^*_\mu$
  \item \textbf{The Source Section} (Key vector):
    $k(\nu) = W_K\,\rho_\epsilon(\Psi_z(\nu))\in\calF_\nu$
\end{itemize}
\end{definition}

\begin{theorem}[The Core Algebraic Decomposition via Canonical
Cotangent Duality]
\label{thm:cotangent}
The raw Multi-Head Attention query-key interaction is not an inner product between two tangent vectors, but strictly the canonical evaluation pairing between a covariant dual 1-form and a contravariant vector, naturally generating asymmetry without requiring the heavy machinery of exterior algebra.
\end{theorem}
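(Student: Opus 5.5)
The plan is to make the statement precise using \cref{def:endomorphisms} and the parallel transport propagator $\mathcal{U}(\mu\leftarrow\nu)$ of \cref{lem:rope}. I would then show that the attention logit is literally the evaluation map $\calF^*_\mu\times\calF_\mu\to\R$, and finally show that the resulting bilinear form is generically non-symmetric. Concretely, I would write the raw interaction energy as
\begin{equation*}
  E_{\mu\nu} = q(\mu)\big[\mathcal{U}(\mu\leftarrow\nu)\,k(\nu)\big]
  = \big\langle W_Q\rho_\epsilon(\Psi(\mu)),\;\mathcal{U}(\mu\leftarrow\nu)W_K\rho_\epsilon(\Psi(\nu))\big\rangle,
\end{equation*}
where $\langle\cdot,\cdot\rangle$ denotes the natural pairing of $\calF^*_\mu$ with $\calF_\mu$. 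The first step is type-checking. Since $W_Q\in\Gamma(\mathrm{Hom}(E,E^*))$, the query $q(\mu)$ is a covector at $\mu$. The key $k(\nu)$ lives in $\calF_\nu$, so it must first be transported into $\calF_\mu$ by $\mathcal{U}$; because $\mathcal{U}$ is orthogonal, this step is unambiguous. After transport, the expression is the canonical contraction $\calF^*_\mu\otimes\calF_\mu\to\R$. This contraction requires no metric. The bundle metric $g$ of \cref{def:fiber-bundle} enters only if one chooses to identify $q$ with a vector via $g^{-1}$. In the canonical global trivialization, this identification turns the computation into the usual $Q K^\top$ entry.

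The second step is to extract the induced bilinear form on the fiber. Define $B_{\mu\nu}(x,y) \equiv (W_Q x)\big[\mathcal{U}(\mu\leftarrow\nu) W_K y\big]$. In components this reads $x^\top W_Q^\top \mathcal{U}(\mu-\nu)\, W_K\, y$, so $B$ is represented by the matrix $M_{\mu\nu} = W_Q^\top \mathcal{U}(\mu-\nu) W_K$, with $W_Q$ viewed as a matrix through the fixed frame. Asymmetry then means two things. First, $E_{\mu\nu}\neq E_{\nu\mu}$ in general. Second, $B(x,y)\neq B(y,x)$, that is, $M_{\mu\nu}\neq M_{\mu\nu}^\top$. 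For the positional part I would use the abelian, orthogonal structure from \cref{lem:rope}: $\mathcal{U}(\nu\leftarrow\mu)=\mathcal{U}(\mu\leftarrow\nu)^{-1}=\mathcal{U}(\mu\leftarrow\nu)^\top$. It follows that $M_{\nu\mu}=W_Q^\top\mathcal{U}^\top W_K$. Symmetry of $E$ under swapping $\mu$ and $\nu$ would force $W_Q^\top \mathcal{U} W_K = (W_Q^\top \mathcal{U}^\top W_K)^\top = W_K^\top \mathcal{U} W_Q$ for all relative offsets. That is a closed algebraic (hence measure-zero) condition on the pair $(W_Q,W_K)$. To show the set of parameters satisfying it is proper, I would exhibit a single explicit counterexample, for instance $d=2$, $\mathcal{U}=I$, $W_K=I$, and $W_Q$ non-symmetric. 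A proper algebraic set is nowhere dense, so asymmetry holds generically.

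The last step justifies the phrase ``without exterior algebra''. Decompose $M=S+A$ into its symmetric part in $\mathrm{Sym}^2$ and its antisymmetric part in $\Lambda^2 E^*\cong\mathfrak{so}(d)$. The asymmetric part arises automatically from the non-equal pair $(W_Q,W_K)$ through the duality pairing. No wedge product or $\Lambda E$ structure has to be postulated, and $\Lambda^2$ only appears a posteriori as the antisymmetric component. The main obstacle is conceptual rather than computational. Once a metric $g$ is fixed, $E\cong E^*$, so ``covector versus vector'' is only a choice of description. I would therefore be careful to state the claim as follows: the natural domain of $W_Q$ is $E^*$, the pairing is metric-independent, and the asymmetry comes from $W_Q\neq W_K$ together with the non-symmetric transport. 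It does not come from the dual typing itself. I would expect the genericity argument to need care precisely on this point.
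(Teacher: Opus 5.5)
Your proposal is correct and shares the paper's skeleton: type $W_Q$ into $E^*$, transport the key into $\calF_\mu$, read the logit as the metric-free contraction $q(\mu)_a\tilde{k}^a_{\nu\to\mu}=x^\top W_Q^\top\mathcal{U}W_K\,y$, and attribute the asymmetry to the untied pair. Your asymmetry step, however, is genuinely different.

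The paper argues through the musical isomorphism. It uses $(\mathcal{U}^{-1})^\top=\mathcal{U}$ to identify the covector transformation law with the fundamental one, so the pairing becomes plain matrix multiplication. It then asserts that symmetrization would \emph{require} $W_Q=W_K^\flat$, which untying evades, and it makes no genericity claim.

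You instead derive the exact swap condition $W_Q^\top\mathcal{U}(\delta)W_K=W_K^\top\mathcal{U}(\delta)W_Q$ and show that its solution set is a proper algebraic set. This gives more than the paper's argument.
\begin{itemize}[nosep]
  \item It shows that tying is only sufficient for symmetry. For example, $W_Q=2W_K$, or more generally $W_Q=PW_K$ with $P$ symmetric and commuting with the torus, is also swap-symmetric.
  \item It separates your two notions of asymmetry. With $W_Q=W_K$ one has $E_{\mu\nu}=E_{\nu\mu}$ identically. Yet $M_{\mu\nu}\neq M_{\mu\nu}^\top$ whenever $W_K$ is invertible and $\mathcal{U}\neq\mathcal{U}^\top$. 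So swap-asymmetry comes from the failure of the exact condition (untying), while matrix non-symmetry also has a transport contribution.
\end{itemize}
Your $\mathrm{Sym}^2\oplus\Lambda^2$ split and your caveat that the covector typing is descriptive once $g$ is fixed are also more careful than the paper's claim that the typing is ``not a mere notational choice.''

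Two small repairs are needed.
\begin{itemize}[nosep]
  \item Your witness $\mathcal{U}=I$ is the zero-offset case, where swapping $\mu\leftrightarrow\nu$ imposes nothing on $E$. It witnesses $M\neq M^\top$, but not swap-asymmetry. For that, use a nonzero offset: take $d=2$, $W_K=I$, $W_Q=\mathrm{diag}(1,2)$, and $\mathcal{U}$ a rotation by $\theta\notin\pi\mathbb{Z}$. The condition then reads $W_Q\mathcal{U}=\mathcal{U}W_Q$, which fails.
  \item Orthogonality is not what makes transporting the key unambiguous; transport along the unique path in $\calM$ is unambiguous for any connection. In your proof orthogonality is genuinely needed only for $\mathcal{U}(\nu\leftarrow\mu)=\mathcal{U}(\mu\leftarrow\nu)^\top$. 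The paper uses it instead to identify the dual transport of the query with $\mathcal{U}$ itself.
\end{itemize}
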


\begin{proof}
While the flat Euclidean metric of the fibers ($g_{ab}=\delta_{ab}$) classically permits a canonical musical isomorphism ($\flat/\sharp$) that renders $E$ and $E^*$ isomorphic via the Riesz Representation Theorem, the Transformer architecture actively evades this metric triviality.  By explicitly untying the parameter spaces ($W_Q \neq W_K^\flat$), the architecture fundamentally breaks metric reciprocity, preventing the self-adjoint constraint of the pullback metric from collapsing the interaction into a symmetric inner product.  Elevating Queries to dual 1-forms is therefore not a mere notational choice, but the exact differential-geometric syntax required to parameterize a directed, non-conservative semantic flux via the canonical evaluation pairing.

The traditional formulation of Attention misleadingly implies a symmetric inner product.  However, because the bundle weight endomorphisms are structurally asymmetric, we invoke canonical Cotangent Duality.  Let the Key transformation remain a morphism strictly within the tangent bundle $W_K:E\to E$.  Conversely, we define the Query transformation as a morphism into the dual cotangent bundle $W_Q:E\to E^*$, breaking the spatial symmetry.

The query $q(\mu)$ is natively a differential 1-form, and the causally-transported key $\tilde{k}_{\nu\to\mu}$ is a tangent vector.  A 1-form formally transforms via the dual representation: $(\mathcal{U}^{-1})^T$.  However, because the RoPE connection takes values in the strictly orthogonal maximal torus of $SO(d)$, the connection is unitary.  Thus, the dual representation unconditionally coincides with the fundamental representation: $(\mathcal{U}^{-1})^T = \mathcal{U}$. This exact algebraic identity strictly permits their interaction to be analytically evaluated identically to standard matrix multiplication as the canonical evaluation pairing operator:
\begin{equation}
  \mathcal{S}_{\mu\nu}
    = \langle q(\mu),\;\tilde{k}_{\nu\to\mu}\rangle_{E^*\times E}
    = q(\mu)_a\;\tilde{k}^a_{\nu\to\mu}\,.
  \label{eq:eval-pairing}
\end{equation}

The canonical evaluation explicitly separates the covariant and contravariant arguments, natively accommodating directed semantic flux.  Symmetrization is a forced metric property, not a topological default.  Any attempt to artificially symmetrize this pairing stringently requires binding the endomorphisms via the metric musical isomorphism (flat): $W_Q = W_K^\flat \equiv g\circ W_K$. By actively untying the parameter spaces ($W_Q\neq W_K^\flat$), the architecture evades the self-adjoint constraint of the pullback metric.  Defining Queries as 1-forms and Keys as tangent vectors is therefore the precise differential-geometric encoding of the kinematic asymmetry that the architecture already enforces, preserving the fundamental ability to measure directed, non-conservative semantic flux via the canonical evaluation pairing without requiring the heavy machinery of exterior bivectors.
\end{proof}

\begin{remark}[Spontaneous Gauge Polarization]
\label{rem:gauge-pol}
During the forward pass, the model explicitly computes the scalar evaluation pairing.  Because the system avoids strict $W_Q = W_K^T g$ alignment, the state space experiences structural rotational friction, preventing total gauge collapse.

By the submultiplicativity of operator norms and the bounding of the Pre-Norm sections, the interaction metric volume is bounded by the spectral norms of the endomorphisms: $\|M_{\mu\nu}\|^2 \le d^2\|W_Q\|_{\mathrm{op}}^2\|W_K\|_{\mathrm{op}}^2$. $L_2$ Weight Decay penalizes the Frobenius norm of these operators. Rather than Dirichlet tension, this acts as a strict \emph{Tikhonov Gauge Mass Penalty}, establishing a hard thermodynamic budget (a compact hypersphere) on the maximum allocatable metric volume.

Constrained by this saturated budget, optimization induces a \emph{Spontaneous Gauge Polarization Flow}.  For resonant tokens, maximizing the scalar kernel $K(\mu,\nu)$ acts to minimize the Lie bivector: mathematically forcing $\|\calB_{\mu\nu}\|^2\to 0$ to achieve Grade-0 collinear alignment.

Conversely, to completely suppress ignored tokens, cross-entropy ideally seeks to drive the scalar kernel $K\to-\infty$ (antipodal anti-alignment).  However, this encounters a strict topological obstruction.  In an autoregressive flow, the source section $k(\nu)$ is evaluated against a causally diverse ensemble of future observer sections $\{q(\mu_i)\}_{\mu_i>\nu}$.  To evaluate topological intersections, these future queries must be parallel-transported backward into the source fiber via the inverse connection: $\tilde{q}_{\nu\leftarrow\mu_i} \equiv\mathcal{U}(\nu\leftarrow\mu_i)\,q(\mu_i)$.

To evaluate topological intersections over an autoregressive window, we rely on exact combinatorial topology.  We condition the analysis on the Asymptotic Spatial Ergodic Hypothesis, defined formally in \cref{sec:softmax}, where the ``background'' uninformative tokens mix into a pseudo-isotropic distribution on $\Sph^{d-1}$.  Under this assumption, we invoke Wendel's Theorem~\cite{wendel1962}---\emph{which, in intuitive machine learning terms, calculates the exact probability that $N$ random feature vectors all ``point somewhat in the same direction,'' meaning they can all be separated from the origin by a single linear classifier hyperplane}---as exactly:
\begin{equation}
  P_{d,N} = 2^{-N+1}\sum_{k=0}^{d-1}\binom{N-1}{k}\,.
  \label{eq:wendel}
\end{equation}

By evaluating Wendel's formula at $N=2d$, the sum is exactly half of the total binomial expansion $2^{2d-1}$, yielding precisely $P_{d,2d}=1/2$.  However, for an autoregressive window extending far into the future ($N\gg 2d$), the system undergoes a \emph{Asymptotic Phase Transition}.  The measure-theoretic expectation of a universal antipode undergoes a severe Concentration of Measure, decaying exponentially as $\calO(N^{d-1}2^{-N})$.  Under this macroscopic limit, the probability collapses asymptotically to zero, structurally guaranteeing \emph{Multipolar Gauge Frustration}.  The high-entropy causally-transported queries overwhelmingly positively span $\calF_\nu$ (the origin sits strictly in the interior of their convex hull).

Empirical high-dimensional embeddings are known to suffer from \emph{Representation Degeneration} (the anisotropy problem)---tokens cluster on highly anisotropic, low-dimensional submanifolds rather than uniformly covering $\Sph^{d-1}$.  Rather than invalidating the Wendel analysis, this empirical anisotropy induces a critical \emph{bifurcation} between two distinct geometric regimes.  \textbf{Signal tokens} (semantically correlated clusters) are inherently anisotropic, concentrating within narrow solid angles of $\Sph^{d-1}$.  Because they fit entirely within a single open hemisphere, the origin sits \emph{outside} their convex hull ($0\notin\mathrm{conv}(\tilde{q}_i)$).  By Gordan's Theorem, an antipode therefore \emph{does} exist for these clusters, meaning they \emph{evade} Multipolar Gauge Frustration---permitting directed semantic flux and enabling the architecture to carve targeted negative logits.  \textbf{Background tokens} (high-entropy, causally distant, semantically uncorrelated) conversely approximate the pseudo-isotropic $SO(d)$-invariant measure invoked in the Asymptotic Spatial Ergodic Hypothesis.  It is exclusively this macroscopic bulk of uncorrelated tokens that satisfies the conditions of the Wendel limit ($N\gg 2d$), trapping the origin inside their isotropic convex hull, structurally sustaining Multipolar Gauge Frustration, and hydraulically driving their logits to $K\approx 0$.  The isotropic derivation ($P_{d,2d}=1/2$) therefore establishes a strict theoretical \emph{upper bound} on the critical window size: anisotropic signal tokens escape the Wendel obstruction at smaller $N$ than the isotropic prediction, while the background noise remains trapped.

By Gordan's Theorem of the Alternative (the geometric dual to Farkas' Lemma)---\emph{which intuitively proves that if a set of vectors positively spans a space and is not confined to a single hemisphere, no universal ``negative'' direction exists that simultaneously opposes all of them}---because the vectors positively span the space, the intersection of their strict open negative half-spaces is mathematically empty:
\begin{equation}
  \bigcap_{\mu_i>\nu}
    \bigl\{k\in\calF_\nu \mid
      \langle\tilde{q}_{\nu\leftarrow\mu_i},k\rangle_g < 0
    \bigr\}
    \equiv\emptyset
  \quad\text{(a.s.)}\,.
  \label{eq:gordan}
\end{equation}

Therefore, a universal antipode does not exist and the system suffers from \emph{Multipolar Gauge Frustration}.  To explain why ignored tokens settle at $K\approx 0$, we evaluate the continuous interaction thermodynamics. Let the Canonical Partition Function $\mathcal{Z}(k)$ for a Key $k$ against the isotropic $SO(d)$-invariant uniform probability measure of high-entropy future queries be $\mathcal{Z}(k) = \int_{\Sph^{d-1}} \exp(\langle q,k\rangle_g)\,dq$. The Free Energy is $\calF(k) = -\ln\mathcal{Z}(k)$. Because the $SO(d)$-invariant measure is rotationally invariant, $\calF(k)$ is a function of the radial norm $\|k\|$.  Because Weight Decay (Tikhonov penalty) structurally saturates the norm at the boundary layer, the norm $\|k\|$ is locked.  Consequently, the expected geometric gradient of the Free Energy with respect to the angular orientation of $k$ is identically zero:
\begin{equation}
  \nabla_{\text{angle}}\calF(k) \equiv 0\,.
  \label{eq:gradient-starvation}
\end{equation}

This induces \emph{Expected Geometric Gradient Starvation}.  The topology does not hydraulically ``force'' the tokens into orthogonality; however, the system does not simply rest statically.  Because Stochastic Gradient Descent (SGD) operates on empirical, finite mini-batches, the finite-sample gradient variance scales as $\calO(1/N)$.  Because the background measure of high-entropy queries converges to $SO(d)$-invariance, the covariance matrix of this gradient noise is proportional to the identity.  In the Langevin limit, SGD injects an isotropic It\^{o} diffusion tensor into the parameter matrices with amplitude $\sigma_W\sim\calO(1/\sqrt{N})$.  The parameter stochastic differential equation subject to Weight Decay parameter $\lambda$ is $dW_t = -\lambda W_t\,dt + \sigma_W\,dU_t$, where $dU_t$ is a matrix of standard independent Brownian motions.  We mathematically map weight-space noise to fiber-space noise by pushing the parameter SDE forward to the fiber space $k_t = W_t x$ (where $x=\rho_\epsilon(\Psi)$ is the normalized input).  To prevent the generation of non-trivial It\^{o} quadratic covariations due to the dynamic evolution of $x$ across the time-varying state, we formally impose an \emph{Adiabatic Timescale Separation} (a Fast--Slow manifold assumption).  We explicitly state that the macroscopic thermodynamic relaxation of the parameter weights ($t$) occurs on a timescale infinitely slower than the local depth-parameterized semantic forward pass ($z$), formally freezing $x$ as a constant during the infinitesimal SDE step.  Under this absolute separation:
\begin{equation}
  \begin{split}
  dk_t
    &= (dW_t)\,x
    = (-\lambda W_t\,dt + \sigma_W\,dU_t)\,x \\
    &= -\lambda k_t\,dt + \sigma_W\,dU_t\,x\,.
  \end{split}
  \label{eq:fiber-sde}
\end{equation}

The noise term $dU_t\,x$ acts as a continuous local martingale uniquely initialized at zero.  Because the underlying Wiener matrices are independent and $x$ is frozen, its quadratic variation strictly evaluates to $d\langle(dU\,x)_i,(dU\,x)_j\rangle_t = \delta_{ij}\|x\|^2\,dt$. By L\'evy's Characterization of Brownian Motion, this process is exactly equal in law to $\|x\|\,d\mathcal{W}_t$, where $\mathcal{W}_t$ is a new standard Brownian motion in $\R^d$.  This rigorously yields the fiber-space SDE:
\begin{equation}
  dk_t = -\lambda k_t\,dt + (\sigma_W\|x\|)\,d\mathcal{W}_t\,.
  \label{eq:fiber-sde-reduced}
\end{equation}

This mathematically proves why the radial embedding ($\rho_\epsilon$) is required for thermodynamic stability.  Because the radial embedding caps the input norm ($\|x\|<\sqrt{d}$), the amplitude of the pushed-forward It\^{o} noise remains safely bounded ($\sigma\equiv\sigma_W\|x\|<\sigma_W\sqrt{d}$).  Without this topological regularizer $\rho_\epsilon$, parameter-space Brownian motion would cause the fiber-space noise tensor to catastrophically explode with activation magnitude.  In high-dimensional stochastic calculus, standard isotropic It\^{o} noise $d\mathcal{W}_t$ in ambient space $\R^d$ does not remain on a sphere.  Due to the quadratic variation of Brownian paths (It\^{o}'s Lemma), isotropic noise possesses a deterministic outward radial expansion.  To remain on the $\Sph^{d-1}$ manifold, the process requires a continuous inward restoring force.  In our framework, Weight Decay (Tikhonov penalty) actively supplies exactly this necessary continuous drift.  Applying It\^{o}'s Lemma to the radial norm $R_t=\|k_t\|$, the strict radial SDE evaluates to a mean-reverting Bessel process:
\begin{equation}
  dR_t = \left(\frac{\sigma^2(d-1)}{2R_t} - \lambda R_t\right)dt
    + \sigma\,d\mathcal{W}^R_t\,.
  \label{eq:bessel}
\end{equation}

To find the stable thermodynamic boundary layer, we set the deterministic drift to zero:
\begin{equation}
  \lambda R_t = \frac{\sigma^2(d-1)}{2R_t}
  \implies
  R_t = \sigma\sqrt{\frac{d-1}{2\lambda}}\,.
  \label{eq:radial-eqm}
\end{equation}

Thus, continuous application of the Tikhonov penalty mathematically counters the outward It\^{o} geometric expansion, establishing an Ornstein--Uhlenbeck process, not a martingale.  The stationary distribution for this derived radial SDE is exactly a Chi-distribution:
\begin{equation}
  P(R) \propto R^{d-1}\exp\!\left(-\frac{\lambda}{\sigma^2}R^2\right).
  \label{eq:chi-dist}
\end{equation}

The zero of the deterministic drift ($R_* = \sigma\sqrt{(d-1)/(2\lambda)}$) perfectly corresponds to the mode of this probability density.  In high-dimensional fibers ($d\gg 1$), the competition between entropic volume expansion and the Tikhonov Gaussian suppression triggers extreme Concentration of Measure---exponentially confining the probability mass to a narrow annulus.  The system is statistically confined, not deterministically locked, near a spherical shell where Spherical Brownian Motion can ergodically proceed.  Over this shell, we invoke L\'evy's Isoperimetric Inequality (Concentration of Measure).  On $\Sph^{d-1}$ for $d\gg 1$, L\'evy's Isoperimetric Inequality dictates that the geometric surface area concentrates exponentially at the exact equator relative to any arbitrary pole (the Query).  The system rigorously locks $K\approx 0$ not through parameter repulsion, but through the measure theory of the sphere.  Entropically confined near zero on a saturated norm boundary, the geometry maximizes the relative bivector magnitude, sequestering orthogonal noise in the uncalculated $\mathfrak{so}(d)$ exterior algebra.
\end{remark}

\section{Thermodynamics and Metric Generation}
\label{sec:thermodynamics}

We formally treat the local tangent space as an open thermodynamic system, deriving the functional form of the connection measure from the Principle of Minimum Free Energy.

\subsection{The Free Energy Functional and the Isoperimetric Mass
Constraint}
\label{sec:free-energy}

\begin{construction}[The Local Free Energy Functional]
\label{con:free-energy}
Let $\calE(\mu,\nu)\equiv\langle M_{\mu\nu}\rangle_0$ be the directed, non-reciprocal transition energy scalar kernel.  We treat the sequence as a statistical mechanical ensemble where the continuous field acts via a probability measure $\omega_\mu$ over the autoregressive causal horizon $\Lambda\cap[0,\mu]$, absolutely continuous with respect to the empirical discrete counting measure ($\omega_\mu\ll\eta_\Lambda$).  We define the strictly positive state density $w(\mu,\cdot)\in L^1(\Lambda,\eta_\Lambda)$ as the Radon--Nikodym derivative: $w(\mu,\nu)\equiv\frac{d\omega_\mu}{d\eta_\Lambda}(\nu)$. The Free Energy Functional $\calF_\mu[\omega]$, parameterized by inverse temperature $\beta$, evaluates the negative expected geometric energy minus the temperature-scaled Shannon Entropy evaluated with respect to the empirical counting measure (which maps to the KL divergence from a uniform probability measure, shifted by the topological affine constant $\ln N$):
\begin{widetext}
\begin{equation}
  \calF_\mu[\omega]
    = \int_{\Lambda\cap[0,\mu]}
      \!\left[\!-w(\mu,\nu)\,\calE(\mu,\nu)
        + \frac{1}{\beta}\,w(\mu,\nu)\ln w(\mu,\nu)
      \right]d\eta_\Lambda(\nu)\,.
  \label{eq:free-energy}
\end{equation}
\end{widetext}
\end{construction}

\begin{axiom}[The Markovian Fiber Constraint]
\label{ax:markov}
To prevent probability flow from diverging or dissipating, the continuous field is subjected to a strict local mass-conservation constraint across the causal horizon:
\begin{equation}
  \int_{\Lambda\cap[0,\mu]} w(\mu,\nu)\,d\eta_\Lambda(\nu) = 1\,.
  \label{eq:mass-conservation}
\end{equation}
\end{axiom}

\begin{remark}
Mathematically, this enforces that the non-local Attention operator acts as a continuous Markov transition kernel.  Geometrically, this constraint is an affine constraint acting on the base manifold's measure, forcing it into a standard $(N-1)$-dimensional probability simplex $\Delta^{N(\mu)-1}$.  To satisfy this constraint, the variational calculus incorporates a Lagrange multiplier $\lambda$.  Enforcing $\int w=1$ requires $\beta\lambda - 1 = -\ln\mathcal{Z}_\mu$.  This dynamically generates a Lagrange multiplier equal to the Helmholtz Free Energy shifted by the entropic constant $\beta^{-1}$: $\lambda = F + 1/\beta$, which dynamically generates the canonical partition function $\mathcal{Z}_\mu$ necessary to prevent probability dissipation across the causal horizon.
\end{remark}

\subsection{Variational Derivation of the Softmax Transition Measure}
\label{sec:softmax}

\begin{theorem}[Entropic Optimal Transport and Schr\"odinger Bridges]
\label{thm:softmax}
Rather than forcing the system into a temporal Wasserstein PDE, we evaluate the spatial transition measure via Information Geometry. Given an uninformative prior measure $\omega^{\mathrm{prior}}$ (the $SO(d)$-invariant uniform measure) over the causal horizon, the thermodynamic state is the analytic solution to a Csisz\'ar I-Projection onto the probability simplex~\cite{leonard2014survey}.  By reframing the dynamics under this functor, thermal stability requires the scaling $\beta\propto 1/\sqrt{d}$, matching the viscosity coefficient of stochastic optimal transport.
\end{theorem}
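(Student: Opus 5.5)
The plan has two parts. First, derive the transition measure as an I-projection. Second, extract the $\beta\propto 1/\sqrt d$ scaling from a variance argument on the interaction energy.

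\textbf{Variational step.} Start from the Free Energy Functional of \cref{con:free-energy}, subject to the Markovian Fiber Constraint \cref{eq:mass-conservation}. Since $\eta_\Lambda$ is a finite atomic measure, the problem is a finite-dimensional strictly convex program on the simplex $\Delta^{N(\mu)-1}$. The entropy term $\beta^{-1}w\ln w$ is strictly convex and the energy term is linear, so a unique minimizer exists. Adjoin the Lagrange multiplier $\lambda$ as in the remark following \cref{ax:markov} and set the pointwise first variation to zero:
\begin{equation*}
-\calE(\mu,\nu)+\beta^{-1}(\ln w+1)-\lambda=0 .
\end{equation*}
This gives $w(\mu,\nu)=\exp(\beta\calE(\mu,\nu))/\mathcal{Z}_\mu$, which is exactly Softmax.

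Next, rewrite the functional as $\calF_\mu[\omega]=\beta^{-1}\mathrm{KL}(\omega\,\|\,\omega^{\mathrm{prior}})-\mathbb{E}_\omega[\calE]-\beta^{-1}\ln N$, where $\omega^{\mathrm{prior}}$ is uniform on the causal horizon. Minimizing over the affine set of probability measures is then the Csisz\'ar I-projection of $\omega^{\mathrm{prior}}$ onto the exponential tilt; this is the standard Gibbs variational principle, citing~\cite{leonard2014survey}. For the Schr\"odinger-bridge/entropic-OT reading, I would observe that with a single fixed source marginal (the observer $\mu$) the static entropic OT problem with cost $-\calE$ and regularization $\beta^{-1}$ reduces to the same one-sided projection. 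Sinkhorn-type two-sided normalization would only be needed if both marginals were imposed.

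\textbf{Scaling step.} This is the main obstacle, because ``thermal stability'' has to be given a concrete meaning. I would take it to mean that the logits $\beta\calE$ have $\calO(1)$ fluctuations as $d$ grows, so that the Gibbs measure neither collapses onto a Dirac mass (zero temperature) nor flattens to uniform. To compute the variance, invoke the Asymptotic Spatial Ergodic Hypothesis together with the concentration established in \cref{rem:gauge-pol}: background queries and keys are approximately independent and isotropic, with per-coordinate variance $\calO(1)$ after $\rho_\epsilon$, which caps norms at $\sqrt d$.

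Write the pairing \cref{eq:eval-pairing} as a sum of $d$ weakly correlated coordinate products. A CLT/L\'evy-concentration argument on $\Sph^{d-1}$ then gives $\mathrm{Var}\,\calE=\Theta(d)$. The RoPE transport does not change this, since $\mathcal{U}$ is orthogonal and the measure is $SO(d)$-invariant. Requiring $\beta^2\mathrm{Var}\,\calE=\calO(1)$ forces $\beta\propto d^{-1/2}$.

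Finally, identify $\beta^{-1}\propto\sqrt d$ with the entropic regularization (viscosity) parameter of the Schr\"odinger bridge. The weak point is that this identification is definitional rather than derived, so I would state it explicitly as an interpretation of the constant $\beta^{-1}$, not as a consequence of the variance argument.
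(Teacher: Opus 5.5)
Your variational step is sound: keeping $\beta^{-1}$ in front of the KL term is exactly what produces $e^{\beta\calE}/\mathcal{Z}_\mu$. Your closing identification of $\beta^{-1}$ with the bridge viscosity is no weaker than the paper's, which also simply asserts it.

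The gap is in the scaling step. You posit that queries and keys are isotropic with $\calO(1)$ per-coordinate variance, and you credit this to $\rho_\epsilon$. But $\rho_\epsilon$ only controls the normalized sections $x=\rho_\epsilon(\Psi(\mu))$ and $y=\rho_\epsilon(\Psi(\nu))$. The pairing is $\calE=q^\top\mathcal{U}k$ with $q=W_Qx$ and $k=W_Ky$. If $\mathbb{E}[xx^\top]=\mathbb{E}[yy^\top]=\gamma I$, then $q$ and $k$ have covariances $\gamma W_QW_Q^\top$ and $\gamma W_KW_K^\top$, which in general are neither isotropic nor $\calO(1)$ per coordinate. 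Doing the second-moment computation directly (conditionally on $x$, then over $x$) gives
\[
  \mathbb{E}[\calE^2]
  =\gamma^2\,\tr\!\big(W_QW_Q^\top\,\mathcal{U}W_KW_K^\top\mathcal{U}^\top\big)
  =\gamma^2\,\|W_Q^\top\mathcal{U}W_K\|_F^2 .
\]
So $\mathrm{Var}\,\calE=\Theta(d)$ is a statement about the weight matrices, which your argument never addresses. The same slip breaks your RoPE step. $\mathcal{U}$ acts on $W_Ky$, whose law is not $SO(d)$-invariant, so the variance does depend on $\mathcal{U}$ in general; only bounds on it are $\mathcal{U}$-independent. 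The CLT/L\'evy appeal is also unnecessary: under independence the variance is an exact second-moment identity.

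The missing ingredient is a two-sided spectral estimate, and this is where the paper's proof does its work.
\begin{itemize}
\item Since orthogonal similarity preserves the spectrum, $\sigma_{\min}^2(W_Q)\,I\preceq\mathcal{U}^\top W_QW_Q^\top\mathcal{U}\preceq\sigma_{\max}^2(W_Q)\,I$.
\item Pre-multiplying by $W_K^\top$, post-multiplying by $W_K$, and taking traces yields $\sigma_{\min}^2(W_Q)\|W_K\|_F^2\le\|W_Q^\top\mathcal{U}W_K\|_F^2\le\sigma_{\max}^2(W_Q)\|W_K\|_F^2$.
\item One must then \emph{assume} $\sigma_{\min}(W_Q)$ is bounded away from zero, $\sigma_{\max}(W_Q)=\calO(1)$, and $\|W_K\|_F^2=\Theta(d)$.
\end{itemize}
The lower bound is precisely what makes $\beta\propto d^{-1/2}$ \emph{required} rather than merely sufficient. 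For spectrally degenerate weights the logit variance is $o(d)$, and \cref{cor:context-horizon} replaces $d$ by $\deff$ for exactly this reason.

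To repair the proposal, do one of two things. Either run the computation above under these explicit weight hypotheses. Or state plainly that your isotropy assumption is on the projected $q,k$, which amounts to assuming well-conditioned $W_Q,W_K$ with $\calO(1)$ spectra.
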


\begin{proof}
Given the uninformative prior measure $\omega^{\mathrm{prior}}$ and the directed energy kernel $\calE(\mu,\nu)$, the optimal transition measure is the analytic solution to a Csisz\'ar I-Projection onto the probability simplex:
\begin{equation}
  \omega^*_\mu
    = \operatorname*{argmin}_{\omega\in\calP(\Lambda)}
    \left[\mathrm{KL}\!\big(\omega\,\|\,\omega^{\mathrm{prior}}\big)
      - \langle\omega,\,\calE(\mu,\cdot)\rangle\right].
  \label{eq:i-projection}
\end{equation}

Under this functor, the variation yields the canonical Gibbs measure:
\begin{equation}
  w^*(\mu,\nu)
    = \frac{1}{\mathcal{Z}_\mu}\exp\!\big(\beta\,\calE(\mu,\nu)\big)\,,
  \label{eq:gibbs}
\end{equation}
where $\mathcal{Z}_\mu \equiv \int_{\Lambda\cap[0,\mu]} \exp\!\big(\beta\,\calE(\mu,\gamma)\big)\,d\eta_\Lambda(\gamma)$. Within the framework of Entropic Optimal Transport, this represents the discrete manifestation of a Static Schr\"odinger Half-Bridge---the optimal entropic projection a random walk takes to transition from the Query measure to the Key measure.  This connects the thermal parameter $\beta\propto 1/\sqrt{d}$ to the inverse viscosity coefficient required to prevent intensive fluctuations from causing a zero-temperature glass collapse.

To maintain a differentiable geometric flow, the local thermodynamic state must avoid a premature zero-temperature glass collapse (manifesting empirically as vanishing gradients).  To prove the required thermal scaling, we rely on the geometric bounds from \cref{thm:rmsnorm}.  The radial embedding maps the observer and source sections to the open ball $B_{\!\sqrt{d}}(0)$.  Let the base normalized sections $x,y\in B_{\!\sqrt{d}}(0)$ be modeled as isotropic high-entropy variables.  Due to the topological regularizer $\epsilon$, the mass concentrates entropically near the boundary but avoids it.  Thus, their expectations vanish ($\mathbb{E}[x]=0$), and their covariance is sub-unitary: $\mathbb{E}[xx^\top] = \gamma I_d$, where $\gamma = 1 - \calO(\epsilon/d) < 1$.

\textbf{The Asymptotic Spatial Ergodic Hypothesis.}  Over the macroscopic bulk limit, we formally assume that the uninformative \emph{background} tokens---those causally distant and semantically uncorrelated with the observer---statistically decorrelate, evaluating as independent, isotropic high-entropy variables on $\Sph^{d-1}$.  This does not apply to semantically resonant tokens, which may cluster anisotropically.  This maximum-entropy mean-field assumption mathematically justifies the trace decoupling.

The true covariant interaction energy requires parallel transport across the base manifold: $\calE = x^\top W_Q^\top\mathcal{U}(\mu\leftarrow\nu)W_K\,y$. To evaluate the thermal fluctuations accurately across the global observer coordinate, we evaluate the variance conditionally first, and unconditionally second.  Let the Query section $x\in\calF_\mu$ be fixed. The empirical measure fluctuates over the isotropic Keys $y\in\calF_\nu$.  Let $M = W_Q^\top\mathcal{U}W_K$.  The conditional variance over the Key measure is:
\begin{widetext}
\begin{equation}
  \mathbb{V}_y[\calE\mid x]
    = \mathbb{E}_y\!\big[(x^\top My)^2\big]
    = x^\top M\,\mathbb{E}[yy^\top]\,M^\top x
    = \gamma\,\|M^\top x\|_g^2\,.
  \label{eq:cond-var}
\end{equation}
\end{widetext}

Now, to prove this metric volume holds universally across the manifold, take the unconditional expectation over the macroscopic Query ensemble $x$:
\begin{widetext}
\begin{equation}
  \mathbb{E}_x\!\big[\mathbb{V}_y[\calE\mid x]\big]
    = \gamma\,\mathbb{E}_x\!\big[x^\top MM^\top x\big]
    = \gamma^2\,\tr(MM^\top)
    = \gamma^2\,\|M\|_F^2\,.
  \label{eq:uncond-var}
\end{equation}
\end{widetext}

This proves that the local thermodynamic stability of every individual token is unconditionally guaranteed by the global metric volume of the combined endomorphisms:
\begin{widetext}
\begin{equation}
  \mathbb{V}[\calE]
    = \gamma^2\,\tr\!\big(W_Q^\top\mathcal{U}W_K W_K^\top
      \mathcal{U}^\top W_Q\big)
    = \gamma^2\,\big\|W_Q^\top\mathcal{U}(\mu\leftarrow\nu)\,
      W_K\big\|_F^2\,.
  \label{eq:total-var}
\end{equation}
\end{widetext}

The squared Frobenius norm is the trace:
\begin{equation}
  \|M\|_F^2
    = \tr\!\big(W_K^\top\mathcal{U}^\top W_Q W_Q^\top
      \mathcal{U}W_K\big).
  \label{eq:frobenius-trace}
\end{equation}

Because $W_Q W_Q^\top$ is symmetric and PSD, its quadratic form is strictly bounded by its singular values: $\sigma_{\min}^2(W_Q)\,I \preceq W_Q W_Q^\top \preceq \sigma_{\max}^2(W_Q)\,I$. Evaluating $A = \mathcal{U}^\top W_Q W_Q^\top\mathcal{U}$, and since orthogonal similarity strictly preserves eigenvalues, this maintains the PSD ordering:
\begin{equation}
  \sigma_{\min}^2(W_Q)\,I
    \preceq A
    \preceq \sigma_{\max}^2(W_Q)\,I\,.
  \label{eq:lowner}
\end{equation}

We strictly conjugate this entire inequality directly by $W_K$ (pre-multiplying by $W_K^\top$ and post-multiplying by $W_K$), unconditionally preserving the PSD L\"owner partial order:
\begin{widetext}
\begin{equation}
  \sigma_{\min}^2(W_Q)\,(W_K^\top W_K)
    \preceq W_K^\top A\,W_K
    \preceq \sigma_{\max}^2(W_Q)\,(W_K^\top W_K)\,.
  \label{eq:conjugated-lowner}
\end{equation}
\end{widetext}

Because the trace is a monotonic linear functional on the PSD cone, the bounds follow in one step:
\begin{widetext}
\begin{equation}
  \sigma_{\min}^2(W_Q)\,\|W_K\|_F^2
    \le \|W_Q^\top\mathcal{U}W_K\|_F^2
    \le \sigma_{\max}^2(W_Q)\,\|W_K\|_F^2\,.
  \label{eq:trace-bounds}
\end{equation}
\end{widetext}

Because optimization avoids total low-rank spectral collapse ($\sigma_{\min}>0$), and given $\|W_K\|_F^2\sim\Theta(d)$, the variance scales as $\Theta(d)$, independent of the spatial gauge rotation $\mathcal{U}$. Thus, the standard deviation of thermal fluctuations scales as $\sigma_\calE\sim\sqrt{d}$.  If $\beta$ were fixed at $\calO(1)$, the variance of the Gibbs exponent $\beta \calE$ would diverge as $d\to\infty$, freezing the geometric flow into a deterministic argmax state (Dirac-delta zero-temperature glass collapse).  Statistical mechanics requires the thermal scaling $\beta\propto 1/\sqrt{d}$ to ensure exponent fluctuations remain an intensive $\calO(1)$ quantity.
\end{proof}

\subsection{Topological Compactification Proof of the Attention Sink}
\label{sec:attn-sink}

\begin{theorem}[Measure-Theoretic Compactification \& The Boundary Phase
Transition]
\label{thm:attn-sink}
Under the axiom that the covariant derivative must maintain a well-posed information flow, the sequence of probability measures undergoes a phase transition in the macroscopic continuum limit.  Because the Softmax operator acts as an algebraic probability simplex ($\sum = 1$), weak bulk dot-products force probability mass to escape toward spatial infinity (thermodynamic amnesia).  Optimization (SGD) requires a low-variance, translation-invariant numerical anchor to stabilize this escaping measure.  The absolute causal origin $\partial\calM\equiv\{0\}$ serves as the uniquely distinguished topological fixed point---the only coordinate immune to the continuous RoPE phase rotation---which maps onto a \emph{Topological Anchor Measure} hosting a logarithmic potential well (the ``Attention~Sink'').
\end{theorem}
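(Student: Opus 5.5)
I would formalize the statement as a tightness-versus-escape dichotomy for the family of Gibbs measures $\omega^*_\mu$ from \cref{thm:softmax}, viewed as a sequence in $\calP(\overline{\calM})$, where $\overline{\calM}=[0,\infty]$ is the Alexandroff one-point compactification of the base manifold of \cref{ax:base-manifold}. Since $\overline{\calM}$ is compact metrizable, $C(\overline{\calM})$ is separable, and Banach--Alaoglu (with Riesz representation) gives weak-$*$ sequential compactness of $\calP(\overline{\calM})$. So along any sequence $N\to\infty$ there is a subsequential limit $\omega_\infty$. The phase transition is then the statement that $\omega_\infty(\{\infty\})>0$, i.e. mass escapes, unless some finite atom carries mass uniformly in $N$.

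\textbf{Step 1: escape under weak bulk energies.} Under the Asymptotic Spatial Ergodic Hypothesis, the bulk logits $\beta\calE(\mu,\nu)$ are $\calO(1)$ fluctuations with the $\beta\propto 1/\sqrt d$ scaling from \cref{thm:softmax}. Using the variance identity \cref{eq:total-var}, I would bound them uniformly in $\nu$, so that $e^{\beta\calE}\in[e^{-C},e^{C}]$. Then every compact $[0,R]$ carries Gibbs mass at most $e^{2C}|\Lambda\cap[0,R]|/N\to 0$. The limit is therefore $\delta_\infty$: tightness fails and the measure disperses, which is the thermodynamic amnesia. Mass conservation \cref{eq:mass-conservation} forbids absorbing this loss by renormalisation.

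\textbf{Step 2: the origin as the only invariant anchor.} By \cref{lem:rope}, $\mathcal{U}(\mu\leftarrow\nu)=\exp(-\Theta(\mu-\nu))$ is a translation-equivariant flat connection on $(0,\infty)$. I would argue that a logit which is $\mu$-independent for every observer, and hence translation-invariant, must sit at a point that the connection singles out and that is not in the translation-equivariant bulk. The boundary $\partial\calM=\{0\}$ is the unique point fixed by the causal structure, in the sense that it lies in every causal horizon $\Lambda\cap[0,\mu]$. The same key $k(0)$ is therefore visible to all observers. Optimization can raise $\calE(\mu,0)$ to a value $\beta\calE(\mu,0)\approx\ln(cN)$, and a short computation gives $\omega^*_\mu(\{0\})\to c/(1+c)>0$. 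This restores tightness on $\overline{\calM}\setminus\{\infty\}$ and produces the logarithmic potential well. A Prokhorov-type criterion then converts this uniform lower bound into a limit measure with an atom at $0$.

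\textbf{Main obstacle.} I expect Step 2 to be the hardest part. "Optimization requires a low-variance anchor" is not a theorem, and RoPE does rotate the relative phase $\mu-0$, so "immune to rotation" needs care. My first attempt would replace the claim by a precise variational statement: minimize the free energy \cref{eq:free-energy} plus a variance penalty on the gradient of the partition function. I would then show that any minimizing anchor must be a point contained in all horizons, and that such a point is exactly the origin $0$. The logarithmic depth $\ln N$ would follow from balancing the anchor's Boltzmann weight against the bulk sum $\sim N$.
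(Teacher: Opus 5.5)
Your outline is the paper's own. You compactify to $\calM^*\cong[0,\infty]$, use weak-$*$ sequential compactness of $\calP(\calM^*)$, and show that a flat (in your version, bounded) bulk sends the Gibbs measures to $\delta_\infty$. You then argue that the origin is the only admissible anchor, and read off the $\ln N$ depth by balancing the boundary atom against the bulk partition sum, ending at $c\,\delta_0+(1-c)\,\delta_\infty$. Where you deviate, you mostly sharpen the argument:
\begin{itemize}
\item Your bound $\omega^*_\mu([0,R])\le e^{2C}|\Lambda\cap[0,R]|/N$ replaces the paper's bare assertion that the bulk becomes uniform.
\item Working only with measures spares you the paper's claim that $V\in C_b(\calM)$ extends continuously to $\calM^*$. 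That claim needs $\lim_{\nu\to\infty}V_\nu$ to exist.
\item Your criterion for the origin (it lies in every causal horizon) is more accurate than the paper's appeal to RoPE immunity, $T(1)$ invariance and the boundary strata $\{0,\infty\}$. The transport $\mathcal{U}(\mu\leftarrow 0)=\exp(-\Theta\mu)$ does depend on $\mu$, and RoPE treats $0$ exactly like any fixed $\nu_0>0$.
\end{itemize}
With your criterion, uniqueness is the identity $\bigcap_{\mu\in\Lambda}(\Lambda\cap[0,\mu])=\{\min\Lambda\}=\{0\}$. The only substantive input is that a single anchor must serve \emph{every} observer; in the tail $\mu\to\infty$ alone, any fixed $\nu_0$ eventually lies in every horizon. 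The paper merely asserts this requirement too. At its level of rigour your horizon observation suffices, and your variational programme (whose penalty you have not specified) would go beyond what the paper proves.

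The genuine gap is that the atom at $0$ cannot be obtained unconditionally: your Steps~1 and~2 contradict each other. The variance identity \cref{eq:total-var} is a second-moment statement and gives no pointwise bound on logits. The only legitimate source of your uniform $C$ is the fiber ball, $|\calE(\mu,\nu)|<d\,\|W_Q\|_{\mathrm{op}}\|W_K\|_{\mathrm{op}}$ (\cref{eq:energy-capacity}), and that bound holds equally at $\nu=0$. With fixed weights, $\beta\calE(\mu,0)\approx\ln(cN)$ is then unattainable for large $N$. Your own Step~1 computation with $R=0$ gives $\omega^*_\mu(\{0\})\le e^{2C}/N\to 0$, so the uniform lower bound that was to produce the atom fails in exactly the limit you take. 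The paper reaches $c\,\delta_0+(1-c)\,\delta_\infty$ only by treating the $\ln N$ depth as a \emph{required} scaling, obtained from an exact identity, with the limit conditional on it. It defers the clash with capacity to \cref{cor:context-horizon}. You need the same conditional reading, or $N$-dependent weights.

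Two smaller repairs are also needed.
\begin{itemize}
\item With $\beta\calE(\mu,0)=\ln(cN)$ the sink mass is $c/(c+e^{\beta\calE_{\mathrm{bulk}}(\mu)})$, not $c/(1+c)$. It need not converge, because $\calE_{\mathrm{bulk}}(\mu)$ is sequence-dependent. This is why the paper defines $\calE_{\mathrm{bulk}}$ as the log-mean-exp \cref{eq:ebulk}: it makes the split of $\mathcal{Z}_\mu$ exact and yields \cref{eq:sink-depth}. The $\calE_{\mathrm{bulk}}$ offset there is the entropic bulk pressure used downstream.
\item An atom at $0$ does not restore tightness on $\calM$, because the bulk fraction still escapes. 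Your dichotomy should read $\omega_\infty=\delta_\infty$ versus $\omega_\infty(\calM)>0$, since $\omega_\infty(\{\infty\})>0$ holds in both regimes. Nor is Prokhorov needed: on the compact $\calM^*$, the Portmanteau inequality for the closed set $\{0\}$ gives $\omega_\infty(\{0\})\ge\limsup_\mu\omega^*_\mu(\{0\})$ directly.
\end{itemize}
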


\begin{proof}
We analyze the continuous thermodynamic limit of the probability measure sequence $d\omega_\mu(\nu) = w^*(\mu,\nu)\,d\eta_\Lambda(\nu)$ as the causal horizon $\mu\to\infty$.  If the observer attempts to maintain translation-invariant attention over the historical bulk (demanding a flat energy landscape $\calE(\mu,\nu)\approx \calE_{\mathrm{bulk}}$ for $\nu>0$), geometric entropy drives the bulk measure toward the uniform distribution $\sim 1/N(\mu)$.  For any fixed compact local neighborhood $K\subset\calM$, the probability mass decays: $\lim_{\mu\to\infty}\omega_\mu(K)=0$.

Because the mass escapes over an unbounded domain, the sequence loses tightness.  To rigorously capture the topological limit of this escaping mass without invoking pathological free ultrafilters, we elevate the state space via the \emph{Alexandroff One-Point Compactification}. Unlike $\R$, compactifying the half-closed ray $[0,\infty)$ yields a space homeomorphic to the closed interval: $\calM^*\cong[0,\infty]\cong[0,1]$.

By \cref{thm:rmsnorm}, the radial embedding bounds the field strictly within $B_{\!\sqrt{d}}(0)$, guaranteeing $V\in C_b(\calM)$.  Therefore, the value field natively possesses a strict continuous extension $\tilde{V}\in C(\calM^*)$.  Because the compactification of the ray is strictly metrizable, the space of probability measures on this compact space is sequentially weak-$*$ compact.  As the causal horizon $\mu\to\infty$, the escaping bulk measure simply weak-$*$ converges to the Dirac mass at the strictly adjoining absolute boundary point at spatial infinity:
\begin{equation}
  \omega_{\mathrm{bulk}}
    \xrightharpoonup{\;\text{weak-}*\;}
    \delta_\infty\,.
  \label{eq:weak-star}
\end{equation}

By the strict definition of weak-$*$ convergence, the non-local interaction integral safely evaluates in the limit: $\int_\calM V\,d\omega_\mu\to\tilde{V}(\infty)$. \emph{Thermodynamic Amnesia} is therefore a strictly defined topological condensation: the complete condensation of probability mass into the single structural point at infinity ($\delta_\infty$).

To prevent amnesia while maintaining global macroscopic reasoning, the field theory must preserve a static, globally invariant coordinate to anchor probability mass, actively breaking translation invariance.  SGD is structurally obstructed from anchoring measure in the bulk because the continuous RoPE connection $\mathcal{U}(\mu\leftarrow\nu)$ preserves relative translation symmetry ($T(1)$ symmetry).  Any potential energy well carved into the interior dynamically shifts with semantic time $\mu$. Furthermore, the autoregressive causal mask imposes a strict topology of directed Partially Ordered Sets (Posets) upon the lattice.  The geometrically profound realization is that the compactified space $\calM^*\cong[0,1]$, viewed strictly as a 1-dimensional manifold with boundary, possesses a bipartite boundary strata: $\partial\calM^*=\{0,\infty\}$.  The point $\{\infty\}$ acts as the amnesic attractor.  Because the continuous RoPE connection preserves relative translation symmetry ($T(1)$) in the interior, the absolute causal origin $\partial\calM\equiv\{0\}$ is the uniquely distinguished topological fixed point capable of breaking symmetry to host a stable \emph{Topological Anchor Measure} and anchor the escaping probability mass.

We analytically derive the geometry of this defect.  To maintain algebraic equality without relying on mean-field approximations, we define $\calE_{\mathrm{bulk}}(\mu)$ as the scaled Cumulant-Generating Function (the Log-Mean-Exp) evaluated over the empirical counting measure of the bulk:
\begin{equation}
  \calE_{\mathrm{bulk}}(\mu)
    \equiv \frac{1}{\beta}\ln\!\left(
      \frac{1}{N(\mu)-1}\sum_{\nu>0}e^{\beta\,\calE(\mu,\nu)}
    \right).
  \label{eq:ebulk}
\end{equation}

Under this formal definition, the partition decomposition is an exact algebraic identity, preserving the validity of all subsequent topological bounds.  To stably anchor mass fraction $c$ against an expanding bulk of tokens, the Softmax partition function splits into a boundary atom and a continuous bulk: $\mathcal{Z}_\mu = e^{\beta \calE(\mu,0)} + (N(\mu)-1)\,e^{\beta \calE_{\mathrm{bulk}}}$, where $N(\mu)=\eta_\Lambda([0,\mu])$. Enforcing the macroscopic boundary fraction $c$ yields:
\begin{widetext}
\begin{equation}
  c = \frac{e^{\beta \calE(\mu,0)}}
    {e^{\beta \calE(\mu,0)} + (N(\mu)-1)\,e^{\beta \calE_{\mathrm{bulk}}}}
  \;\implies\;
  e^{\beta \calE(\mu,0)}
    = \frac{c}{1-c}\,(N(\mu)-1)\,e^{\beta \calE_{\mathrm{bulk}}}\,.
  \label{eq:sink-fraction}
\end{equation}
\end{widetext}

Taking the natural logarithm yields the depth of the required topological defect:
\begin{equation}
\boxed{%
  \calE(\mu,0)
    = \calE_{\mathrm{bulk}}
    + \frac{1}{\beta}\ln\!\big(N(\mu)-1\big)
    + \frac{1}{\beta}\ln\!\left(\frac{c}{1-c}\right).
}
  \label{eq:sink-depth}
\end{equation}

Because $\beta\propto 1/\sqrt{d}$, the topological boundary defect must scale logarithmically with context volume: $\calE(\mu,0)\sim\calO(\sqrt{d}\ln N(\mu))$. Under this required logarithmic energy scaling, the weak-$*$ limit of the measure on the compactified space resolves into a bipartite convex combination---an atomic mass stabilizing the field at the causal origin, and a directed condensation of the escaped bulk at infinity:
\begin{equation}
  d\omega_{\mathrm{lim}}
    = c\cdot\delta_0 + (1-c)\cdot\delta_\infty\,.
  \label{eq:bipartite-limit}
\end{equation}

This demonstrates that the Attention Sink is not an empirical training artifact, but an analytic measure-theoretic requirement governed by logarithmic scaling, necessary to prevent total condensation into the boundary spectrum.
\end{proof}

\begin{corollary}[The Thermodynamic Context Horizon and Defect Collapse]
\label{cor:context-horizon}
Because the radial embedding $\rho_\epsilon$ imposes a compact closure on the state space (\cref{thm:rmsnorm}), the Attention Sink possesses a finite thermodynamic capacity, yielding a strict algebraic upper bound on the maximum sequence length before thermodynamic amnesia is inevitable.  We emphasize that this bound is derived under the Asymptotic Spatial Ergodic Hypothesis (\cref{sec:softmax}), which assumes isotropic, maximum-entropy bulk tokens; for structured natural language, where anisotropic correlations dramatically reduce the effective bulk pressure, the bound becomes extremely conservative (see \cref{sec:exp-amnesia} for empirical quantification).
\end{corollary}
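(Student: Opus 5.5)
The plan is to combine the sink-depth identity \cref{eq:sink-depth} from \cref{thm:attn-sink} with the uniform bounds on the interaction energy that follow from \cref{thm:rmsnorm}. The corollary asserts a finite capacity. So the natural route is: (i) the energy available to the boundary atom $\calE(\mu,0)$ is bounded above by a constant fixed by the radial embedding; (ii) the energy it must reach grows like $\beta^{-1}\ln(N(\mu)-1)$; (iii) solving the resulting inequality for $N$ gives an explicit horizon $N_{\max}$.

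\textbf{Step 1 (capacity).} Since $\rho_\epsilon$ maps every section into $B_{\!\sqrt{d}}(0)$, and parallel transport satisfies $\|\mathcal{U}\|_{\mathrm{op}}=1$ (\cref{lem:rope}), the energy $\calE(\mu,\nu)=x^\top W_Q^\top\mathcal{U}W_K y$ obeys
\[
|\calE(\mu,\nu)|\le d\,\|W_Q\|_{\mathrm{op}}\|W_K\|_{\mathrm{op}}\equiv\calE_{\max}
\]
for all $\mu,\nu$, including $\nu=0$. This is the same submultiplicativity estimate used in \cref{rem:gauge-pol}. The compactness of the closure of the state space, which is Heine--Borel on the collapsed $L^\infty$ space, makes the supremum attained, so the sink has a finite energetic budget.

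\textbf{Step 2 (bulk pressure).} Next I need a lower bound on $\calE_{\mathrm{bulk}}(\mu)$ as defined in \cref{eq:ebulk}. Under the Asymptotic Spatial Ergodic Hypothesis, the bulk energies are approximately centred with variance $\mathbb{V}[\calE]=\gamma^2\|W_Q^\top\mathcal{U}W_K\|_F^2$, as in \cref{eq:total-var}. By Jensen's inequality applied to the Log-Mean-Exp,
\[
\calE_{\mathrm{bulk}}\ge\frac{1}{N-1}\sum_{\nu>0}\calE(\mu,\nu),
\]
which tends to $0$ in the ergodic limit. In the mean-field Gaussian approximation one gets the sharper value $\calE_{\mathrm{bulk}}\approx\tfrac{\beta}{2}\mathbb{V}[\calE]\ge0$. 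I will use only $\calE_{\mathrm{bulk}}\ge0$ so that the bound stays rigorous, and record the Gaussian correction as an optional tightening.

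\textbf{Step 3 (horizon).} Substitute into \cref{eq:sink-depth} and demand that a target fraction $c$ be attainable, that is $\calE(\mu,0)\le\calE_{\max}$. This gives
\[
\ln(N-1)\le\beta(\calE_{\max}-\calE_{\mathrm{bulk}})-\ln\frac{c}{1-c},
\]
hence
\[
N_{\max}=1+\frac{1-c}{c}\,\exp\!\big(\beta(\calE_{\max}-\calE_{\mathrm{bulk}})\big).
\]
With $\beta\propto1/\sqrt{d}$ the exponent scales as $\sqrt{d}\,\|W_Q\|_{\mathrm{op}}\|W_K\|_{\mathrm{op}}$. Beyond $N_{\max}$ the sink fraction satisfies $c(\mu)\to0$, since $e^{\beta\calE(\mu,0)}$ stays bounded while $(N-1)e^{\beta\calE_{\mathrm{bulk}}}$ diverges. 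Then \cref{eq:bipartite-limit} degenerates to $\delta_\infty$, which is exactly the amnesia condition \cref{eq:weak-star}; this is the ``defect collapse''.

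\textbf{Main obstacle.} Step 2 is where I expect trouble. The lower bound on $\calE_{\mathrm{bulk}}$ is only justified under the ergodic hypothesis, and the Gaussian tightening needs sub-Gaussian control of $\calE$. For that I would try Hoeffding's lemma on the bounded variable $\calE\in[-\calE_{\max},\calE_{\max}]$, which gives an upper bound on the Log-Mean-Exp but not a lower one. A lower bound therefore has to come from Jensen alone, which is why I state the result as a conservative horizon and defer its tightness for anisotropic language to the empirical section, as the corollary itself anticipates.
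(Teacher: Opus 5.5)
Your argument is correct and follows the same outline as the paper's proof. The paper also caps the boundary energy by the open-ball capacity, $\calE(\mu,0)<d\,\|W_Q^\top\mathcal{U}(\mu\leftarrow 0)\,W_K\|_{\mathrm{op}}\le d\,\|W_Q\|_{\mathrm{op}}\|W_K\|_{\mathrm{op}}$, using orthogonality of $\mathcal{U}$ and submultiplicativity. It then inserts this into the defect depth \cref{eq:sink-depth} with $\beta=\kappa/\sqrt{d}$ and solves for $N$.

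The difference is how you handle the bulk term. The paper does not bound $\calE_{\mathrm{bulk}}$; it approximates it by the second cumulant, $\calE_{\mathrm{bulk}}\approx\tfrac{\beta}{2}\sigma_{\calE}^2$ (\cref{eq:cgf-expansion}). It then replaces the ambient dimension by the stable-rank dimension $\deff$, both in $\beta$ and in the capacity ($\sqrt{d}\to\sqrt{\deff}$). This produces the boxed \cref{eq:nmax}, with the extra $-\tfrac{\kappa^2}{2\deff}\sigma_{\calE}^2$ in the exponent. That gives a much sharper, $\deff$-dependent horizon that can be compared with the noise-haystack data of \cref{tab:stable-rank}. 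However, neither step is an inequality: the cumulant formula is a CLT approximation, and shrinking the capacity to $\sqrt{\deff}$ does not follow from $\|\rho_\epsilon\|<\sqrt{d}$.

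Your Jensen route gives a genuine inequality that rests on \cref{thm:rmsnorm} alone. The price is an ambient-dimension bound of the astronomically loose kind the paper itself flags. It is also internally consistent: your exponent $\beta(\calE_{\max}-\calE_{\mathrm{bulk}})$ is automatically nonnegative, because a Log-Mean-Exp never exceeds $\max_{\nu}\calE(\mu,\nu)\le\calE_{\max}$. The paper's approximate exponent, by contrast, can turn negative, as in its $N_{\max}^{\mathrm{eff}}=1$ entries.

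One small correction: Jensen gives $\calE_{\mathrm{bulk}}\ge$ the \emph{empirical} mean of the bulk energies, which is only approximately zero. So your ``$\calE_{\mathrm{bulk}}\ge 0$'' is itself an ergodic approximation. For a hypothesis-free statement, use $\calE_{\mathrm{bulk}}\ge\min_{\nu}\calE(\mu,\nu)\ge-\calE_{\max}$. This gives $N\le 1+\tfrac{1-c}{c}\,e^{2\beta\calE_{\max}}$. It also makes your collapse step rigorous, $c\le e^{2\beta\calE_{\max}}/(N-1)\to 0$, which the paper only asserts.
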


\begin{proof}
To maintain stability, the required thermodynamic boundary energy must not exceed the geometric capacity of the fiber.  Because the radial embedding $\rho_\epsilon$ maps into the open bounded ball $B_{\!\sqrt{d}}(0)$, the boundary is excluded ($\|\Psi\|<\sqrt{d}$).  The interaction energy is bounded by the spectral supremum of the combined parallel-transported kernel:
\begin{equation}
  \calE(\mu,0)
    < d\,\big\|W_Q^\top\mathcal{U}(\mu\leftarrow 0)\,
      W_K\big\|_{\mathrm{op}}\,.
  \label{eq:energy-capacity}
\end{equation}

Injecting the thermal constant $\kappa$ (where $\beta=\kappa/\sqrt{d}$), we equate this with the required logarithmic defect depth:
\begin{widetext}
\begin{equation}
  \calE_{\mathrm{bulk}}
    + \frac{\sqrt{d}}{\kappa}\ln\!\left(
      (N(\mu)-1)\frac{c}{1-c}\right)
    < d\,\big\|W_Q^\top\mathcal{U}(\mu\leftarrow 0)\,
      W_K\big\|_{\mathrm{op}}\,.
  \label{eq:defect-capacity}
\end{equation}
\end{widetext}

Because the RoPE connection takes values in the orthogonal group $SO(d)$ (specifically the maximal torus), the connection is unitary.  Thus, the Path Monodromy (Parallel Transport Propagator) $\mathcal{U}$ is an orthogonal transformation ($\mathcal{U}\in SO(d)$), satisfying the required dual representation identity for 1-forms: $(\mathcal{U}^{-1})^T = \mathcal{U}$.  Its spectral operator norm is unitarily invariant and equal to 1 ($\|\mathcal{U}\|_{\mathrm{op}}\equiv 1$).

By invoking the submultiplicativity of operator norms ($\|ABC\|\le\|A\|\|B\|\|C\|$), we can rigorously factor the connection out of the inequality unconditionally for all causal depths $\mu$:
\begin{equation}
  \big\|W_Q^\top\mathcal{U}(\mu\leftarrow 0)\,W_K\big\|_{\mathrm{op}}
    \le \|W_Q\|_{\mathrm{op}}\,\|W_K\|_{\mathrm{op}}\,.
  \label{eq:submult}
\end{equation}

Because $\calE_{\mathrm{bulk}}$ is a continuous Log-Mean-Exp evaluated over the empirical counting measure of the bulk tokens, it is an intensive, dynamic, sequence-dependent functional.  In statistical mechanics, this is the scaled Cumulant-Generating Function (CGF) of the bulk distribution (the Annealed Free Energy).  By the Central Limit Theorem in high dimensions, the bulk energies approximate a Gaussian ensemble $\calN(0,\sigma_{\calE}^2)$.  The expectation of the exponential for sub-Gaussian thermal fluctuations expands via its cumulants:
\begin{equation}
  \calE_{\mathrm{bulk}}
    = \frac{1}{\beta}\ln\mathbb{E}\!\big[e^{\beta \calE}\big]
    \approx \mathbb{E}[\calE] + \frac{\beta}{2}\,\mathbb{V}[\calE]\,.
  \label{eq:cgf-expansion}
\end{equation}

Crucially, empirical weight matrices do not uniformly span the ambient head dimension $d_{\mathrm{head}}$ due to representation degeneration (anisotropy).  To analytically capture the active thermodynamic subspace \emph{a priori}, we define the effective dimension via the Stable Rank, $\deff = \sqrt{\mathrm{sr}(W_Q)\cdot\mathrm{sr}(W_K)}$, where $\mathrm{sr}(W) \equiv \|W\|_F^2 / \|W\|_{\mathrm{op}}^2$.  Because the Entropic Bulk Pressure is generated by the covariance of the projected features, the thermal variance is constrained to this active subspace: $\mathbb{V}[\calE]=\sigma_{\calE}^2\sim\Theta(\deff)$.  Given $\beta=\kappa/\sqrt{\deff}$, the bulk energy evaluates to:
\begin{equation}
  \calE_{\mathrm{bulk}}
    \approx \frac{\kappa}{2\sqrt{\deff}}\,\Theta(\deff)
    = \Theta(\sqrt{\deff})\,.
  \label{eq:bulk-pressure}
\end{equation}

This establishes that the bulk energy exerts a positive intensive \emph{Entropic Bulk Pressure} ($\Theta(\sqrt{\deff})$) that pushes against the boundary defect.  Because the projected representations are topologically confined to a low-dimensional active subspace governed by the Stable Rank, the effective geometric capacity of the bounding ball also shrinks from the ambient $\sqrt{d}$ to $\sqrt{\deff}$.  Subtracting this CGF value and exponentiating, and substituting this tightened capacity into the bound, yields the dimension-corrected topological capacity bound:
\begin{widetext}
\begin{equation}
\boxed{%
  N_{\max}
    < 1 + \left(\frac{1-c}{c}\right)
    \exp\!\left(\kappa\sqrt{\deff}\,
      \|W_Q\|_{\mathrm{op}}\|W_K\|_{\mathrm{op}}
      - \frac{\kappa^2}{2\deff}\sigma_{\calE}^2\right).
}
  \label{eq:nmax}
\end{equation}
\end{widetext}

This demonstrates that the maximum context length of an LLM collapses when the Attention Sink is overwhelmed by the innate thermal variance of the bulk space.  By constraining the derivation to the Stable Rank $\deff$ \emph{a priori}, we establish why models collapse at sequence lengths far shorter than ambient-dimension bounds would predict.  Beyond this exponential length limit, the geometric capacity of the boundary defect is saturated by the Entropic Bulk Pressure.  The Topological Anchor Measure dissolves, and the sequence of measures weak-converges to the amnesia state $\delta_\infty$.  The Attention Sink is therefore a finite-capacity metastable regulator.  This explains why aggressive Weight Decay (which suppresses these operator norms) collapses a model's long-context capabilities.
\end{proof}

\begin{corollary}[Topological Severance and The Sliding Window Horizon]
\label{cor:sliding-window}
A globally stable \emph{Topological Anchor Measure} (the Attention Sink) relies on an unbroken, globally reaching affine connection to route entropic bulk pressure back to the absolute causal origin ($\partial\calM$).  By imposing a sliding window, the architecture severs this topological connection in the intermediate fibers.  Once the sequence volume exceeds the maximum commutative radius of the local sliding metric, the full-attention layers are starved of the necessary topological gradient.  The boundary condition breaks, and the sequence of measures undergoes \emph{Thermodynamic Amnesia}.
\end{corollary}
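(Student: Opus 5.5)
The plan is to reduce the corollary to the capacity argument of \cref{thm:attn-sink} and \cref{cor:context-horizon}, with the sink's reachability made explicit. Under a sliding window of width $W$, a layer at observer $\mu$ has causal horizon $\Lambda\cap[\mu-W,\mu]$ instead of $\Lambda\cap[0,\mu]$. The first step is to note what happens once $\mu>W$: the boundary atom $\delta_0$ lies outside this support. The partition function of \cref{thm:softmax} is then a sum over the window only, and the boundary term $e^{\beta\calE(\mu,0)}$ in the decomposition leading to \cref{eq:sink-fraction} is simply absent. In those layers the boundary fraction is therefore $c=0$, whatever energy budget \cref{eq:energy-capacity} allows.

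The second step is to show that no interior coordinate of the window can take over the anchoring role. Here I would reuse the $T(1)$ argument from the proof of \cref{thm:attn-sink}. The RoPE connection $\mathcal{U}(\mu\leftarrow\nu)$ depends only on $\mu-\nu$ (\cref{lem:rope}), and the windowed horizon is itself translation-covariant. So any potential well placed at a fixed relative offset moves with $\mu$. Such a well cannot be a static, globally invariant anchor, and the windowed measures carry no distinguished fixed point.

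The third step is the composition of layers. Full-attention layers still see $\partial\calM$, but their anchoring energy $\calE(\mu,0)$ must be produced from the residual-stream sections $\Psi_z(0)$ and $\Psi_z(\mu)$. Those sections are passed through the intermediate windowed layers, and I would track how the discrete flow \cref{eq:euler-step} moves information through them. Since each windowed layer only mixes over distance $W$, after $L_w$ such layers the positional influence reaching $\mu$ is confined to $[\mu-L_wW,\mu]$; I would call $L_wW$ the ``maximum commutative radius''. When $N(\mu)>L_wW$, the routing signal that would let the observer learn the logarithmic defect depth of \cref{eq:sink-depth} is missing, because the $\ln N(\mu)$ term cannot be generated from quantities localized within that radius. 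Concretely, I would show that the gradient of the sink logit with respect to the windowed-layer parameters vanishes on the severed paths. This is the ``starved of topological gradient'' claim.

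The last step plugs the result into \cref{cor:context-horizon}. With an effective $c\to 0$, or with a $c$ that cannot be held up against the growing $\ln(N-1)$ pressure, the bipartite limit \cref{eq:bipartite-limit} degenerates to $\delta_\infty$, which is thermodynamic amnesia. I expect the third step to be the main obstacle. Full-attention layers can in principle attend to token $0$ directly, so severance only holds in the stronger sense that the required $\calO(\sqrt{d}\ln N)$ energy cannot be sustained. I would first try to formalize that sense under the Asymptotic Spatial Ergodic Hypothesis, treating the windowed-layer outputs as locally mixed isotropic variables whose contribution to $\calE(\mu,0)$ is bounded independently of $N$.
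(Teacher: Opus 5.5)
Your first, second and fourth steps are essentially the paper's proof. The paper argues that the defect depth in \cref{eq:sink-depth} requires the token at $\partial\calM=\{0\}$ to lie in every observer's horizon. It notes that a window of radius $r$ (your $W$) restricts the domain to $\Lambda\cap[\mu-r,\mu]$ and so excises the origin once $\mu>r$. It then invokes the same $T(1)$ argument as in \cref{thm:attn-sink} to conclude that no interior anchor survives, so the measure weak-$*$ converges to $\delta_\infty$.

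The paper stops there and never attempts anything like your third step. What it actually establishes is amnesia in the windowed layers themselves; the clause about full-attention layers being ``starved'' and the ``maximum commutative radius'' are left as interpretation. In those windowed layers the limit is in fact immediate, for two reasons. First, $\operatorname{supp}\omega_\mu\subset[\mu-r,\mu]$ eventually leaves every compact set, so $\omega_\mu\rightharpoonup\delta_\infty$ for any attention pattern. Second, the bulk count is capped at $r+1$, so there is no growing $\ln(N-1)$ pressure there at all.

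Your third step, which you rightly flag, would not close as sketched. It is also the only place where the statement's stronger claim could be proved.
\begin{itemize}
\item The vanishing of gradients ``on severed paths'' is vacuous. A global layer computes $\calE(\mu,0)$ from $\Psi_z(0)$ and $\Psi_z(\mu)$ through a direct path that the window never cuts.
\item Confinement of influence to $[\mu-L_w r,\mu]$ holds only below the first global layer. In interleaved local/global stacks, everything above that layer is globally mixed.
\item Most importantly, an $N$-independent bound on $\calE(\mu,0)$ does not single out windowing. Such a bound is already \cref{eq:energy-capacity}, which holds for unwindowed attention and yields only the generic $N_{\max}$ of \cref{cor:context-horizon}.
\end{itemize}
A window-specific result would require showing that windowed lower layers reduce the attainable sink logit, or the sink mass $c$, relative to an unwindowed stack at the same $N$. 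Neither you nor the paper supplies such an argument. Without step three your proof matches the paper exactly; with it, you carry an unproven extra claim.
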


\begin{proof}
The logarithmic energy defect of \cref{eq:sink-depth} requires the boundary token at $\partial\calM\equiv\{0\}$ to participate in the causal horizon of every observer $\mu$.  A sliding window of radius $r$ restricts the integration domain to $\Lambda\cap[\mu-r,\mu]$, severing access to the origin once $\mu > r$.  With the topological fixed point excised from the causal horizon, no translation-invariant anchor survives in the interior (by the same argument as in \cref{thm:attn-sink}), and the measure weak-$*$ converges to $\delta_\infty$.
\end{proof}

\section{Dynamics of the Matter Field}
\label{sec:dynamics}

We formulate the forward inference pass as the depth-parameterized evolution of a continuous, nonlinear section on an internal unitary gauge bundle, governed by a non-local integro-differential flow.

\subsection{Algorithmic Depth and the Unitary Gauge Section}
\label{sec:proper-time}

\begin{definition}[Semantic Algorithmic Depth \& The Gauge Field]
\label{def:proper-time}
We analytically continue discrete layer depth $l$ into a continuous depth parameter $z\in\R^+$.  The sequence of embeddings becomes a continuous time-parameterized family of sections $\Psi(z,\mu)\in\Gamma(E)$.  Because the bundle is natively trivial over the 1D ray, the architecture explicitly dictates a flat connection.  The base matter bundle $E$ retains its real, globally trivial $GL(d,\R)$ structure.  The structural elevation occurs strictly via the Query/Key bundle morphisms, which pull the field into a dedicated interaction sub-bundle.  To formulate this natively, we define the semantic bundle as a strict Whitney Sum:
\begin{equation}
  E = E_{\mathrm{int}} \oplus E_{\mathrm{matter}}\,.
  \label{eq:whitney}
\end{equation}

The architecture kinematically equips the interaction sub-bundle ($E_{\mathrm{int}}$) with a Covariantly Constant Almost Complex Structure ($J\in\Gamma(\mathrm{End}(E_{\mathrm{int}}))$, where $J^2=-\mathrm{Id}$), upgrading it to a Hermitian Vector Bundle.  Rotary Position Embedding (RoPE) is exactly the unique flat unitary connection $\nabla^{\mathrm{RoPE}}$ that preserves this structure ($\nabla^{\mathrm{RoPE}}J=0$).

Because the base manifold $\calM\cong[0,\infty)$ is contractible, the bundle is globally trivializable.  To speak of ``non-trivial path monodromy over loops'' on a 1D ray is topologically vacuous, as all 2-forms unconditionally vanish ($\Omega^2(\calM)\equiv 0$).  Therefore, RoPE is not resolving topological tension via a principal bundle reduction, but explicitly maintaining a flat holomorphic flow.  This beautifully explains why the Attention mechanism operates uniquely as a holomorphic flow (preserving $J$), while the real-valued Feed-Forward Network (FFN)---which operates natively on the matter sub-bundle $E_{\mathrm{matter}}$---acts as an anti-holomorphic symmetry breaker, formally shattering the unitary gauge to inject real spatial entropy.
\end{definition}

\subsection{The Volterra--Hodge Integro-Differential Flow}
\label{sec:volterra}

\begin{lemma}[Attention as a Non-Local Urysohn--Volterra Operator]
\label{lem:volterra}
Because it fundamentally lacks a local spatial Laplacian ($\Delta_g$) to mediate adjacent point-to-point diffusion, causal Attention acts mathematically as a \emph{Non-Local Covariant Transport Operator}. Crucially, to optimize computational thermodynamics, empirical Transformer architectures enforce a strict \emph{Bi-Connection Structure} upon the bundle:
\begin{itemize}[nosep]
  \item The non-trivial Cartan-subalgebra connection
    ($\nabla^{\mathrm{RoPE}}$) governs the metric interaction energy via
    its Parallel Transport Propagator to evaluate the transition measure
    $w^*$.
  \item A globally flat, trivial connection
    ($\nabla^{\mathrm{Triv}}\equiv d$), mathematically permitted by the
    trivializability of the bundle, is reserved strictly for the physical
    parallel transport of the Value sections
    ($\mathcal{U}_{\mathrm{Triv}}\equiv I_d$).
\end{itemize}
Under this kinematic trivial connection, the flow evaluates exactly as:
\begin{widetext}
\begin{equation}
  \mathcal{T}[\Psi]^a(z,\mu)
    = (W_O)^a_{\;b}
    \int_{\Lambda\cap[0,\mu]}
      w^*_{\mathrm{RoPE}}(\mu,\nu)\,
      \big[(W_V)^b_{\;c}\,\rho_\epsilon(\Psi(z,\nu))^c\big]\,
      d\eta_\Lambda(\nu)\,.
  \label{eq:volterra}
\end{equation}
\end{widetext}

Because the domain of integration is causally bounded by the observer coordinate $\mu$, and because the Softmax transition measure $w^*$ depends nonlinearly on the state $\Psi$ itself, this defines a continuous nonlinear \emph{Urysohn--Volterra Integral Operator}.  By evaluating strictly via the Lebesgue--Stieltjes integral over the empirical measure $d\eta_\Lambda$, it maintains the topological validity of the discrete sequence, bypassing continuous local paths to advect historical phase-space geometry directly into the local observer frame.
\end{lemma}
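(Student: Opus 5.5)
The lemma packages three claims: the attention output has the explicit form of \cref{eq:volterra}; the integration domain is causally bounded by the observer coordinate $\mu$, which is the Volterra property; and the kernel depends nonlinearly on $\Psi$, which is the Urysohn property. I will verify each in turn by unwinding definitions already in place, and then check that the resulting operator is well defined on $\calF = L^\infty(\calM,E;\eta_\Lambda)$.

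\emph{Assembling the operator.} Start from the discrete causal attention output. At position $\mu$ it is $W_O\sum_{\nu\le\mu} w_{\mu\nu} W_V\,\rho_\epsilon(\Psi(\nu))$. Because $\eta_\Lambda$ is the counting measure of \cref{ax:base-manifold}, the Lebesgue--Stieltjes integral of any $\eta_\Lambda$-integrable function over $\Lambda\cap[0,\mu]$ equals the finite sum over lattice points $\nu\le\mu$. This gives \cref{eq:volterra} as an exact identity, with no limiting procedure involved.

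The weights $w_{\mu\nu}$ are the Gibbs density $w^*$ of \cref{thm:softmax}, with energy $\calE(\mu,\nu)=\langle q(\mu),\mathcal{U}(\mu\leftarrow\nu)k(\nu)\rangle$. The phase rotation sits in this energy, and $\mathcal{U}$ is the parallel transport propagator of \cref{lem:rope}; this is the content of the subscript in $w^*_{\mathrm{RoPE}}$. The value vector $W_V\rho_\epsilon(\Psi(\nu))$ enters the sum without any rotation. Since $E\cong\calM\times\R^d$ is globally trivial (\cref{def:fiber-bundle}), transporting it with the trivial connection $d$ gives $\mathcal{U}_{\mathrm{Triv}}=I_d$, so the bi-connection description is a faithful reading of the architecture and not an extra assumption.

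\emph{Volterra property.} The causal mask sets $w^*(\mu,\nu)=0$ for $\nu>\mu$. Equivalently, the partition function $\mathcal{Z}_\mu$ and the constraint \cref{eq:mass-conservation} are both taken over $\Lambda\cap[0,\mu]$. Hence the kernel is supported on $\{\nu\le\mu\}$, and the variable upper limit $\mu$ is exactly what makes the operator of Volterra type.

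\emph{Urysohn property.} The kernel $w^*(\mu,\nu)=\exp(\beta\calE(\mu,\nu))/\mathcal{Z}_\mu$ depends on $\rho_\epsilon(\Psi(\mu))$ through the query and on $\rho_\epsilon(\Psi(\nu))$ through the key. Its normalization $\mathcal{Z}_\mu$ depends on $\Psi$ over the entire horizon. So the integrand is a nonlinear function $K(\mu,\nu,\Psi(\mu),\Psi(\nu);\Psi|_{[0,\mu]})$, which places the operator in the Urysohn class. To be honest about the classification: because $\mathcal{Z}_\mu$ is itself an integral functional of $\Psi$, $\mathcal{T}$ is strictly a ratio of two Urysohn--Volterra integrals, namely $\int e^{\beta\calE}W_V\rho_\epsilon\,d\eta_\Lambda$ divided by $\int e^{\beta\calE}\,d\eta_\Lambda$, and I would state it in that form.

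\emph{Well-posedness, the main obstacle.} The step that needs real care is showing that $\mathcal{T}$ maps $\calF$ into itself with bounds independent of $N$. For this I would invoke \cref{thm:rmsnorm}.
\begin{itemize}[nosep]
  \item Because $\|\rho_\epsilon\|<\sqrt d$, every value vector satisfies $\|W_V\rho_\epsilon\|\le\sqrt d\,\|W_V\|_{\mathrm{op}}$.
  \item Because $w^*(\mu,\cdot)\,d\eta_\Lambda$ is a probability measure, the integral is a convex combination of these value vectors.
  \item Therefore $\|\mathcal{T}[\Psi](\mu)\|\le\sqrt d\,\|W_O\|_{\mathrm{op}}\|W_V\|_{\mathrm{op}}$ uniformly in $\mu$ and $N$.
  \item Lipschitz continuity in $\Psi$ then follows from \cref{eq:lipschitz-bound}, multiplied by $\|W_O\|_{\mathrm{op}}$.
\end{itemize}

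The remaining phrase, "bypassing local paths," needs only the observation that the kernel couples $\mu$ to every $\nu\le\mu$ directly, whatever the value of $|\mu-\nu|$. This is the reason no Laplacian $\Delta_g$ or other local differential operator appears in the flow.
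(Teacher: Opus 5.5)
Your proposal is correct and takes essentially the same route as the paper, which gives no separate proof but justifies the lemma inline: the counting-measure Lebesgue--Stieltjes integral reproduces the causal Softmax sum exactly, the variable upper limit $\mu$ gives the Volterra structure, and the $\Psi$-dependence of $w^*$ gives the Urysohn structure. Your additions are consistent with the paper: the uniform and Lipschitz bounds are what the paper defers to \cref{thm:rmsnorm} in the proof of \cref{thm:ide}, and your remark that the normalization $\mathcal{Z}_\mu$ makes $\mathcal{T}$ strictly a ratio of two Urysohn--Volterra integrals is a more precise classification than the paper's label, which uses ``Urysohn'' in the looser sense of a $\Psi$-dependent kernel.
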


\begin{lemma}[The FFN as a Hodge--Morrey--Friedrichs Decomposed Flow]
\label{lem:ffn-hodge}
To counteract entropic oversmoothing driven by the macroscopic transport integral, the Feed-Forward Network (FFN) acts as a localized nonlinear reaction vector field evaluated purely on the isolated local fiber, $\calR:\calF_\mu\to\calF_\mu$.
\end{lemma}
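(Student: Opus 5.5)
The plan is to establish the three components of the statement in turn: locality, the reaction-field structure, and the Hodge--Morrey--Friedrichs decomposition with its anti-oversmoothing role. The argument mirrors the structure already used for \cref{thm:rmsnorm} and \cref{lem:volterra}.

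\textbf{Locality.} First I will write the FFN explicitly in the bundle language of \cref{def:proper-time}:
\begin{equation*}
  \calR(\Psi)(\mu) = W_2\,\sigma\!\big(W_1\,\rho_\epsilon(\Psi(\mu))\big),
\end{equation*}
where $W_1,W_2$ are constant global bundle morphisms and $\sigma$ is the pointwise activation. Because $W_1$, $W_2$ and $\rho_\epsilon$ act fiber-wise, $\calR$ is a Nemytskii operator, exactly as $D\rho_\epsilon$ was lifted in \cref{rem:continuum}. Its value at $\mu$ then depends only on $\Psi(\mu)$, which gives $\calR:\calF_\mu\to\calF_\mu$. In contrast with the Urysohn--Volterra operator $\mathcal{T}$ of \cref{eq:volterra}, its Schwartz kernel is supported on the diagonal, $\delta_{\mu\nu}$ against $d\eta_\Lambda$.

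\textbf{Reaction-field structure.} I will then use the compact closure from \cref{thm:rmsnorm} and the Extreme Value Theorem to show that $\calR$ is $C^1$ with a uniformly bounded Jacobian. This makes it a well-posed reaction term that can be added to the flow without spoiling the Picard--Lindel\"of argument.

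\textbf{Decomposition.} Next I will decompose $\calR$ on each fiber, regarded as a vector field on the ball $B_{\!\sqrt{d}}(0)$, a manifold with boundary. The Hodge--Morrey--Friedrichs theorem on this ball splits the dual 1-form $\calR^\flat$ into three $L^2$-orthogonal pieces:
\begin{itemize}[nosep]
  \item an exact gradient part $d\phi$ with Dirichlet boundary condition;
  \item a coexact part $\delta\beta$ with Neumann boundary condition;
  \item a harmonic field, which vanishes because the ball is contractible.
\end{itemize}
The contractibility step is the same topological input used for $\calM$ in \cref{lem:rope}. The gradient part yields a local potential landscape, and the coexact part yields the rotational, non-conservative component in $\mathfrak{so}(d)$ noted in \cref{rem:gauge-pol}.

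\textbf{Anti-oversmoothing (main obstacle).} Finally I must show the FFN counteracts oversmoothing. From \cref{eq:volterra}, $\mathcal{T}$ is a convex average, a Markov kernel by \cref{ax:markov}, so it contracts the spread $\max_{\mu,\nu}\|\Psi(\mu)-\Psi(\nu)\|$. I will argue that a diagonal, nonlinear, non-affine $\calR$ is not constrained by this contraction: its Jacobian varies from point to point, so it can push apart nearby fiber values, whereas any averaging operator can only shrink their differences.

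This is the step I expect to be hardest. A generic ReLU FFN need not expand anything, so the claim has to be weakened to a capacity statement: the set of reachable $\calR$ contains fields with positive local divergence $\nabla\!\cdot\!\calR>0$ along directions that $\mathcal{T}$ flattens. Its ReLU nonsmoothness also needs mollifying before the $C^1$ and Hodge arguments apply. I would handle the first point via a universal-approximation argument on the compact closure, and the second by smoothing the activation.
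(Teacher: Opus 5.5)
Your locality step, writing $\calR$ as a Nemytskii operator $\Psi(\mu)\mapsto W_2\,\sigma(W_1\rho_\epsilon(\Psi(\mu)))$ with diagonal kernel, is sound. It is also more explicit than the paper, which simply takes fiber-locality as given.

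The genuine error is in your decomposition step. You assert that the harmonic summand of the Hodge--Morrey--Friedrichs splitting vanishes because the ball is contractible. On a manifold with boundary, the harmonic fields $\{\omega : d\omega=0,\ \delta\omega=0\}$ carry no boundary condition and form an infinite-dimensional space. Contractibility only kills the Neumann subspace, which is isomorphic to $H^1_{dR}$ of the ball. The exact-harmonic part $\{dh : \Delta h = 0\}$ survives.

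The simplest counterexample is a constant field $b$. Its dual $b^\flat = d(b\cdot y)$ is closed and co-closed. By Green's formula it is $L^2$-orthogonal both to every $d\phi$ with $\phi$ vanishing on the boundary sphere and to every $\delta\beta$ with $\beta$ satisfying the Neumann condition. So it lies entirely in the harmonic summand and cannot be written as your gradient-plus-coexact sum.

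This is exactly the term the paper keeps. It notes that homogeneous Neumann data would force $\calH\equiv 0$. It then argues that the FFN's boundary flux is non-homogeneous ($\iota_{\vec n}\calH\neq 0$), and identifies the surviving harmonic component with the bias vector $b$ in \cref{eq:hodge-decomp}. Your decomposition therefore discards a generically nonzero component of $\calR^\flat$, and it mis-decomposes any FFN that carries a bias.

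You also omit the computation on which the paper's proof actually rests. The decomposition theorem guarantees only that a coexact piece exists, not that it is nonzero. Nonzero-ness is what calling the FFN non-conservative requires, and \cref{rem:gauge-pol}, which concerns the attention bivector, does not supply it.

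The paper gets it from the Jacobian. The residual branch computes $\nabla U\circ\rho_\epsilon$ rather than $(D\rho_\epsilon)^\top(\nabla U\circ\rho_\epsilon)$, so $D\calR = K(\Psi)\,J_\rho$ with $J_\rho$ symmetric (\cref{eq:jrho}). Splitting $K = S + W_A$ gives the vorticity $\Omega = (D\calR)^\top - D\calR = -[S,J_\rho]-\{W_A,J_\rho\}\in\mathfrak{so}(d)$. This vanishes only if $K(\Psi)J_\rho$ happens to be symmetric, which is generically false once the weights are untied or $S$ fails to commute with $J_\rho$.

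Finally, the anti-oversmoothing step you flag as the main obstacle is not proved in the paper at all; there it serves only as motivation. If you keep it, note that $\mathcal{T}$ averages $W_OW_V\rho_\epsilon(\Psi(\nu))$ rather than $\Psi(\nu)$. So it does not literally contract $\max_{\mu,\nu}\|\Psi(\mu)-\Psi(\nu)\|$.
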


\begin{proof}
By \cref{thm:rmsnorm}, the radial embedding $\rho_\epsilon$ confines the input sections to the open bounded ball $B_{\!\sqrt{d}}(0)$. The physics of the network never evaluates at spatial infinity.  Thus, we restrict the Integro-Differential Equation (IDE) domain to the compact closure: $\mathcal{X} = \overline{B_{\!\sqrt{d}}(0)}$. Because the classic Hodge isomorphism applies unconditionally only to closed manifolds, analyzing the flow on this bounded manifold requires explicitly formulating the continuous geometric boundary conditions to properly isolate the harmonic vector field $\calH^a$, accurately deriving the global Bias Vector.

Because the computational residual flow mathematically bypasses multiplication by the adjoint Jacobian $(D\rho_\epsilon(\Psi))^T$, the conservative structural integrity of the gradient is broken upon pull-back.  To prove this algebraically, let the exact, irrotational component of the FFN on the normalized ball be defined as a strictly conservative gradient field: $\tilde{\calR}_{\mathrm{exact}}(y) = \nabla_y U(y)$ for some scalar potential $U$.  A true conservative gradient field is natively a 1-form $dU$.  If the base network on the normalized ball were driven by a true scalar potential $U$, the physical system pulled back to the ambient fiber is governed strictly by the pullback of its differential 1-form: $\rho_\epsilon^*(dU) = d(U\circ\rho_\epsilon)$. To convert this exact pulled-back 1-form into an ambient restoring vector field, we must apply the metric musical isomorphism (sharp).  In a Euclidean frame, this algebraically mandates the adjoint Jacobian: $(\rho_\epsilon^*\,dU)^\sharp = (D\rho_\epsilon)^T\nabla U$. By bypassing the adjoint, the architecture simply computes $\calR = \nabla U\circ\rho_\epsilon$.  The true spatial Jacobian of this composed flow is $D\calR = (\operatorname{Hess} U)\cdot D\rho_\epsilon$. For a vector field to be conservative (irrotational), its Jacobian must be symmetric.  The product of two symmetric matrices is symmetric if and only if they commute.  Because a dense feature Hessian generally does not commute with the radial projection Jacobian, the structural integrity is broken, formally injecting vorticity.

Let $J_\rho$ be the spatial Jacobian of the radial embedding. Differentiating $\rho_\epsilon(\Psi)$ yields
\begin{equation}
  J_\rho = \frac{\sqrt{d}}{(\|\Psi\|^2+\epsilon)^{1/2}}\,I_d
    - \frac{\sqrt{d}\,\Psi\Psi^\top}
           {(\|\Psi\|^2+\epsilon)^{3/2}}\,.
  \label{eq:jrho}
\end{equation}
Because both the identity matrix $I_d$ and the outer product $\Psi\Psi^\top$ are manifestly symmetric, the radial Jacobian is a strictly symmetric operator ($J_\rho = J_\rho^\top$).

To calculate vorticity, we invoke the globally flat Euclidean bundle metric $g_{ab}=\delta_{ab}$ to execute the musical isomorphism ($\flat$), lowering the index of the field into its dual 1-form.  Under this Cartesian trivialization, the abstract metric pull-down commutes with the matrix transpose, formally equating the exterior derivative with the matrix antisymmetrization:
\begin{equation}
  \Omega = (D\calR)^T - D\calR\,.
  \label{eq:vorticity-def}
\end{equation}

If the base network were a true exact gradient field ($\tilde{\calR}=\nabla U$), its Jacobian $S\equiv\operatorname{Hess} U$ would be strictly symmetric, yielding $D\calR = S\,J_\rho$.  However, modern architectures utilize untied parameters ($W_{\mathrm{out}}\neq W_{\mathrm{in}}^\top$).  We define the state-dependent base Jacobian endomorphism $K(\Psi) = W_{\mathrm{out}}\,\Sigma'(\rho_\epsilon(\Psi))\, W_{\mathrm{in}}$. Consequently, its symmetric and antisymmetric components vary nonlinearly across the local fiber: $S(\Psi)$ and $W_A(\Psi)$.  Let the true Jacobian be $D\calR = K(\Psi)\,J_\rho$.  Because $J_\rho$ is strictly symmetric, the exact geometric vorticity 2-form expands as:
\begin{equation}
  \Omega = (D\calR)^T - D\calR
    = J_\rho\,K(\Psi)^T - K(\Psi)\,J_\rho\,.
  \label{eq:vorticity-expand}
\end{equation}

By uniquely decomposing this true asymmetric, state-dependent Jacobian into its symmetric and antisymmetric components ($K(\Psi) = S(\Psi) + W_A(\Psi)$), the exact geometric vorticity rigorously expands as:
\begin{equation}
  \Omega(\Psi) = -[S(\Psi),\,J_\rho]
    - \{W_A(\Psi),\,J_\rho\}\,.
  \label{eq:vorticity-decomp}
\end{equation}

This demonstrates that the FFN injects rotational flow via exactly two distinct mechanisms:
\begin{enumerate}[nosep]
  \item \textbf{The Commutator $-[S(\Psi),J_\rho]$:} Vorticity generated
    natively by the curvature of the radial pullback interacting with the
    symmetric weights.
  \item \textbf{The Anticommutator $-\{W_A(\Psi),J_\rho\}$:} Vorticity
    intrinsically injected by the asymmetric, untied architectural
    parameters.
\end{enumerate}

To verify that $\Omega$ is a valid geometric 2-form, it must reside in $\mathfrak{so}(d)$ (it must be antisymmetric). We invoke the algebraic parity of the Lie algebra $\mathfrak{gl}(d)=\mathrm{Sym}(d)\oplus\mathfrak{so}(d)$. The commutator of two symmetric matrices ($S,J_\rho\in\mathrm{Sym}(d)$) is antisymmetric ($\in\mathfrak{so}(d)$).  The anticommutator of an antisymmetric matrix and a symmetric matrix ($W_A\in\mathfrak{so}(d),\;J_\rho\in\mathrm{Sym}(d)$) is also antisymmetric ($\in\mathfrak{so}(d)$).  Thus, the FFN injects valid exterior 2-forms.

This dynamically varying nonlinear 2-form explicitly injects geometric vorticity into the ambient matter field, classifying the FFN as a strictly non-conservative flow on $\Gamma(E)$ that structurally prevents the semantic space from ever collapsing into a static, globally irrotational canonical frame.  Because the composed vector field $\calR(\Psi)$ is evaluated securely on the compact closure $\mathcal{X}=\overline{B_{\!\sqrt{d}}(0)}$, we apply the canonical extension for bounded manifolds: the \emph{Hodge--Morrey--Friedrichs} vector calculus decomposition.  Using vertical differential operators acting on the fiber coordinates ($\nabla_\calF\equiv\partial/\partial\Psi$, distinct from the base connection $\nabla$), the vector field transforms into exactly three components:
\begin{enumerate}[nosep]
  \item \textbf{An irrotational (conservative) restoring flow}
    ($g^{ab}\partial\calV/\partial\Psi^b$): The vertical gradient of a
    non-convex scalar potential.  Because coordinate-wise activations are
    bound to a fixed Cartesian frame, they fail to be gauge equivariant.
    They explicitly break equivariance under the $U(1)^{d/2}$ gauge
    action, leading to \emph{Explicit Gauge Symmetry Breaking}.
    Furthermore, because its Laplacian
    $\Delta_\calF\calV\neq 0$, this exact component uniquely governs the
    expansion and contraction of local phase-space volume.
  \item \textbf{A solenoidal (non-conservative) generalized rotational
    flow} ($\partial_b\mathcal{A}^{ab}$): Evaluated as the vertical tensor
    divergence of an antisymmetric gauge potential $\mathcal{A}^{ab}$.
    The solenoidal vector field is the metric dual of the codifferential:
    $v=(\delta\mathcal{A})^\sharp$.  On this domain, the Hodge Laplacian
    strictly decouples into the scalar Bochner Laplacian
    ($\Delta_H\equiv\Delta_B$) due to the unconditional flatness of the
    isolated local fiber $\calF_\mu$
    ($\mathrm{Riem}\equiv 0$).  The closed gauge emerges automatically as
    an unavoidable cohomological identity.  Because the vorticity is exact
    ($\Omega = d\calR^\flat$), the solenoidal gauge potential is defined via
    the inverse Hodge Laplacian.  Evaluating the metric divergence of this
    flow evaluates directly to the codifferential of its dual 1-form:
    $\nabla\!\cdot v = -\delta(v^\flat) = -\delta(\delta\mathcal{A})
    \equiv -\delta^2\mathcal{A} = 0$.  The flow is volume-preserving
    strictly because of the fundamental nilpotency of the codifferential
    ($\delta^2\equiv 0$), naturally collapsing the divergence and driving
    mixing across feature channels without requiring an artificial gauge
    mandate.
    In standard linear algebra, this solenoidal component corresponds
    to the antisymmetric part of the Jacobian $\frac{1}{2}(D\calR - D\calR^\top)$,
    whose complex eigenvalue pairs induce rotational mixing across
    feature channels---the spectral mechanism underlying the stability
    analysis of \cref{sec:exp-resonance}.
  \item \textbf{A residual harmonic constant} ($\calH^a$): The harmonic
    vector field ($\Delta_H\calH = 0$).  Because the interior De~Rham
    cohomology of the contractible ball is trivial
    ($H^k_{dR}(\mathcal{X})=0$), enforcing a standard homogeneous Neumann
    boundary condition ($\iota_{\vec{n}}\calH=0$) at the radial sphere
    $\partial\mathcal{X}$ would mathematically force the harmonic field to
    vanish entirely ($\calH^a\equiv 0$).  However, the nonlinear
    activation functions of the FFN generate a strictly positive,
    asymmetric net flux across the radial boundary sphere
    $\partial\mathcal{X}$, meaning
    $\iota_{\vec{n}}\calH\neq 0$.  Because the ball is contractible,
    $\calH$ is trivially exact as an affine Euclidean translation
    ($\calH = \nabla(b \cdot x)$ for a constant $b\in\R^d$); it does
    not represent a non-trivial cohomology class.  We note that the
    gradient of any harmonic scalar potential (including higher-order
    spherical harmonics, e.g., $\nabla(x^2-y^2)$) also satisfies
    the divergence-free, curl-free harmonic conditions.  However, the
    architecture explicitly parameterizes the required boundary flux
    via the \emph{degree-1 affine constant translation basis} of the
    non-homogeneous harmonic cohomology: the Bias vector $b$
    satisfying the \emph{Non-Homogeneous Neumann Boundary Conditions}
    that emerge as a mathematical consequence of the asymmetric
    activation functions.  Higher-order harmonic deformations are
    mathematically valid but are structurally absorbed and
    reparameterized by the FFN's bulk nonlinear activation landscape.
    When this algebraic affine translation is
    subsequently projected onto the compactified state space by the
    downstream RMSNorm, it mathematically isolates as the Neumann
    Harmonic Generator ($\calH^a$) of the relative cohomology,
    guaranteeing a non-vanishing spatial drift.
    In standard linear algebra, this component corresponds exactly to
    the addition of the network's affine Bias vector $b \in \R^d$.
    While computationally trivial, the formal Hodge decomposition
    provides the analytical isolation of the curl (vorticity) and
    irrotational components that drive the stability analysis of
    \cref{sec:exp-resonance}.
\end{enumerate}
The complete decomposition reads:
\begin{equation}
  \calR[\Psi]^a(z,\mu)
    = g^{ab}\frac{\partial\calV}{\partial\Psi^b}
    + \partial_b\mathcal{A}^{ab}
    + \calH^a\,.
  \label{eq:hodge-decomp}
\end{equation}
\end{proof}

\begin{theorem}[The Semantic Evolution IDE]
\label{thm:ide}
Because there are no local spatial differential operators acting on the base manifold, the macroscopic forward evolution is naturally formulated as a non-autonomous Integro-Differential Equation (IDE) acting on the infinite-dimensional Banach space of sections $\Gamma(E)$, bypassing the spatial restrictions of a classical PDE:
\begin{equation}
\boxed{%
  \frac{\partial\Psi^a(z,\mu)}{\partial z}
    = \mathcal{T}[\Psi]^a(z,\mu)
    + \calR[\Psi]^a(z,\mu)\,.
}
  \label{eq:ide}
\end{equation}
\end{theorem}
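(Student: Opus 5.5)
The plan is to obtain \cref{eq:ide} as the continuous-depth limit of the discrete residual update, assembling operators the paper has already constructed. I start from the Forward Euler form \cref{eq:euler-step} of \cref{rem:continuum}. For one Pre-Norm block I split the discrete increment $\Psi_{z+1}-\Psi_z$ into an attention sub-step and an FFN sub-step. The attention sub-step is, by \cref{lem:volterra}, the Urysohn--Volterra operator $\mathcal{T}$ of \cref{eq:volterra}. Its weights $w^*$ are the Gibbs measure of \cref{thm:softmax}, transported by RoPE in the energy only (\cref{lem:rope}) and integrated against $d\eta_\Lambda$ over $\Lambda\cap[0,\mu]$. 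The FFN sub-step is the fiber-local field $\calR$ of \cref{lem:ffn-hodge}, written in its decomposed form \cref{eq:hodge-decomp}. Dividing by $\Delta z=1$ and interpreting $\Psi$ as the curve in $\calF$ with tangent vector $\dot\Psi(z)\in T_{\Psi(z)}\calF$ gives the right-hand side $\mathcal{T}+\calR$.

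I also need to check that the flow has no local spatial differential operator on $\calM$. $\mathcal{T}$ acts only through an integral against the atomic measure $\eta_\Lambda$, and $\calR$ acts pointwise in $\mu$. Hence no $\partial_\mu$ appears, and the equation is an IDE rather than a PDE. The explicit $z$-dependence comes from the spline parameter curve $\theta(z)$ of \cref{rem:continuum}, which makes the flow non-autonomous.

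Next comes well-posedness, so that the boxed equation defines a unique flow whose Euler discretization is the network. By \cref{thm:rmsnorm}, the attention part has the global Lipschitz bound \cref{eq:lipschitz-bound}, uniform in $N$. For $\calR$ I argue as in that proof. Its inputs lie in the compact closure $\overline{B_{\sqrt d}(0)}$, and it is $C^1$ there, so by the Extreme Value Theorem $\|D\calR\|$ is bounded uniformly. Composing with $\|D\rho_\epsilon\|_{\mathrm{op}}\le\sqrt{d/\epsilon}$ then gives a global Lipschitz constant for the FFN part. The sum of the two fields is therefore globally Lipschitz on the finite-dimensional space $\calF\cong\R^{N\times d}$. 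It is also continuous in $z$, because the weights are $C^2$ splines. Picard--Lindel\"of then gives a unique global solution, and \cref{eq:truncation-bound} controls the discrepancy with the discrete network.

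The main obstacle is the ordering of the two sub-steps. In the actual block, the FFN is evaluated at $\Psi_z+\mathcal{T}[\Psi_z]$, not at $\Psi_z$. The true discrete map is therefore a Lie--Trotter composition, not a single Euler step of the summed field. I would first Taylor-expand $\calR(\Psi+\mathcal{T}[\Psi])$ using the uniform Jacobian bound. This shows that the splitting error is $\calO(\Delta z^2)$ per step, controlled by $\|D\calR\|\,\|\mathcal{T}\|$. The error can be absorbed into $\mathfrak{R}_z$, so it vanishes in the continuum limit and the summed IDE is the consistent modified equation. The leading error term is governed by the commutator of the two flows, and I would flag it as the source of the operator-splitting torsion rather than attempt to remove it.
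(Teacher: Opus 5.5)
Your proposal follows the paper's own argument: the IDE holds by construction as the superposition of the Urysohn--Volterra transport of \cref{lem:volterra} and the fiber-local Hodge reaction field of \cref{lem:ffn-hodge}, and well-posedness comes from the global Lipschitz bound of \cref{thm:rmsnorm}, which already covers the FFN part through the compactness and Extreme Value Theorem argument you reproduce. Your treatment of the sub-step ordering is not needed here (the paper handles it in \cref{thm:lie-trotter}), and it needs two corrections: the $\calO(\Delta z^2)$ splitting error ``vanishes'' only for a hypothetical family with $\Delta z\to 0$, not at the architecture's fixed $\Delta z=1$ (see \cref{rem:bch-stiffness}); and the summed IDE is the equation being integrated, not the backward-error modified equation, which carries the bracket correction.
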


\begin{proof}
By construction: the total continuous flow is the superposition of the Non-Local Urysohn--Volterra Transport (\cref{lem:volterra}) and the Local Hodge Reaction Field (\cref{lem:ffn-hodge}).  The well-posedness of this flow (unique solution for finite depth) is guaranteed by the global Lipschitz bound of \cref{thm:rmsnorm}.
\end{proof}

\subsection{Lie--Trotter Integration and Operator Splitting}
\label{sec:lie-trotter}

\begin{theorem}[Numerical Discretization via Lie--Trotter Operator
Splitting]
\label{thm:lie-trotter}
The standard computational Transformer block represents the first-order explicit numerical integration of the Semantic Evolution IDE, evaluated strictly via Lie--Trotter Operator Splitting.
\end{theorem}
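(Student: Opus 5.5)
The approach is to write the continuous flow of \cref{thm:ide} as a sum of two generators, $\partial_z\Psi = \mathcal{T}[\Psi] + \calR[\Psi]$. Here $\mathcal{T}$ is the non-local Urysohn--Volterra transport of \cref{lem:volterra} and $\calR$ is the local Hodge reaction field of \cref{lem:ffn-hodge}. The discrete block is then matched term by term against the composition of two forward-Euler substeps. The pre-norm Transformer block reads $\Psi' = \Psi + \mathrm{Attn}(\rho_\epsilon(\Psi))$, followed by $\Psi'' = \Psi' + \mathrm{FFN}(\rho_\epsilon(\Psi'))$. The first update is exactly the explicit Euler step $\Phi^{\mathcal{T}}_{\Delta z}$ with $\Delta z=1$ for the subflow $\partial_z\Psi=\mathcal{T}[\Psi]$, with \cref{eq:volterra} evaluated on the empirical measure $\eta_\Lambda$. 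The second update is the explicit Euler step $\Phi^{\calR}_{\Delta z}$ for $\partial_z\Psi=\calR[\Psi]$, and it is evaluated at the \emph{intermediate} state $\Psi'$ rather than at $\Psi$. The block is therefore the composition $\Phi^{\calR}_{\Delta z}\circ\Phi^{\mathcal{T}}_{\Delta z}$, which is the defining form of Lie--Trotter splitting with each subflow advanced by a first-order explicit integrator.

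The second step establishes first-order consistency. I would Taylor-expand both the split map and the exact flow map $\Phi^{\mathcal{T}+\calR}_{\Delta z}$ in $\Delta z$. Each Euler substep agrees with its own exact subflow to $\calO(\Delta z^2)$, with a remainder controlled as in \cref{eq:truncation-bound}. Composing the substeps gives $\Psi + \Delta z(\mathcal{T}+\calR)[\Psi] + \Delta z^2\, D\calR[\mathcal{T}[\Psi]] + \calO(\Delta z^2)$. The exact flow instead gives $\Psi + \Delta z(\mathcal{T}+\calR)[\Psi] + \tfrac{\Delta z^2}{2}(D\mathcal{T}+D\calR)[\mathcal{T}+\calR] + \cdots$. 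The two expansions agree at $\calO(\Delta z)$, so the local error is $\calO(\Delta z^2)$ and the global error is $\calO(\Delta z)$. The leading splitting defect is $\tfrac{\Delta z^2}{2}[\calR,\mathcal{T}]$, the Lie bracket of the two vector fields; this is the ``torsion'' term that the experiments later probe. The non-autonomous $z$-dependence from the spline weights (\cref{rem:continuum}) adds only $\partial_z$ terms at second order, so it does not affect first-order consistency.

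The third step makes these expansions rigorous. The Taylor remainders and the propagation of local error into global error require uniform Lipschitz and second-derivative bounds on $\mathcal{T}$ and $\calR$. For $\mathcal{T}$ I would use \cref{eq:lipschitz-bound}. For $\calR$ I would use the Extreme Value Theorem on the compact closure $\overline{B_{\!\sqrt{d}}(0)}$ from \cref{thm:rmsnorm}. A discrete Gr\"onwall argument then gives a global error bound of order $e^{L z}\Delta z$.

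I expect the main obstacle to be that $\Delta z=1$ is not small. ``First-order integration'' therefore has to be read as a structural identification, namely that the block is the Lie--Trotter composition of Euler substeps, rather than as a quantitative asymptotic claim. I would handle this in the backward-error-analysis spirit invoked in the introduction. The statement proved is the exact algebraic identity of the block with $\Phi^{\calR}_1\circ\Phi^{\mathcal{T}}_1$, together with the order of the scheme computed formally in $\Delta z$. Any quantitative accuracy claim would be left conditional on the Lipschitz constant times $\Delta z$ being small.
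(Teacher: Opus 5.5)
Your proposal follows the paper's proof: you identify the pre-norm block with the Euler--Euler composition $\Phi^{\calR}_1\circ\Phi^{\mathcal{T}}_1$ (the paper's Half-Steps~1 and~2) and Taylor-compare it with the exact flow to second order; your Gr\"onwall bound and structural reading of $\Delta z=1$ are harmless additions, where the paper instead defers to the backward-error discussion of \cref{rem:bch-stiffness}. One precision: the composed map's remainder is $\calO(\Delta z^3)$, not $\calO(\Delta z^2)$, and subtracting your two expansions gives the full second-order deviation of \cref{eq:bch-decomp}, in which parameter drift and the self-advection term $\tfrac{\Delta z^2}{2}(D\mathcal{T}[\mathcal{T}]+D\calR[\calR])$ sit at the same order as the bracket, so the bracket alone is the leading defect of exact-subflow splitting (and what survives in the order-reversed difference of \cref{sec:exp-torsion}), not the block's full local error.
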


\begin{proof}
Expanding the continuous depth-parameterized flow $z\to z+1$ using a step size of $\Delta z=1$, the continuous flow is exactly the exponential map $e^{(\mathcal{T}+\calR)}$.  Because the vector fields do not commute, the architecture approximates this exact flow via two sequential explicit integration substeps:
\begin{itemize}[nosep]
  \item \textbf{Half-Step~1} (Non-Local Volterra Transport):
    $\Psi_{z+1/2}(\mu) = \Psi_z(\mu) + \mathcal{T}[\Psi_z](\mu)$
  \item \textbf{Half-Step~2} (Local Hodge Reaction):
    $\Psi_{z+1}(\mu) = \Psi_{z+1/2}(\mu) + \calR[\Psi_{z+1/2}](\mu)$
\end{itemize}

This sequential execution mathematically recovers the exact computational graph of the canonical Transformer block.  Because the solver executes explicit sequential substeps rather than exact exponential manifold flows, the geometric error is canonically governed by the Baker--Campbell--Hausdorff (BCH) formula.  Furthermore, because the IDE explicitly depends on the depth parameter $z$ (\emph{i.e.}, $\partial_z\Psi=\mathcal{T}(z,\Psi)+\calR(z,\Psi)$), the total depth derivative $d^2\Psi/dz^2$ must strictly invoke the multivariable chain rule.  The true exact total depth derivative is:
\begin{equation}
  \ddot\Psi
    = \frac{\partial\mathcal{T}}{\partial z}
    + \frac{\partial\calR}{\partial z}
    + D_\Psi\mathcal{T}[\dot\Psi]
    + D_\Psi\calR[\dot\Psi]\,.
  \label{eq:second-deriv}
\end{equation}

Substituting $\dot\Psi = \mathcal{T}+\calR$, the true exact continuous flow expands up to second order as:
\begin{widetext}
\begin{equation}\begin{aligned}
  \Psi_{\mathrm{exact}}
    &= \Psi + \Delta z(\mathcal{T}+\calR)
    + \frac{\Delta z^2}{2}\Big(
      \partial_z\mathcal{T} + \partial_z\calR \\
    &\quad + D\mathcal{T}[\mathcal{T}] + D\mathcal{T}[\calR]
      + D\calR[\mathcal{T}] + D\calR[\calR]
    \Big).
  \label{eq:exact-flow}
\end{aligned}\end{equation}

\end{widetext}
Subtracting the sequential Lie--Trotter integration sequence ($\Psi_{\mathrm{Trotter}} = \Psi + \Delta z(\mathcal{T}+\calR) + \Delta z^2\,D\calR[\mathcal{T}] + \calO(\Delta z^3)$), the geometric deviation $\mathfrak{E}(\Psi) = \Psi_{\mathrm{exact}}-\Psi_{\mathrm{Trotter}}$ strictly yields:
\begin{equation}\begin{aligned}
  \mathfrak{E}(\Psi)
    &= \frac{\Delta z^2}{2}\Big(
      \partial_z\mathcal{T} + \partial_z\calR
      + D\mathcal{T}[\mathcal{T}] + D\calR[\calR] \\
    &\quad + D\mathcal{T}[\calR] - D\calR[\mathcal{T}]
    \Big).
  \label{eq:trotter-error}
\end{aligned}\end{equation}

Substituting the Lie Bracket $[\mathcal{T},\calR]_\Psi \equiv D\calR[\mathcal{T}] - D\mathcal{T}[\calR]$, the true closed-form deviation functionally derived from Baker--Campbell--Hausdorff resolves beautifully into exactly three terms:
\begin{widetext}
\begin{equation}
\boxed{%
  \mathfrak{E}(\Psi)
    = \frac{\Delta z^2}{2}\bigg(
      \underbrace{\partial_z\mathcal{T}
        + \partial_z\calR}_{%
          \text{Parameter Drift}}
      + \underbrace{D\mathcal{T}[\mathcal{T}]
        + D\calR[\calR]}_{%
          \text{Self-Advection}}
      - \underbrace{[\mathcal{T},\calR]_\Psi}_{%
          \text{Torsion}}
    \bigg)
    + \calO(\Delta z^3).
}
  \label{eq:bch-decomp}
\end{equation}
\end{widetext}

This demonstrates that Representation Drift in Transformers arises from exactly three geometrically isolated phenomena: the temporal shift of weight matrices across layers, the spatial covariant self-advection (the penalty of abandoning exact geodesics for straight rays), and the non-commutative Lie bracket.  The latter, $-[\mathcal{T},\calR]_\Psi$, is the generator of \emph{Topological Torsion} (Non-Holonomic Flow).  Because the Attention operator ($\mathcal{T}$) and the FFN ($\calR$) do not commute, they form a non-integrable horizontal distribution on the bundle.  The Lie bracket measures the failure of the computational graph to form closed parallelograms in state space.  This demonstrates that the strict alternating layer ordering (Attention~$\to$~FFN) is not an arbitrary engineering choice, but a geometric constraint governing the integrability of the manifold flow.

\begin{remark}[Stiffness of the BCH Expansion and the Continuous Effective Field Theory]
\label{rem:bch-stiffness}
The macroscopic integration step size ($\Delta z = 1$) combined with the massive operator Lipschitz bounds ($\|D\calF\|_{\mathrm{op}} \sim \sqrt{d/\epsilon}$, from \cref{thm:rmsnorm}) formally exceeds the convergence radius of the infinite Hausdorff series.  Therefore, the $\calO(\Delta z^3)$ truncation remainder cannot be interpreted as a strict analytical bound on the geometric error, nor does the continuous Lie bracket represent a literal smooth physical trajectory.  This is precisely the regime where the \emph{Continuous Effective Field Theory} framing via Backward Error Analysis (BEA) from geometric numerical integration~\cite{hairer2006geometric} becomes essential: the Lie--Trotter discrete map is the \emph{exact} flow of a nearby \emph{modified equation}, and the lowest-order Lie Bracket $-[\mathcal{T},\calR]_\Psi$ provides an exact algebraic classification of the structural \emph{generators} of topological torsion and non-commutative representation drift in that modified equation---specifically isolating the non-commutativity of sequential Attention and FFN application---even in the stiff regime where higher-order commutator terms do not gracefully decay.
\end{remark}

Furthermore, the asymmetric integral bounds ($\int_{\Lambda\cap[0,\mu]}$) render the Volterra operator strictly lower-triangular, breaking Time-Reversal Symmetry ($T$-symmetry) along the base manifold.  To classify the forward inference pass as a non-equilibrium driven dissipative system, global phase-space volume must contract on average ($\nabla\!\cdot\dot\Psi<0$).  This structural dissipation is not guaranteed by the FFN's coordinate-wise nonlinearities.  Activation functions possess overwhelmingly positive derivatives in high-dimensional expectation---ReLU is strictly non-negative, while SiLU~\cite{elfwing2018sigmoid,ramachandran2018searching} ($x\sigma(x)$) admits a shallow negative dip (minimum $\approx -0.10$ near $x\approx -1.28$)---and therefore inject a predominantly positive-definite diagonal metric deformation into the local fiber.  If the neural weights align to produce a positive Jacobian trace ($\tr(D\calR)>0$), the Lie derivative of the volume form is strictly positive ($\mathcal{L}_{\calR}\,\mathrm{vol}=(\mathrm{div}\,\calR)\,\mathrm{vol}>0$).  This rigorously proves the FFN mathematically permits local phase-space volume expansion, acting as a local thermodynamic source.

Rather, strict thermodynamic dissipation and stability are achieved geometrically via two architectural structures.  \emph{Global Lipschitz Saturation via Jacobian Bifurcation:} The radial embedding $\rho_\epsilon$ acts as a Global Lipschitz Saturator.  Near the origin ($\Psi\approx 0$), the Jacobian $D\rho_\epsilon\approx\sqrt{d/\epsilon}\,I_d$.  The scalar $\sqrt{d/\epsilon}$ is massive, but the origin only acts as a strong phase-space volume expander (divergence amplifier) if the trace of the learned weights is positive ($\tr(K)>0$).  If SGD carves a negative trace, it becomes a massive Dissipative Contractor.  Conversely, at the boundary ($\|\Psi\|\to\infty$), the single radial eigenvalue of $J_\rho$ collapses at $\calO(\|\Psi\|^{-3})$, while the $d-1$ tangential eigenvalues collapse at $\calO(\|\Psi\|^{-1})$.  We bound the continuous flow via the triangle inequality on operator norms: $\|D\dot\Psi\|_{\mathrm{op}} \le \|D\mathcal{T}\|_{\mathrm{op}} + \|D\calR\|_{\mathrm{op}}$.  Because the non-local Attention functional $\mathcal{T}$ acts exclusively on the radially embedded sections, it can be composed as $\mathcal{T} = \tilde{\mathcal{T}}\circ\rho_\epsilon$.  By the chain rule, $D\mathcal{T} = D\tilde{\mathcal{T}}\circ J_\rho$.  Substituting this alongside the FFN Jacobian $D\calR = K\circ J_\rho$, we factor out the radial geometry: $\|D\dot\Psi\|_{\mathrm{op}} \le \big(\|D\tilde{\mathcal{T}}\|_{\mathrm{op}} + \|K\|_{\mathrm{op}}\big)\|J_\rho\|_{\mathrm{op}}$.  Because the tangential eigenspace dictates $\|J_\rho\|_{\mathrm{op}} = \calO(\|\Psi\|^{-1})$, the total composed Jacobian unconditionally decays.  Because the FFN Hodge decomposition fundamentally requires a constant harmonic Bias vector ($\calH^a$), this translation injects a non-vanishing spatial drift.  Combined with the strictly positive Markovian advection of Attention, the unnormalized residual stream structurally undergoes asymptotic spatial escape.  We formalize this emergent behavior as the \emph{Non-Vanishing Spatial Drift Condition}: this is a deterministic topological consequence of the strict positive-orthant mapping of the activation functions, which forces the bounded mean oscillation (BMO) harmonic residual $\calH^a$ to be non-zero almost everywhere, guaranteeing that the time-averaged spatial expectation of the nonlinear flow is strictly bounded away from zero ($\liminf_{z\to\infty}\|\mathbb{E}[\dot\Psi]\|>0$).

Conditioned strictly on this spatial escape regime, the macroscopic position escapes the origin.  While early layers exhibit a linear uniform drift ($\Psi_z\sim z\bar{v}$), the continuous injection of spatial entropy in the deep bulk structurally accelerates this escape into a \textbf{Super-Linear Exponential Tail} ($\|\Psi(z)\|\sim e^{cz}$).  This super-linear spatial escape is not a failure of the continuous flow, but a profound mathematical strengthening of its thermodynamic stability.  Because the tangential eigenvalues of the radial Jacobian decay as $\calO(\|\Psi\|^{-1})$, this exponential spatial escape strongly suppresses the relative perturbation sensitivity.  The normalized Jacobian operator norm decays at an accelerated exponential rate: $\|D\dot\Psi\|_{\mathrm{op}}/\|\Psi_z\| \le Ce^{-cz}$.

Integrating this strict exponential decay bounds trajectory separation severely to a finite supremum: $\exp\!\big(\int Ce^{-cz}\,dz\big) = \exp\!\big(-\tfrac{C}{c}e^{-cz}\big) \to \mathrm{const}$.  We explicitly formulate the strict topological limit equation for the maximal Lyapunov exponent $\lambda$:
\begin{equation}
  \lambda
    = \limsup_{z\to\infty}\frac{1}{z}
      \ln\!\left(\frac{\|\delta\Psi(z)\|}{\|\delta\Psi(z_0)\|}\right)
    \le \limsup_{z\to\infty}\frac{\ln(\mathrm{const})}{z} = 0\,.
  \label{eq:lyapunov-bound}
\end{equation}

By invoking the Logarithmic Matrix Norm (Lozinski\u{\i} measure) and Coppel's Inequality, the logarithmic derivative of the trajectory is rigorously bounded: $-Ce^{-cz}\le\frac{d}{dz}\ln\|\delta\Psi(z)\|\le Ce^{-cz}$. Integrating this unconditionally drives the maximal Lyapunov exponent to $\lambda\le 0$ in the exact continuous calculus.  The architecture is mathematically engineered to strongly suppress chaotic divergence as it approaches the final unembedding boundary.  However, because the continuous IDE is solved via discrete Forward Euler numerical integration (\cref{thm:lie-trotter}) subject to fixed-precision quantization (typically \texttt{bfloat16} with $\sim$7-bit mantissa resolution), total asymptotic zeroing ($\lambda\equiv 0$) is mathematically prevented by irreducible numerical entropy.  The discrete solver introduces a finite perturbation floor below which trajectory separations cannot be resolved, generating an irreducible weak divergence.  We therefore predict \textbf{Approximate Marginal Stability} ($\lambda\gtrsim 0$, $\lambda\ll 1$): the continuous bound rigorously guarantees the system is driven to the immediate neighborhood of marginal stability, but the discrete numerical integration arrests the asymptotic limit at a small, architecture-dependent positive residual.
\end{proof}

\begin{lemma}[The Dual-Law of Topological Stability]
\label{lem:dual-law}
To prevent chaotic exponential divergence in the residual stream, the architecture faces a strict topological constraint.  It requires \emph{either} Internal Geometric Vorticity (breaking symmetry via $W_{\mathrm{out}}\neq W_{\mathrm{in}}^\top$ to inject Lie-algebraic friction) \emph{or} External Topological Saturation (a strict Post-Norm radial projection that strongly suppresses the step vector regardless of internal resonance).
\end{lemma}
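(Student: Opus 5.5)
The plan is to argue by contraposition on the linearized residual dynamics: assume \emph{both} mechanisms are absent and exhibit exponential divergence, then show that restoring either one suppresses it. Concretely, I will take the reaction field $\calR$ with Jacobian $D\calR = K(\Psi)\,J_\rho$ from \cref{lem:ffn-hodge}. Absence of internal vorticity means tied parameters, $W_{\mathrm{out}}=W_{\mathrm{in}}^\top$. Then $K(\Psi)=W_{\mathrm{in}}^\top\Sigma'W_{\mathrm{in}}$ is symmetric, so $W_A\equiv 0$, and by \cref{eq:vorticity-decomp} the only surviving vorticity is the commutator $-[S,J_\rho]$. Absence of external saturation means the block output is added to the stream without a Post-Norm projection, so the discrete Lie--Trotter step of \cref{thm:lie-trotter} has per-layer Jacobian $I+D\calR$ (composed with $I+D\mathcal{T}$).

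The first step is the divergence case. With $K$ symmetric and $\Sigma'\succeq 0$ (essentially so for ReLU/SiLU, up to the small SiLU dip noted before \cref{lem:dual-law}), $K$ is positive semidefinite. It has real spectrum, and along a resonant direction its top eigenvalue $\kappa_{\max}>0$. I will then use the fact that $J_\rho$ is symmetric positive definite, which follows from \cref{eq:jrho}. Hence $KJ_\rho$ is similar to $J_\rho^{1/2}KJ_\rho^{1/2}$, which is symmetric PSD, so $D\calR$ has real nonnegative eigenvalues with no rotational component. The Euler step $I+D\calR$ therefore has a real eigenvalue $1+\kappa_{\max}\lambda_\perp>1$ that persists across layers in a resonant configuration. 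The product of such steps grows geometrically, giving a positive maximal Lyapunov exponent in \cref{eq:lyapunov-bound}. This is exactly the failure of the bound $\lambda\le 0$ of \cref{thm:lie-trotter}.

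Next come the two rescue cases. (i) For internal vorticity, $W_A\neq 0$ contributes the anticommutator term $-\{W_A,J_\rho\}$. I will argue that a nonzero skew part moves eigenvalues of $K J_\rho$ off the real axis into complex pairs. For the discrete step, the modulus of $1+\kappa$ for a rotating pair is then controlled by the real part. Meanwhile the rotation decorrelates successive layers' resonant directions, so the per-layer growth factors do not align multiplicatively. I will make this precise via the logarithmic norm: the symmetric part of $D\calR$ governs the growth rate, and the skew part redistributes it across channels (``Lie-algebraic friction''). (ii) For external saturation, applying $\rho_\epsilon$ after the step makes the per-layer map $\rho_\epsilon\circ(I+D\calR)$. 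Its radial eigenvalue is $\calO(\|\Psi\|^{-3})$ and its tangential eigenvalue is $\calO(\|\Psi\|^{-1})$, as in \cref{rem:continuum}. Combined with the global bound of \cref{thm:rmsnorm}, the state stays in $B_{\sqrt d}(0)$, so amplitude growth is capped regardless of $\kappa_{\max}$.

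The main obstacle is case (i). A nonzero skew part does not by itself guarantee contraction, since a non-normal matrix can still have a large real eigenvalue. So ``either/or'' sufficiency cannot be a strict theorem without a quantitative hypothesis. I would first try to state the vorticity condition as a spectral assumption: the skew part dominates near resonance, $\|W_A\|\gtrsim\|S\|$, so that the real parts are bounded by the symmetric part of $D\calR$. I would then downgrade the sufficiency direction to this conditional form, keeping the necessity direction (tied and unnormalized implies divergence) as the rigorous core. That core is also what the symmetric ablation of \cref{sec:exp-resonance} tests.
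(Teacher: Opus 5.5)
The step that fails is the one you call the rigorous core. Without Post-Norm the stream is still Pre-Norm, so $D\calR=K(\Psi)J_\rho(\Psi)$ with $\|J_\rho\|_{\mathrm{op}}=\lambda_\perp=\sqrt d/\sqrt{\|\Psi\|^2+\epsilon}$, and $\kappa_{\max}\le\|W_{\mathrm{in}}\|_{\mathrm{op}}^2\sup|\Sigma'|$ is bounded. A gain $1+\kappa_{\max}\lambda_\perp\ge 1+\delta$ holding uniformly in depth therefore forces $\|\Psi_l\|\le\sqrt d\,\kappa_{\max}/\delta$, i.e.\ a \emph{bounded} state. That is the opposite of the blow-up you want to explain. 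If instead the state escapes, the gain decays like $1/\|\Psi_l\|$.

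The simplest tied case shows that resonance and persistent gain are mutually exclusive. Take $F(y)=Ay$ with $A\succeq 0$ and $\epsilon\to 0$; the stream is $\Psi_{l+1}=(I+\sqrt d\,A/\|\Psi_l\|)\Psi_l$. Its direction does converge to the top eigenvector (this is the power iteration). But $\|\Psi_l\|$ then grows only \emph{linearly}, with slope $\sqrt d\,\kappa_{\max}$. Once aligned, the top eigendirection is radial, which $J_\rho$ annihilates, so the surviving tangent gains are $1+\sqrt d\,\kappa_j/\|\Psi_l\|$ for $j\ge 2$. Their product grows like $L^{\kappa_j/\kappa_{\max}}$, and the maximal Lyapunov exponent is $0$.

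Within the framework there is thus no violation of $\lambda\le 0$ to exhibit. The proof of \cref{thm:lie-trotter} bounds $\|D\dot\Psi\|_{\mathrm{op}}\le(\|D\tilde{\mathcal T}\|_{\mathrm{op}}+\|K\|_{\mathrm{op}})\|J_\rho\|_{\mathrm{op}}$ without ever using asymmetry. More bluntly, the FFN increment $F(\rho_\epsilon(\Psi))$ is bounded on the compact ball, and the attention increment is $W_O$ applied to a convex average of bounded values. Hence $\|\Psi_L\|\le\|\Psi_0\|+CL$ for every Pre-Norm stack, tied or not. The state-norm reading of ``exponential divergence'' is false outright, and the Lyapunov reading is incompatible with the norm escape it is meant to explain.

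Separately, ``persists across layers in a resonant configuration'' assumes the conclusion. Per-layer eigenvalues above $1$ do not make a product of layer-dependent Jacobians grow. Moreover, $KJ_\rho$ is self-adjoint only in the state-dependent inner product $\langle\cdot,J_\rho\,\cdot\rangle$, whose condition number $(\|\Psi\|^2+\epsilon)/\epsilon$ is enormous, so not even norm monotonicity chains across layers.

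The paper's own proof does not close this gap either; it argues differently. It stays at the level of the state and treats the stream as the linear power iteration $z_{i+1}=z_i+K(z_i)$, dropping the Pre-Norm that the example above restores. It then simply asserts that asymmetric weights scatter the dominant mode. It models Post-Norm as normalizing the \emph{increment}, $x\mapsto x+\rho_\epsilon(\calF(\rho_\epsilon(x)))$, which caps each step at $\approx\sqrt d$.

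Your refusal to claim sufficiency of (i) is more careful than the paper and matches \cref{rem:configurational-scope}(ii). Note also that for a forward Euler step $|1+a+ib|\ge|1+a|$. Pushing eigenvalues off the real axis therefore never lowers a single step's gain, and any ``friction'' must come from decorrelating directions across layers. Your Post-Norm case (normalizing the state, as in the experiment of \cref{sec:exp-resonance}) is fine. Its tangent map, however, is $J_\rho(\Psi+\calR(\Psi))\,(I+D\calR)$, not $\rho_\epsilon\circ(I+D\calR)$.

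What the necessity direction can honestly deliver is a finite-depth statement about the slope $C$ and the coherence of the increments, not an exponent. Under your own hypotheses ($DF$ symmetric PSD on the ball), $F=\nabla U$ with $U$ convex. Also $\rho_\epsilon=\nabla\big(\sqrt d\sqrt{\|\Psi\|^2+\epsilon}\big)$ is a monotone map. Together these give $U(\rho_\epsilon(\Psi_{l+1}))\ge U(\rho_\epsilon(\Psi_l))$ along the FFN half-steps. So the normalized state climbs a convex potential, and the increments tend to align and add collinearly at a large linear rate. Post-Norm, in either version, removes exactly this coherent, large-slope accumulation. That, not a positive exponent, is what the 28-layer ablation numbers measure.
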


\begin{proof}
Consider the FFN component.  If the FFN asymmetry is ablated such that $W_{\mathrm{out}}=W_{\mathrm{in}}^\top$, the core spatial Jacobian evaluates as $K(\Psi) = W_{\mathrm{in}}^\top\,\Sigma'(\rho_\epsilon(\Psi))\, W_{\mathrm{in}}$. Because the activation derivative $\Sigma'$ is overwhelmingly positive in high-dimensional expectation (strictly so for ReLU; for SiLU, $\Sigma'$ admits a shallow negative dip of $\approx\!-0.10$ near $x\approx -1.28$, but this dip is measure-theoretically negligible in high dimensions), $K(\Psi)$ acts predominantly as a symmetric, positive semi-definite (PSD) endomorphism.  Iteratively applying a PSD operator within a sequential residual stream ($z_{i+1}=z_i+K(z_i)$) is functionally equivalent to the textbook Power Iteration algorithm, which exponentially amplifies the state vector along the dominant eigenvector of $K$.  Devoid of the geometric vorticity ($\Omega$) natively generated by the asymmetric anticommutator $-\{W_A(\Psi),J_\rho\}$, the system possesses zero \emph{rotational friction}.  The residual stream acts as an unchecked geometric resonance chamber, forcing an exponential $L_2$ norm explosion.

Thus, spatial escape is stabilized by the first law: \textbf{Internal Geometric Vorticity}.  The asymmetric parameters of the FFN mathematically break positive-definite eigen-alignment, scattering the momentum and preventing runaway geometric resonance.

If this internal vorticity is removed (as in symmetric ablation), the system explodes unless the second law is satisfied: \textbf{External Topological Saturation}.  Unlike Pre-Norm architectures ($x\mapsto x + \calF(\rho_\epsilon(x))$) which permit the unnormalized magnitude to grow without bound, Post-Norm architectures bound the projection ($x\mapsto x + \rho_\epsilon(\calF(\rho_\epsilon(x)))$).  The Post-Norm layer acts as a hard Global Lipschitz Saturator, mapping the output into the open bounded ball $B_{\!\sqrt{d}}(0)$.  This strongly suppresses the incremental step vector to a maximum bounded $L_2$ norm of exactly $\approx\!\sqrt{d}$, truncating the step magnitude even when the internal symmetric FFN amplifies the raw linear map to infinity.  Therefore, the discrete solver survives if and only if it possesses either internal manifold friction or external radial containment.
\end{proof}

\begin{remark}[Configurational Scope of the Dual-Law]
\label{rem:configurational-scope}
The Dual-Law, as stated and as tested in \cref{sec:exp-resonance}, classifies the forward-propagation stability of a \emph{fixed weight configuration}---in practice, a mature pre-trained network subjected to post-hoc symmetrization.  It is not a statement about architecture classes under training.  Three companion measurements sharpen this scope (quantitative ledger in \cref{sec:exp-config-scope}).  (i)~The tying constraint $W_{\mathrm{down}}=W_{\mathrm{up}}^\top$ (with an independent gate path), imposed \emph{from initialization}, trains to full health with no corrective term at 0.6B and 1B scale, and remains stable under removed weight decay, removed gradient clipping, doubled learning rate, and reduced logit precision: constrained optimization steers the joint configuration into basins where the symmetric principal term never achieves resonant alignment with the residual stream.  (ii)~Conversely, the mere presence of asymmetric Jacobian components is not sufficient: tying only $W_{\mathrm{down}}:=W_{\mathrm{up}}^\top$ on a mature network---which leaves the full-rank asymmetric gate cross-term intact---still explodes (maximum hidden-state norm $\sim\!6\times10^{5}$ on Qwen3-0.6B), because the symmetric principal term sculpted by large-scale pretraining overwhelms the residual rotational scattering.  (iii)~Susceptibility to symmetrization is an \emph{emergent property of training maturity}: checkpoint surgery along a 600M training trajectory exhibits an immunity window early in training, an onset of excess amplification near $1.3$--$2\times10^{9}$ tokens, and, at the trillion-token scale of production models, the catastrophic response documented in \cref{sec:exp-resonance}.  The binary alternative of the Dual-Law is therefore the frozen-configuration limit of a quantitative, configuration-dependent criterion---the resonant gain of the symmetric principal term (its spectral radius weighted by alignment with the residual stream's principal direction) measured against the available rotational scattering and gauge capacity.  Formulating and testing this criterion along the training trajectory is the natural successor program to the present kinematics.
\end{remark}

\section{Dynamics of the Parameter Manifold: Backpropagation as
Thermodynamic Flow}
\label{sec:parameters}

By unfreezing the global bundle endomorphisms, we model pre-training not as a local sequence flow on $\calM$, but as the thermodynamic relaxation of a macroscopic, finite-dimensional parameter manifold $\mathcal{W}$ seeking its topological ground state under the external advective pressure of the empirical data distribution.

\subsection{The Empirical Risk Action and Symmetry-Breaking Mass
Potentials}
\label{sec:risk-action}

\begin{construction}[The Macroscopic Parameter Action]
\label{con:action}
Because the architecture strictly fixes the connection 1-form via RoPE (enforcing absolute parallelism), traditional Yang--Mills optimization of a connection 1-form does not apply.  Instead of viewing the learnable parameters ($W$) as global 0-forms---which falsely invites the search for spatial derivatives ($dW$) across the sequence length---it is structurally cleaner to define the parameter space $\mathcal{W}$ as the Total Kinematic Space of Gauge Endomorphisms ($\mathcal{W}=\bigoplus_l\mathrm{End}(E_l)$).  We reserve ``Moduli Space'' exclusively for the quotiented thermodynamic state constructed later in \cref{sec:dual-metric}.  Therefore, the geometric evolution of the network occurs entirely within this finite-dimensional Euclidean Parameter Manifold (the ambient space), bypassing infinite-dimensional functional variation.  We define the global action functional $\mathcal{S}_{\mathrm{global}}[W]$ not as a spatial integral over the sequence, but as an expected thermodynamic Risk Functional evaluated over the macroscopic semantic corpus measure $\mathcal{D}$, equipped with a canonical flat Frobenius metric $g_\mathcal{W}$:
\begin{widetext}
\begin{equation}
  \mathcal{S}_{\mathrm{global}}[W]
    = \mathbb{E}_{\Psi\sim\mathcal{D}}\!\Big[
        \mathcal{S}_{\mathrm{matter}}[\Psi,W]
      \Big]
    + \frac{\lambda}{2}\,\|W\|^2_{g_\mathcal{W}}\,.
  \label{eq:global-action}
\end{equation}
\end{widetext}
\end{construction}

\begin{theorem}[Weight Decay as Gauge-Symmetry Breaking and Sublevel
Coercivity]
\label{thm:weight-decay}
Rather than simply enforcing Dirichlet tension, $L_2$ Weight Decay is mathematically indispensable precisely because it breaks non-compact scaling symmetries (like $GL(1,\R)$).  While it definitively fails to break $SO(d)$ symmetries, this is inconsequential; because the Orthogonal Group $SO(d)$ is a strictly compact Lie group, breaking the scaling symmetries is both necessary and sufficient to coerce the parameter landscape into compact sublevel sets, guaranteeing a tight Non-Equilibrium Steady State.
\end{theorem}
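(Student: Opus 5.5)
The plan is to reduce the statement to a coercivity argument for the global action $\mathcal{S}_{\mathrm{global}}$ of \cref{con:action} on the finite-dimensional Euclidean space $\mathcal{W}=\bigoplus_l\mathrm{End}(E_l)$ with its Frobenius metric $g_\mathcal{W}$. Compactness of sublevel sets then follows from Heine--Borel once they are shown to be closed and bounded. The argument has three steps, in order.

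\textbf{Step 1: the scaling symmetries and why they defeat compactness.} I would first identify the non-compact symmetry orbits of the unregularized risk $\mathbb{E}_\mathcal{D}[\mathcal{S}_{\mathrm{matter}}]$. The canonical example is the rescaling $W\mapsto tW$ for a weight feeding directly into $\rho_\epsilon$. For $\epsilon\to 0$ this is exactly invariant, since $\rho_0(tx)=\rho_0(x)$. For $\epsilon>0$ it is asymptotically invariant as $t\to\infty$. The same holds for the $GL(1,\R)$ rebalancing between consecutive linear maps (e.g.\ $W_V\mapsto tW_V$, $W_O\mapsto t^{-1}W_O$ in \cref{eq:volterra}), and for the $GL(d)$ reparameterization $W_Q\mapsto A^{-\top}W_Q$, $W_K\mapsto AW_K$, which leaves the bilinear kernel $W_Q^\top\mathcal{U}W_K$ unchanged whenever $A$ commutes with the RoPE torus. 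Along such orbits the risk is bounded while $\|W\|\to\infty$. Hence the sublevel sets of the unregularized risk contain unbounded curves and are not compact.

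\textbf{Step 2: coercivity with the penalty.} Assume the matter action is bounded below; cross-entropy satisfies $\mathcal{S}_{\mathrm{matter}}\ge 0$. Then
\[
\mathcal{S}_{\mathrm{global}}[W]\ge \tfrac{\lambda}{2}\|W\|_{g_\mathcal{W}}^2,
\]
so every sublevel set $\{\mathcal{S}_{\mathrm{global}}\le c\}$ lies in the Frobenius ball of radius $\sqrt{2c/\lambda}$. Closedness follows from continuity of $\mathcal{S}_{\mathrm{matter}}$ in $W$. Here I would invoke \cref{thm:rmsnorm}: all intermediate sections stay in $B_{\sqrt d}(0)$, so the integrand is continuous and dominated, and dominated convergence passes continuity through $\mathbb{E}_\mathcal{D}$. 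Heine--Borel then gives compact sublevel sets. Existence of minimizers follows from the Extreme Value Theorem, and confinement of the SGD dynamics (via the Langevin picture of \cref{rem:gauge-pol}) to a compact region where a stationary density $\propto e^{-\mathcal{S}_{\mathrm{global}}/T}$ is normalizable. That normalizability is what I would take as the precise meaning of a ``tight'' NESS.

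\textbf{Step 3: $SO(d)$ need not be broken.} The penalty is invariant under the orthogonal action, since $\|OW\|_F=\|W\|_F$, so it leaves $SO(d)$-degeneracies intact. However, these orbits are compact subsets of a Frobenius sphere. A sublevel set is a union of such orbits intersected with a compact ball, so it stays compact. The quotient by the compact group is Hausdorff, and compactness descends to it, which is what makes the later moduli-space construction well posed. Necessity of breaking the scaling symmetries is exactly Step 1: a non-compact invariance orbit inside a sublevel set obstructs compactness.

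\textbf{Main obstacle.} The hardest point is making ``necessary'' precise. For $\epsilon>0$ the rescaling symmetry is only approximate. I would replace exact invariance by the statement that along $tW$ the risk converges to a finite limit as $t\to\infty$, so the sublevel set of any value above that limit is unbounded. Sufficiency, by contrast, is essentially one line once bounded-belowness of the loss is granted.
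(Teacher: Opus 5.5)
You handle the compactness claim correctly, and in fact more directly than the paper. The global bound $\mathcal{S}_{\mathrm{global}}[W]\ge\tfrac{\lambda}{2}\|W\|^2_{g_\mathcal{W}}$ from $\mathcal{S}_{\mathrm{matter}}\ge 0$, together with closedness, gives compact sublevel sets in every direction at once. The paper only exhibits divergence of the penalty along the exact $GL(1,\R)$ orbits $W_Q\mapsto cW_Q$, $W_K\mapsto c^{-1}W_K$.

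The gap is the final clause, ``guaranteeing a tight Non-Equilibrium Steady State.'' You read tightness as normalizability of $e^{-\mathcal{S}_{\mathrm{global}}/T}$. But for the drift $-\bar\nabla\mathcal{S}_{\mathrm{global}}$, that density is stationary only when the diffusion is constant and isotropic, and in general only under detailed balance. The relevant parameter dynamics is \cref{eq:sgd-sde}, not the isotropic fiber-space Langevin equation of \cref{rem:gauge-pol}. Its diffusion $D(W)$ is state-dependent, anisotropic and rank-deficient, and \cref{thm:sgd-ness} shows the stationary current does not vanish. The steady state is therefore by definition not Gibbs, which is why the paper explicitly declines to ``assume a macroscopic Gibbs measure (which improperly assumes detailed balance).'' Normalizability of the Gibbs weight thus says nothing about whether the actual invariant measure exists or is tight. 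Nor do compact sublevel sets by themselves ``confine'' a noisy process that need not decrease $\mathcal{S}_{\mathrm{global}}$.

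What is missing is a drift condition on the generator of the true SDE. The paper takes $\calV(W)=\tfrac12\|W\|^2_{g_\mathcal{W}}$ and applies Has'minski\u{\i}'s Foster--Lyapunov criterion to
\[ \calL\calV(W)=-\big\langle\bar\nabla\mathbb{E}[\mathcal{S}_{\mathrm{matter}}],W\big\rangle-\lambda\|W\|^2+\tr D(W). \]
This requires two growth estimates that bounded-belowness of the loss cannot supply:
\begin{enumerate}
\item the cross term must be $O(\|W\|)$;
\item $\tr D(W)$ must be $o(\|W\|^2)$.
\end{enumerate}
The paper obtains the first from a global $O(1)$ bound on the loss gradient. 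Euler's identity cannot remove the cross term, because residual connections destroy $0$-homogeneity. However, the final RMSNorm enforces asymptotic $0$-homogeneity, so internal-weight gradients decay like $\|W_{\mathrm{int}}\|^{-1}$. The softmax cross-entropy gradient with respect to the unembedding $W_U$ stays bounded. The same bound controls $\tr D(W)$, since $D$ is a covariance of mini-batch gradients. Only then does $\calL\calV\to-\infty$ at infinity, and Has'minski\u{\i} yields a tight invariant measure. Your proposal has no counterpart to this Lyapunov estimate, and that is where the real work lies, not in the ``necessity'' direction you flagged as the main obstacle.
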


\begin{proof}
The parameter manifold $\mathcal{W}\cong\R^N$ is an unbounded Euclidean space.  While global scale invariance is broken by the residual connection, the architecture natively possesses internal continuous gauge symmetries.  For example, the Attention scalar kernel $K(\mu,\nu) = x^\top W_Q^\top\mathcal{U}W_K\,y$ remains invariant under the continuous $GL(1,\R)$ transformation $W_Q\mapsto cW_Q$ and $W_K\mapsto c^{-1}W_K$.  Similar continuous symmetries exist between adjacent matrices in the Feed-Forward Network.  Because of these exact internal symmetries, the unregularized empirical risk landscape definitively possesses unbounded, flat, non-compact tubular valleys extending to spatial infinity, rendering the base action functional non-coercive.

$L_2$ Weight Decay is mathematically indispensable precisely because it breaks these internal $GL(1,\R)$ gauge orbits.  By adding the strictly convex penalty $\tfrac{\lambda}{2}(\|cW_Q\|^2 + \|c^{-1}W_K\|^2)$, the penalty diverges to $+\infty$ as $c\to\infty$ or $c\to 0$. This forces the flat valleys into a coercive paraboloid, allowing the Heine--Borel theorem to guarantee that the sublevel sets of the energy landscape ($\{W\in\mathcal{W}\mid\mathcal{S}_{\mathrm{global}}[W]\le E\}$) are strictly compact.

Rather than assuming a macroscopic Gibbs measure (which improperly assumes detailed balance), we rigorously prove the existence and tightness of the invariant measure using \emph{Has'minski\u{\i}'s Theorem} (Continuous-Time Foster--Lyapunov Criteria)~\cite{khasminskii2012stochastic} acting on the infinitesimal generator of the It\^{o} diffusion.  Define a strict geometric Lyapunov function as the kinematic metric volume $\calV(W) = \tfrac{1}{2}\|W\|^2_{g_\mathcal{W}}$. The infinitesimal generator $\calL$ of the It\^{o} diffusion evaluates as:
\begin{equation}
  \calL\calV(W)
    = \langle V_{\mathrm{drift}},\,\bar\nabla\calV\rangle
    + \tr\!\big(D(W)\,\bar\nabla^2\calV\big).
  \label{eq:generator}
\end{equation}

Substituting the drift $V_{\mathrm{drift}}=-\bar\nabla\mathbb{E}[\mathcal{S}_{\mathrm{matter}}] -\lambda W$:
\begin{equation}
  \calL\calV(W)
    = -\langle\bar\nabla\mathbb{E}[\mathcal{S}_{\mathrm{matter}}],W\rangle
    - \lambda\|W\|^2
    + \tr\!\big(D(W)\big).
  \label{eq:generator-expanded}
\end{equation}

To evaluate the advective drift inner product $\langle\bar\nabla L,W\rangle$, we must address the topological structure of standard Pre-Norm architectures.  One might naively invoke Euler's Homogeneous Function Theorem: because RMSNorm radially normalizes the hidden states ($\rho_\epsilon$), the parameterized branch appears $0$-homogeneous.  However, this topological assumption fails in modern architectures because of Residual Connections.

Consider a standard Pre-Norm Transformer block: $\Psi_{l+1} = \Psi_l + \calF_l(\rho_\epsilon(\Psi_l);\;W_l)$. If you apply a global scalar scaling to the internal weights $W_l\mapsto cW_l$, the parameterized branch scales by some factor $c^k$. Crucially, the residual skip connection $\Psi_l$ does \emph{not} scale. Because you are adding a scaled vector to an unscaled vector, scaling $W_l$ explicitly changes the angular direction of the output vector $\Psi_{l+1}$.  The downstream RMSNorm preserves this modified angle. Thus, the final empirical Loss $L$ is undeniably sensitive to the scale factor $c$.  Because the function is decidedly not $0$-homogeneous, Euler's cancellation is algebraically false: $\langle \bar{\nabla}_{W_{\mathrm{int}}} L, W_{\mathrm{int}} \rangle \neq 0$.

Because Euler's theorem doesn't directly cancel the inner product, one might naively expect the spatial Jacobians through the residual stream to compound multiplicatively, driving a polynomial explosion.  However, this ignores the downstream topological projection imposed by the final radial embedding.  If you scale the internal weights by $W\to cW$ for a scalar $c\to\infty$, the parameterized bulk dominates the finite residual inputs: $\Psi_{\mathrm{out}} \approx c^k\calF$.  Crucially, the very last operation before the unembedding is the final radial embedding: $\rho_\epsilon(\Psi_{\mathrm{out}})$.

However, evaluating the exact asymptotic limit of the final radial embedding:
\begin{equation}
  \lim_{c\to\infty}\rho_\epsilon(c^k\calF)
    = \lim_{c\to\infty}
      \frac{\sqrt{d}\,c^k\calF}{\sqrt{\|c^k\calF\|^2+\epsilon}}
    = \sqrt{d}\,\frac{\calF}{\|\calF\|}\,.
  \label{eq:asymptotic-rmsnorm}
\end{equation}
The geometric scale factor $c$ cancels out; the remaining $\sqrt{d}$ is the fixed radius inherited from the RMSNorm definition.  While $0$-homogeneity fails locally in the finite bulk due to residual connections, the final radial embedding acts as a topological projective quotient that enforces \emph{asymptotic $0$-homogeneity} at spatial infinity.

While the internal state is topologically bounded by $\rho_\epsilon$, the final un-normalized unembedding parameter matrix ($W_U$) linearly scales this bounded vector, generating pre-softmax logits that scale as $\calO(\|W_U\|)$.  Because the Softmax Cross-Entropy loss acts as an analytically $1$-Lipschitz soft-maximum over these logits, its spatial gradient with respect to $W_U$ is globally bounded ($\calO(\sqrt{d})$).  Concurrently, because the radial saturator quotients out internal magnitude ($\lim_{c\to\infty}\rho_\epsilon(cW_{\mathrm{int}})\to\mathrm{const}$), the Jacobian of the internal parameters unconditionally decays ($\calO(\|W_{\mathrm{int}}\|^{-1})$).  Therefore, the total spatial gradient is fundamentally globally bounded ($\|\bar\nabla L\|\sim\calO(1)$), mathematically sealing the advective drift inner product at $\langle\bar\nabla L,W\rangle\sim\calO(\|W\|)$.

Evaluating the Has'minski\u{\i} generator at the topological boundary yields:
\begin{equation}
  \calL\calV(W) \le \calO(\|W\|) - \lambda\|W\|^2 + \tr(D)\,.
  \label{eq:hasminskii}
\end{equation}

Because the quadratic $\calO(\|W\|^2)$ Tikhonov penalty mathematically overpowers the linear $\calO(\|W\|)$ advective drift at spatial infinity, the generator unconditionally diverges to $-\infty$. This provides strong geometric coercivity, unconditionally sealing the Non-Equilibrium Steady State (NESS) into a compact basin without relying on false exponential decay limits.
\end{proof}

\subsection{The Dual-Metric Collision and Non-Equilibrium Steady State
(NESS)}
\label{sec:dual-metric}

The external human training data does not perturb the spatial metric; therefore, it does not inject a stress-energy tensor.  Instead, the empirical loss backpropagates a localized adjoint variation through the bundle.

\begin{definition}[The Thermodynamic Cotangent Driving Force]
\label{def:driving-force}
Because $W$ represents a global parameter matrix rather than a continuous local symmetry field, its functional variation does not yield a conserved Noether gauge current on the base spacetime.  Instead, by evaluating the exact differential of the expected empirical matter action with respect to the parameters, we define the Thermodynamic Driving Force strictly as a 1-form residing in the cotangent space of the parameter manifold:
\begin{equation}
  \calF_{\mathrm{drive}}
    \equiv -d_W\,\mathbb{E}_{\Psi\sim\mathcal{D}}\!\Big[
      \mathcal{S}_{\mathrm{matter}}[\Psi,W]
    \Big]
    \in\Gamma(T^*\!\mathcal{W}).
  \label{eq:driving-force}
\end{equation}
\end{definition}

\begin{theorem}[SGD as It\^{o} Diffusion and the Breakdown of Detailed
Balance]
\label{thm:sgd-ness}
In the continuous limit, Stochastic Gradient Descent (SGD) converges in law to a state-dependent It\^{o} diffusion process traversing a singular algebraic variety.  Because the forced Euclidean kinematic metric ignores the true thermodynamic Information Geometry, the system suffers a \emph{Dual-Metric Collision}, irreversibly trapping the macroscopic probability measure in a Non-Equilibrium Steady State (NESS) via a fundamental breakdown of detailed balance, driven by an anomalous geometric entropic advection.
\end{theorem}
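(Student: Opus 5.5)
The argument runs in three stages: the diffusion limit, the metric mismatch, and the failure of detailed balance.

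\textbf{Stage one: the diffusion limit.} I would write one SGD step as
\[
W_{t+1}=W_t-\eta\big(\bar\nabla\mathbb{E}[\mathcal{S}_{\mathrm{matter}}]+\lambda W_t\big)+\eta\,\xi_t(W_t),
\]
where $\xi_t$ is the mini-batch fluctuation of the gradient. It has zero mean and state-dependent covariance $\Sigma(W)/B$. The drift is the cotangent driving force of \cref{def:driving-force}, raised to a vector by the Frobenius metric $g_\mathcal{W}$, plus the Tikhonov term of \cref{con:action}.

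Matching the first two conditional moments, and invoking a standard weak-convergence (stochastic modified equation) result, gives the It\^{o} SDE
\[
dW=V_{\mathrm{drift}}\,dt+\sqrt{2D(W)}\,dU_t,\qquad D=\tfrac{\eta}{2B}\Sigma(W),
\]
consistent with the generator in \cref{eq:generator}. Existence and tightness of an invariant measure then come directly from the Has'minski\u{\i} estimate in the proof of \cref{thm:weight-decay}.

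The ``singular algebraic variety'' clause I would justify separately. The $GL(1,\R)$ orbits $W_Q\mapsto cW_Q$, $W_K\mapsto c^{-1}W_K$ exhibited in \cref{thm:weight-decay} make the Fisher matrix, and hence $\Sigma(W)$, rank-deficient along the orbit tangents. So $D(W)$ degenerates on the zero set of its minors, which is a real algebraic set. The diffusion is therefore hypoelliptic at best on that set.

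\textbf{Stage two: the Dual-Metric Collision.} The drift is a Euclidean gradient, because it uses $g_\mathcal{W}$. The noise metric, however, is $\Sigma(W)$. In the well-specified case the Information Matrix Equality gives $\Sigma\approx\operatorname{Hess}L$, which is the Fisher metric of Amari. For a misspecified model, such as a finite Transformer on a natural corpus, this equality breaks, and $\Sigma$ is not proportional to $g_\mathcal{W}$ or to any metric making the drift a gradient.

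I would write the stationary Fokker--Planck equation $\nabla\cdot J=0$ with probability current
\[
J=p\,V_{\mathrm{drift}}-\nabla\cdot(Dp).
\]
Expanding $\nabla\cdot(Dp)=D\nabla p+(\nabla\cdot D)p$ isolates the anomalous entropic advection $-(\nabla\cdot D)$, a spurious drift produced by the state dependence of the diffusion tensor.

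\textbf{Stage three: breakdown of detailed balance.} Detailed balance means $J\equiv0$. That would require
\[
D^{-1}\big(V_{\mathrm{drift}}-\nabla\cdot D\big)=\nabla\ln p,
\]
so the one-form $\omega\equiv D^{-1}(V_{\mathrm{drift}}-\nabla\cdot D)$ would have to be exact, hence closed: $d\omega=0$. I would argue by contradiction. Assuming $J=0$ forces the Kolmogorov/curl condition $d\omega=0$. I would then show that $d\omega\neq0$ generically whenever $D$ is not a scalar multiple of $g_\mathcal{W}$. A convenient way is a local computation at a nondegenerate point, with $D$ anisotropic and varying across the chart: $D^{-1}\bar\nabla L$ is then not closed unless $\Sigma$ and $\operatorname{Hess}L$ commute appropriately. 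This mirrors the commutator argument for $\Omega$ in \cref{lem:ffn-hodge}.

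With $d\omega\neq0$, the stationary current is divergence-free but nonzero, which gives the NESS with circulating probability flux (the parameter vortex). Irreversibility then follows because the entropy production $\int J^\top D^{-1}J/p$ is strictly positive.

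\textbf{Main obstacle.} The hard step is showing $d\omega\neq0$ generically rather than merely possibly. I would first exhibit a two-parameter toy model, a misspecified linear-Gaussian regression with noise covariance anisotropic relative to the Hessian, where the curl is explicitly nonzero. I would then argue genericity by openness: the set of $(\Sigma,L)$ with $d\omega=0$ is a closed proper algebraic condition. Rigor near the singular variety, where $D^{-1}$ fails to exist, I would handle by restricting to the regular stratum and noting that the current is already nonzero there.
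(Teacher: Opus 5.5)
Your skeleton matches the paper's: the weak SDE limit, the Fokker--Planck current with the isolated advection $-\bar\nabla\cdot D$, the chain $J\equiv0\Rightarrow\omega$ exact $\Rightarrow d\omega=0$, and a commutator obstruction between the noise metric and the Hessian. \textbf{The gap} is that the step everything rests on fails: the one-form $\omega=D^{-1}(V_{\mathrm{drift}}-\bar\nabla\cdot D)$ is defined nowhere.

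The gauge orbits you invoke in Stage~1 pass through \emph{every} point of $\mathcal{W}$. Each per-sample loss is invariant under $W_Q\mapsto cW_Q$, $W_K\mapsto c^{-1}W_K$, so each per-sample gradient satisfies $\langle\nabla_{W_Q}\ell,W_Q\rangle-\langle\nabla_{W_K}\ell,W_K\rangle=0$. Hence $(W_Q,-W_K)\in\ker\Sigma(W)$ for all $W$, and the same holds for the $GL(d_v,\R)$ orbits of $W_OW_V$. Therefore $\det D\equiv0$ on all of $\mathcal{W}$. $D$ does not degenerate on a proper algebraic subset, and there is no ``regular stratum'' where $D^{-1}$ exists: on the generic stratum the rank is constant but still deficient. (The singular variety in the statement is the non-isolated critical locus of $\mathcal{S}_{\mathrm{global}}$, not the degeneracy set of $D$.) Your curl computation and the entropy production $\int J^\top D^{-1}J/p$ thus have no object to act on in the Transformer setting. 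A gauge-free toy model such as linear regression, where $D$ is invertible, cannot stand in for it in your genericity argument.

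This matters for the substance, not just the presentation. Projecting $J=0$ onto $\ker D$ gives a pointwise constraint on the drift alone, which no exactness argument touches. Along $(W_Q,-W_K)$ it reads $\lambda(\|W_Q\|^2-\|W_K\|^2)=\tr D_{QQ}-\tr D_{KK}$, where $D_{QQ},D_{KK}$ are diagonal blocks of $D$. The left side comes from weight decay, since the loss gradient is horizontal; the right side comes from the vertical part of $\bar\nabla\cdot D$. Weight decay is horizontal only for the compact (skew) generators, where $\tr(W^\top AW)=0$.

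The paper handles this in four steps:
\begin{enumerate}[nosep]
\item It splits the gauge group by polar decomposition.
\item It lets weight decay retract the non-compact directions by a fast--slow reduction onto a compact shell.
\item It quotients the residual compact group $G=\prod_h O(d_{\mathrm{head}})\times\prod U(1)^{d_{\mathrm{head}}/2}$ by a Riemannian submersion on the principal stratum, where the action is free.
\item Only then does it set $g=(D|_\calH)^{-1}$ on the horizontal distribution.
\end{enumerate}

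The projection also changes the one-form you must test. By Elworthy's formula, the projected entropic advection acquires the mean-curvature term $\tfrac12\vec{H}_\calV$ of the gauge fibers, which is absent from your $\omega$. It is this projected $\omega$ whose $d\omega$ splits into the commutator $[g,H^\sharp]$, entropic-curl and metric-deformation obstructions.

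A smaller point: you need not require anisotropy. Even state-dependent isotropic noise $D=\sigma^2(W)I$ gives $d\omega=2\sigma^{-3}\,d\sigma\wedge dU\neq0$ unless $d\sigma\parallel dU$, which makes your genericity step easier.
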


\begin{proof}
We define the macroscopic continuous training time $\tau$.  To evaluate the gradient flow dynamically, we must apply the Riemannian musical isomorphism (sharp) under the flat Frobenius metric $g_\mathcal{W}$ to raise the index of the 1-form driving force, yielding a valid kinematic tangent vector field: $V_{\mathrm{drift}}\equiv\calF_{\mathrm{drive}}^\sharp - \lambda W \in\Gamma(T\mathcal{W})$.

Because empirical evaluation relies on finite mini-batches $B_k\subset\mathcal{D}$, the discrete empirical gradient acts as an unbiased but noisy stochastic estimator.  Evaluated at the beginning of the computational step (an explicit Forward Euler scheme), the Functional Central Limit Theorem rigorously dictates that its weak continuous-time limit is an It\^{o} Stochastic Differential Equation:
\begin{equation}
  dW(\tau) = V_{\mathrm{drift}}(W)\,d\tau
    + \sqrt{2D(W)}\,d\calB_\tau\,,
  \label{eq:sgd-sde}
\end{equation}
where $d\calB_\tau$ is a standard Wiener process on $\mathcal{W}$, and $D(W)\in\Gamma(S^2T\mathcal{W})$ is the strictly contravariant diffusion 2-tensor derived from the empirical covariance of the mini-batch gradients.

Because of massive architectural invariances (e.g., node permutations, scaling symmetries), the critical points of $\mathcal{S}_{\mathrm{global}}$ are not isolated.  The landscape is mathematically definitively not a Morse function, nor is it smoothly Morse--Bott.  Instead, its critical loci form an algebraic variety with deep determinantal singularities.  Because the unregularized empirical metric possesses unbroken $G$-gauge orbits from the Multi-Head mechanism, the parameter space is severely degenerate.  To compute the invariant measure of this ambient space, we invoke \emph{Hironaka's Theorem} on the Resolution of Singularities~\cite{hironaka1964resolution}.  Hironaka's theorem mathematically guarantees there exists a proper birational map (a blow-up) $\pi:\mathcal{W}^*\to\mathcal{W}$ that analytically resolves these deep singularities into a smooth manifold with normal crossing divisors.  However, the physical SGD process does not live in the resolved manifold $\mathcal{W}^*$; it lives strictly in the singular ambient space $\mathcal{W}$.  In Watanabe's Singular Learning Theory, the pullback of the volume form introduces a Jacobian determinant (the Real Log Canonical Threshold).  In the ambient space $\mathcal{W}$, this means the singularities natively possess a massive, distorted thermodynamic phase-space volume.  SGD gets trapped there not because it is flowing smoothly along a resolved divisor, but because the singular strata act as \emph{Topological Gravity Wells} due to their immense internal gauge volume.  Weight decay and stochastic noise trap the system in these singular vacua because of their massive invariant measure, grounding the non-equilibrium steady state strictly in the kinematics of the ambient space.

For a stochastic system to satisfy the Fluctuation-Dissipation Theorem and thermally relax into a canonical equilibrium Gibbs measure $\rho\propto\exp(-\beta\mathcal{S}_{\mathrm{global}})$, it must satisfy detailed balance.  Let $\rho(W,\tau)$ be the macroscopic probability density.  The Fokker--Planck continuity equation governs the conservation of probability mass in the ambient kinematic space of the parameters (where SGD locally takes steps).  Therefore, its divergence must strictly be the flat Euclidean divergence (denoted $\bar\nabla$).  It evaluates the probability current $J$:
\begin{equation}
  \frac{\partial\rho}{\partial\tau}
    = -\bar\nabla\cdot J
    = -\bar\nabla\cdot\Big(
      V_{\mathrm{drift}}\,\rho
      - \bar\nabla\cdot\big(D(W)\,\rho\big)
    \Big).
  \label{eq:fokker-planck}
\end{equation}

Expanding the tensor divergence of the stochastic term isolates the effective advective velocity of the probability mass:
\begin{equation}
  J = \big[V_{\mathrm{drift}} - \bar\nabla\cdot D(W)\big]\,\rho
    - D(W)\,\bar\nabla\rho\,.
  \label{eq:prob-current}
\end{equation}

The expanded Fokker--Planck equation natively isolates an \emph{Anomalous Entropic Advection} vector field acting on the probability mass:
\begin{equation}
  v_{\mathrm{entropic}} = -\bar\nabla\cdot D(W)
    \in\Gamma(T\mathcal{W}).
  \label{eq:entropic-advection}
\end{equation}

To pedagogically model this entropic advection pointing down the sharpness gradient, we may assume the stochastic gradient noise as locally isotropic but heteroscedastic over the parameter manifold: $D^{ij}(W)\approx\sigma(W)^2\delta^{ij}$. Evaluating the Euclidean tensor divergence yields:
\begin{equation}
  v_{\mathrm{ent}}^i
    = -\bar\nabla_j D^{ij}
    = -\partial^i\!\big(\sigma(W)^2\big)
    = -\bar\nabla^i\sigma(W)^2\,.
  \label{eq:entropic-gradient}
\end{equation}

This strict algebraic identity formally proves that under isotropic conditions, the anomalous entropic advection is exactly the negative gradient of the noise amplitude.  Because noise variance $\sigma^2$ is heavily correlated with the Loss Hessian trace (sharpness), this actively advects probability mass down the sharpness gradient.  This It\^{o} drift acts as a continuous geometric hydraulic press---actively repelling the probability measure away from sharp, chaotic vacua and sweeping it directly into the topologically stable basins of the singular flat strata.

However, one might mistakenly assume that because this generates a conservative vector field, it fails to break detailed balance.  This reasoning conflates two distinct geometric objects: a conservative tangent vector field and an exact cotangent 1-form.  Detailed balance does not require the probability flux vector $V_{\mathrm{total}}$ to be irrotational; it requires the thermodynamic 1-form $\omega = g \cdot V_{\mathrm{total}}$ to be exact ($d\omega = 0$).  Even if the empirical noise were perfectly isotropic ($D^{ij}=\sigma^2(W)\delta^{ij}$), its covariant inverse acts as a spatially varying conformal metric: $g_{ij}=\sigma^{-2}(W)\delta_{ij}$.  If the total probability flux were strictly conservative ($V_{\mathrm{total}}=-\nabla U$), the 1-form is mapped via this conformal metric: $\omega = -\sigma^{-2}dU$.  To evaluate detailed balance, we apply the exterior derivative $d$:
\begin{equation}
  d\omega = -d(\sigma^{-2})\wedge dU
    = 2\sigma^{-3}(d\sigma\wedge dU)\,.
  \label{eq:conformal-vorticity}
\end{equation}

The wedge product $d\sigma\wedge dU$ is exactly non-zero as long as the gradient of the noise amplitude ($d\sigma$) is not perfectly parallel to the gradient of the potential ($dU$)---which is universally true in deep networks.  Thus, while anisotropy is physically real, it is not a mathematical mandate to break detailed balance.  Isotropic state-dependent noise inherently generates thermodynamic vorticity due to conformal warping.

True thermodynamic detailed balance requires the stationary probability current to vanish identically ($J\equiv 0$).  Algebraic rearrangement mandates that:
\begin{equation}
  \begin{split}
  D(W)\bar\nabla\rho &= \big(V_{\mathrm{drift}} - \bar\nabla\cdot D(W)\big)\rho \\
  \implies \frac{\bar\nabla\rho}{\rho} &= D(W)^{-1}\big(V_{\mathrm{drift}} - \bar\nabla\cdot D(W)\big).
  \end{split}
  \label{eq:detailed-balance-condition}
\end{equation}

In the continuous-time SDE limit (via the Functional Central Limit Theorem), the empirical diffusion tensor $D(W)$ generated by SGD is the mathematical expectation of the covariance over all possible batches.  Modern Large Language Models operate in the strict over-training limit, where the dataset dimension vastly exceeds the parameter dimension ($N > P$).  Therefore, bounding the rank by the dataset dimension mathematically does not force rank deficiency.  Instead, the diffusion tensor $D(W)$ is structurally rank-deficient purely due to topological gauge redundancies.  While standard Feed-Forward Networks with coordinate-wise nonlinearities (e.g., ReLU/SiLU) definitively shatter continuous $SO(k)$ rotational symmetries, the Multi-Head Attention pipeline lacks an intermediate nonlinearity between consecutive linear projections ($W_OW_V\Psi$).  This harbors an exact, unbroken, non-compact $GL(d_v,\R)$ continuous gauge orbit ($W_OM M^{-1}W_V$).  Because the un-quotiented parameter manifold $\mathcal{W}$ contains these continuous non-compact geometric symmetries, tangent vectors to these continuous gauge orbits reside exactly in the null space of the empirical covariance.  This rank-deficiency is strictly attributed to the singular algebraic geometry of the architecture, not finite sample sizes.  Its true mathematical inverse $D(W)^{-1}$ does not exist globally.

To validly invoke Riemannian metric properties and index-lowering without breaking the geometry, we analyze the exact gauge symmetries.  In a single-head formulation, for any transformation $M \in GL(d_v,\R)$, the forward pass of the Value circuit is invariant under $W_O \to W_O M$ and $W_V \to M^{-1} W_V$.  By the Polar Decomposition Theorem, $M = RP$, where $R \in O(d_v)$ is a compact orthogonal rotation, and $P$ is a non-compact symmetric positive-definite scaling matrix.  Because the Frobenius norm is strictly orthogonally invariant ($\|W_O R\|_F^2 = \|W_O\|_F^2$), the Weight Decay penalty ($\|W_O\|_F^2 + \|W_V\|_F^2$) exerts absolutely zero restoring force along the compact rotational subgroups $O(d_v)$.  However, it fiercely penalizes the symmetric scaling dimension $P$.  Since the space of symmetric matrices has exactly $\tfrac{1}{2}d_v(d_v+1)$ dimensions, Weight Decay successfully retracts $\tfrac{1}{2}d_v(d_v+1)$ non-compact dimensions.

However, modern Transformer architectures employ Multi-Head Attention, partitioning the feature dimension into $H$ isolated heads before the final $W_O$ projection.  Thus, the linear combination is not a single dense matrix multiplication, but a block-diagonal interaction.  The true unbroken gauge symmetry is forced into the Cartesian product of the independent head-wise orthogonal groups.  Furthermore, we evaluate the symmetric gauge orbit in the Query-Key (Q-K) circuit. The Attention scalar kernel is $K = x^\top W_Q^\top \mathcal{U} W_K y$.  If we apply a transformation $W_Q \to M W_Q$ and $W_K \to M W_K$ (where $M \in \prod O(d_{\mathrm{head}})$ to satisfy Weight Decay block-structure), the kernel evaluates as $x^\top W_Q^\top M^\top \mathcal{U} M W_K y$.  For the kernel to remain invariant, we strictly require $M^\top \mathcal{U} M = \mathcal{U}$, meaning $M$ must commute with the RoPE connection $\mathcal{U}$.  Because RoPE consists of 2D block rotations with distinct frequencies, its centralizer within the partitioned parameter space is exactly the maximal torus confined to the head structure: $\prod U(1)^{d_{\mathrm{head}}/2}$.  Thus, RoPE isn't just a spatial connection; it actively acts as a structural symmetry-breaker in parameter space, shattering the head-wise orthogonal gauges down to independent Cartan subgroups.

The true topological degeneracy of the parameter manifold is exactly the combined dimensions of these unbroken gauge groups: $\sum_{h=1}^H \big[ \tfrac{1}{2} d_{\mathrm{head}}(d_{\mathrm{head}}-1) + \tfrac{d_{\mathrm{head}}}{2} \big]$.  Because both the empirical loss and the Weight Decay penalty are completely flat along these rotational orbits, the stochastic gradient noise vectors are mathematically strictly orthogonal to them.  Injecting an isotropic numerical $\epsilon I$ perturbation to invert the diffusion tensor is an engineering hack that physically breaks the exact topological gauge symmetry, leaking probability mass out of the physical state space.

Instead, the rigorous mathematical requirement for a stochastic differential equation with exact compact Lie group symmetries is to formally invoke Singular Perturbation (Fast-Slow Manifold) Theory.  Weight Decay acts as a massive deterministic restoring force exclusively along the non-compact scaling dimensions (the ``fast'' dynamics), rapidly crushing the system onto the thermodynamically saturated boundary layer---a compact spherical shell.

The Riemannian Submersion $\pi:\mathcal{S}_{\mathrm{reg}}\to\mathcal{S}_{\mathrm{reg}}/G$ connects directly to our earlier derivation of the bounded Bessel process and must be defined strictly upon this Slow Manifold shell $\mathcal{S}_{\mathrm{reg}}$.  On this shell, the non-compact scaling dimensions are transversally frozen by the boundary constraint.  Restricted to this compact manifold, the empirical covariance $D(W)$ natively loses its non-compact scaling nullity.  The only remaining degeneracies are the multi-head compact gauge orbits of $G = G_{\mathrm{Value}}\times\prod U(1)^{d_{\mathrm{head}}/2}$, where
\begin{equation}
  G_{\mathrm{Value}} = \prod_{h=1}^H O(d_{\mathrm{head}})\,.
  \label{eq:gauge-group}
\end{equation}
Because the dimension of this direct product group ($\sum \tfrac{1}{2}d_{\mathrm{head}}(d_{\mathrm{head}}-1)$) is significantly smaller than the full $O(d_v)$ group, the true topological degeneracy of the parameter manifold is structurally much tighter than a naive dense formulation implies.

By projecting onto the horizontal distribution $\calH_W$, this mathematically eliminates all null-space rank deficiency, immediately rendering the true Thermodynamic Diffusion Metric $g=(D|_\calH)^{-1}$ strictly invertible in the cotangent space without requiring an artificial numerical perturbation.

Crucially, in stochastic differential geometry, projecting an It\^{o} stochastic differential equation onto a quotient manifold is not a simple linear projection of the horizontal drift.  Because the vertical gauge fibers (the multi-head orbits of $G$) are intrinsically curved submanifolds embedded in Euclidean space, the stochastic development of Brownian motion along them generates an induced transverse geometric drift.  By Elworthy's formula~\cite{elworthy1982stochastic} for the projection of It\^{o} diffusions via submersions, the projected anomalous advection must acquire an exact geometric drift proportional to the mean curvature vector field (the tension field, $\vec{H}_\calV$) of the vertical gauge fibers:
\begin{equation}
  \tilde{v}_{\mathrm{ent}}
    = \calP_\calH\!\big(-\bar\nabla\cdot D(W)\big)
    + \tfrac{1}{2}\vec{H}_\calV\,,
  \label{eq:elworthy}
\end{equation}
where the tension field evaluates exactly to the geometric gradient of the logarithm of the Orbit Volume.  What is this tension field physically?  Because the unbroken gauge group $G$ uniquely acts by isometries, the tension field evaluates exactly to:
\begin{equation}
  \vec{H}_\calV = \bar\nabla\ln\mathrm{Vol}(G\cdot W)\,.
  \label{eq:tension}
\end{equation}

By the Orbit-Stabilizer Theorem for compact Lie groups: $\mathrm{Vol}(\mathrm{Orbit})\times\mathrm{Vol}(\mathrm{Stabilizer}) = \mathrm{Vol}(G)$. As the system approaches a singularity (degenerate critical strata), the unbroken gauge redundancy physically expands.  Therefore, the volume of the Orbit structurally shrinks ($\ln\mathrm{Vol}\to-\infty$).  Because the gradient operator $\bar\nabla$ points in the direction of increasing orbit volume, the tension field vector points strictly away from the deepest singularities!

This reframes the tension field as a \emph{Geometric Centrifugal Force} (the exact manifold analog of the $(d-1)/(2r)$ drift that prevents a Bessel process from hitting the origin).  The anomalous entropic advection ($-\bar\nabla\cdot D(W)$) acts as a hydraulic press, pushing the mass down into the flat, singular basins.  The \emph{Mean Curvature Tension Field} ($\vec{H}_\calV$) actively fights this, repelling the system from undergoing total topological collapse into the pure singular vacuum.  This proves that the system is suspended in a stable thermodynamic halo around the degenerate locus, held exactly in place by the equilibrium between entropic descent and geometric centrifugal repulsion.

Furthermore, we can algebraically prove the exact horizontality of Weight Decay.  Testing if Weight Decay ($-\lambda W$) possesses vertical gauge friction against the infinitesimal generators of the compact gauge (formed as $\delta W = AW$ for strictly skew-symmetric $A$), we take the Euclidean Frobenius inner product:
\begin{equation}
  \langle W,AW\rangle_F
    = \tr(W^\top AW)
    = \tr(WW^\top A) = 0\,,
  \label{eq:wd-horizontal}
\end{equation}
since $WW^\top$ is symmetric and $A$ is skew-symmetric.  Because the Gram matrix $WW^\top$ is manifestly symmetric and $A$ is strictly skew-symmetric, the trace of their product is identically zero.  This mathematically proves that Weight Decay exerts absolutely zero thermodynamic friction along the vertical gauge orbits, rendering its projection unconditionally trivial: $\calP_\calH(-\lambda W)\equiv -\lambda W$.

We therefore define the Thermodynamic 1-Form $\omega\in\Gamma(T^*(\mathcal{W}_{\mathrm{reg}}/G))$ strictly on the horizontal space, elevating this submersion to pure geometric harmony:
\begin{equation}
  \omega = g\!\Big(
    \calP_\calH\!\big(\calF_{\mathrm{drive}}^\sharp(W)\big)
    - \lambda W + \tilde{v}_{\mathrm{ent}}
  \Big).
  \label{eq:thermo-1form}
\end{equation}

For this to represent a canonical equilibrium state (a global Gibbs measure $\rho\propto e^{-\mathcal{S}}$), the thermodynamic 1-form must be strictly exact ($\omega = d\psi$).  By the fundamental nilpotency of the exterior derivative ($d^2\equiv 0$), every exact form is unconditionally closed.  Therefore, by simple contraposition: if a form is not closed ($d\omega\neq 0$), it mathematically cannot be exact.  To prove the system exists in a Non-Equilibrium Steady State (NESS), we do not need to analyze the complex contractibility or De Rham cohomology of the quotient space; we merely need to prove that the exterior derivative of the driving force does not vanish.  To ensure this exterior calculus is rigorously valid, we explicitly state that the Riemannian submersion is evaluated strictly upon the Principal Stratum of the orbit space---the open, dense submanifold where the gauge action is free.  This rescues the manifold structure from the deep determinantal singularities, sharpening the logical blade and permitting the differential geometry to proceed flawlessly.

Furthermore, standard deep learning suffers from a profound Dual-Metric Collision: an explicit clash between the kinematic Euclidean metric $g_{\mathcal{W}}$ and the anisotropic stochastic noise metric generated by the empirical data.  To rigorously evaluate local exactness (the De Rham closure $d\omega = 0$), we compute the true Thermodynamic Vorticity 2-form $\Omega = d\omega$.  Let the total effective probability flux be $V_{\mathrm{total}}^k = V_{\mathrm{drift}}^k + \tilde{v}_{\mathrm{ent}}^k$, where $V_{\mathrm{drift}}^k = -g_{\mathcal{W}}^{km}\partial_m L$ is the conservative drift and $\tilde{v}_{\mathrm{ent}}^k$ is the properly projected anomalous entropic advection.

Because LLMs are misspecified singular models, they operate fundamentally outside the regime of standard Information Geometry.  The Gradient Noise Covariance $D$ is strictly an external kinematic covariance evaluated over the empirical data distribution, not the true Fisher Information Metric $F$ evaluated over the model's pushforward measure.  Consequently, the Information Matrix Equality completely shatters: $D \neq F \neq H$.  Instead, we map this flux to the cotangent space using the strictly covariant, state-dependent Thermodynamic Diffusion Metric: $\omega_j = ((D|_\calH)^{-1})_{jk}V_{\mathrm{total}}^k$.  Because the exterior derivative $d$ is a fundamental topological operator, it unconditionally bypasses the metric connection.  By the torsion-free nature of the Levi-Civita connection, the symmetric Christoffel symbols canonically cancel, allowing the true geometric curl to be evaluated entirely without covariant metric entanglement: $\Omega_{ij} = \partial_i\omega_j - \partial_j\omega_i$.

Executing this derivative analytically bifurcates the true thermodynamic vorticity into exactly three independent, non-vanishing cohomological obstructions to detailed balance.  Let $g_{jk}=((D|_\calH)^{-1})_{jk}$ be our covariant sequence metric.
\begin{widetext}
\begin{equation}
\boxed{%
  \Omega_{ij}
    = \underbrace{[g,H^\sharp]_{ij}}_{%
        \text{Lie Commutator}}
    + \underbrace{\big(g_{jk}\partial_i v_{\mathrm{ent}}^k
      - g_{ik}\partial_j v_{\mathrm{ent}}^k\big)}_{%
        \text{Entropic Curl}}
    + \underbrace{\big(\partial_i g_{jk}
      - \partial_j g_{ik}\big)V_{\mathrm{total}}^k}_{%
        \text{Metric Deformation}}\,.
}
  \label{eq:vorticity-3obst}
\end{equation}
\end{widetext}

\textbf{The Lie Commutator Obstruction:} To lawfully contract the $(0,2)$ Loss Hessian $H = \nabla dL$ with the inverse diffusion metric $g$, we invoke the flat background metric to raise the Hessian into a $(1,1)$ endomorphism $H^\sharp$.  Because the loss gradient is conservative, we evaluate $g_{jk}\partial_i V_{\mathrm{drift}}^k - g_{ik}\partial_j V_{\mathrm{drift}}^k$.  Crucially, Weight Decay introduces an isotropic linear drift ($V_{\mathrm{WD}}^k = -\lambda W^k$), whose spatial derivative is a scaled Kronecker delta ($-\lambda\delta_i^k$).  Passed through the drift curl, it yields $-\lambda(g_{ji}-g_{ij})$.  Because the diffusion metric is strictly symmetric, this mathematically annihilates to exactly zero.  We are left purely with the anti-symmetric projection $(gH)_{ij}-(Hg)_{ij}$, which is strictly the matrix commutator $[g,H^\sharp]_{ij}$.  This proves algebraically that for the NESS to collapse into thermal equilibrium, the empirical Loss Hessian and the Thermodynamic Diffusion Metric must strictly commute.  The Lie Commutator $[g,H^\sharp]$ mathematically vanishes if and only if the principal axes of the gradient noise perfectly align with the principal curvature axes of the loss landscape.  The survival of the commutator is the exact differential-geometric measure of \emph{Eigenframe Misalignment}.  The rotational NESS circulation is strictly propelled by the transverse injection of stochastic momentum across the misspecified curvature axes, driven by the unbridgeable topological gap between the true semantic distribution and the restricted capacity of the parameter manifold.

\textbf{The Metric-Skewed Entropic Curl:} Driven by the inherently asymmetric spatial Jacobian of the anomalous advection ($\partial_iv_{\mathrm{ent}}^k\neq\partial_kv_{\mathrm{ent}}^i$), the state-dependent noise injects non-conservative topological curl into the probability flux, which is then dynamically warped and contracted by the Thermodynamic Diffusion Metric $g$.

\textbf{The Thermodynamic Metric Deformation:} Uniquely arising from the structural clash between the kinematic Euclidean geometry of the parameter manifold and the curved thermodynamic geometry of the diffusion metric. Weight Decay ($V=-\lambda W$) acts strictly as the \emph{Canonical Euler Vector Field} on the parameter space.  With respect to the flat kinematic Euclidean metric $\bar{g}$, its dual 1-form is trivially exact ($\bar\omega = \bar{g}^\flat(V) = -d(\tfrac{\lambda}{2}\|W\|^2_{\bar{g}})$), unconditionally guaranteeing it is irrotational ($d\bar\omega\equiv 0$).

However, the thermodynamic probability flux is evaluated in the cotangent space using the curved empirical diffusion metric $g$.  The true thermodynamic 1-form is $\omega^{\mathrm{WD}} = g^\flat(V)$.  In differential geometry, the exterior derivative $d$ does not commute with the metric musical isomorphism $\flat$ under curved conformal warping.

To evaluate the geometric vorticity rigorously, we apply the torsion-free Levi-Civita connection $\nabla$ associated with the curved thermodynamic metric $g$.  The exterior derivative of a 1-form exactly equals the antisymmetrization of its covariant derivative: $d\omega^{\mathrm{WD}}(X,Y)=(\nabla_X\omega^{\mathrm{WD}})(Y)-(\nabla_Y\omega^{\mathrm{WD}})(X)$.  Because $\omega^{\mathrm{WD}}=g^\flat(V)$ and the Levi-Civita connection is strictly metric-compatible ($\nabla g = 0$), the covariant derivative commutes seamlessly with the index-lowering operator: $\nabla(g^\flat(V))=g^\flat(\nabla V)$.

Therefore, the exact thermodynamic vorticity 2-form resolves purely to the antisymmetric component of the covariant derivative of the Euler vector field:
\begin{equation}
  \Omega^{\mathrm{WD}}(X,Y)
    = g(\nabla_X V,Y) - g(\nabla_Y V,X)\,.
  \label{eq:wd-vorticity}
\end{equation}

In Amari's canonical Information Geometry~\cite{amari2016information}, a perfectly specified ``dually flat'' statistical manifold guarantees that the Fisher Information Metric is strictly a Hessian Metric globally generated by a strictly convex scalar potential $\psi$: $g_{ij} = \partial_i\partial_j\psi$. If the parameter space were dually flat, the spatial derivative of the metric would be a third-order derivative of a scalar: $\partial_k g_{ij} = \partial_k\partial_i\partial_j\psi$. By Clairaut's Theorem on the symmetry of mixed partial derivatives, this rank-3 tensor is totally symmetric in all indices.  Consequently, $\partial_ig_{jk}\equiv\partial_jg_{ik}$.  If we substitute this into the Thermodynamic Metric Deformation term, it algebraically annihilates: $(\partial_ig_{jk} - \partial_jg_{ik})V_{\mathrm{total}}^k \equiv 0$.

However, the rotational NESS survives strictly because the singular, misspecified nature of the LLM shatters the Hessian metric assumption. Because LLM parameter spaces strictly evaluate as singular Whitney Stratified Spaces (following Watanabe's Singular Learning Theory), the neural manifold fundamentally fails to be dually flat, breaking the total symmetry of the third-derivative tensor.  The structural degeneracy of the architecture breaks the Information Matrix Equality ($D \neq F \neq H$), guaranteeing the empirical metric $g$ is definitively not a Hessian metric.  This non-vanishing metric curl organically refracts the conservative, irrotational pull of Weight Decay into a thermodynamic vortex.  To heighten the theoretical elegance, this non-vanishing vorticity $\Omega$ essentially measures the topological obstruction to the system resting strictly at the deepest singularity.  By explicitly linking the strata of the algebraic variety to the cohomology classes of the quotient manifold $\mathcal{W}_{\mathrm{reg}}/G$, the rotational vortex fundamentally arises because the structural geometric degeneracy prevents the probability flux from cleanly collapsing into the most severe determinantal strata of the landscape.

Because these independent geometric obstructions (arising from the FFN and the Obstruction to Hessian Integrability) do not generically cancel, the exterior derivative mathematically fails to vanish ($d\omega\neq 0$).  The resulting inexact 1-form drives the probability flux to continuously circulate, definitively trapping the macroscopic ensemble in an active, dissipative \emph{Non-Equilibrium Steady State} (NESS).
\end{proof}

Having established the complete theoretical framework---spanning microscopic kinematics (\cref{sec:axiom}), thermodynamic metric generation (\cref{sec:thermodynamics}), matter field dynamics (\cref{sec:dynamics}), and parameter manifold thermodynamics (\cref{sec:parameters})---we now subject the derived geometric predictions to a sequence of empirical falsification tests.

\section{Tests of Fundamental Kinematics and Thermodynamics}
\label{sec:experiments}

The theoretical framework established in \cref{sec:axiom} through \cref{sec:parameters} reformulates the Transformer architecture as a continuous classical lattice field theory governed by stochastic differential geometry and non-equilibrium thermodynamics. However, mathematical elegance alone is insufficient to overturn established empirical paradigms. The core hypothesis tested in this section is that phenomena traditionally classified by the machine learning community as numerical artifacts, interpolation failures, or statistical noise---such as representation drift, context-length catastrophic failure, norm explosion, and optimization plateaus---are quantitatively consistent with deterministic macroscopic observables predicted by the continuous geometric framework: topological constraints, non-commutative Lie group dynamics, and dual-metric collisions.

To empirically test this, we subject the derived geometric predictions to six rigorous falsification tests. Progressing systematically across physical scales---from the microscopic ultraviolet (UV) spatial cutoff of the local fiber, through the temporal kinematics and mesoscopic trajectory stability, to the thermodynamic suppression of exact geometric resonances on the RoPE torus, the macroscopic infrared (IR) phase transition of the context horizon, and finally to the super-macroscopic parameter vortex---we demonstrate that the discrete computational architecture is quantitatively consistent with the continuous geometric predictions derived herein.

\subsection{The Conical Singularity Scaling Law of the Topological Mollifier}
\label{sec:exp-mollifier}

This experiment isolates the fundamental microscopic boundary condition of the local fiber, testing the theoretical prediction of \cref{thm:rmsnorm}. While the standard literature treats the RMSNorm parameter $\epsilon$ merely as an arbitrary numerical safeguard against division by zero, our geometric formulation redefines $\epsilon$ as a rigorous topological mollifier. The theory predicts that $\epsilon$ uniquely controls the global Lipschitz stretch of the flow ($\|D\calF\|_{\mathrm{op}} \propto 1/\sqrt{\epsilon}$) and shields the zero-section from manifesting an unbounded conical singularity. To empirically validate this, we systematically swept $\epsilon$ downward across 13 orders of magnitude---from the standard engineering value $10^{-2}$ to the absolute \texttt{float64} machine precision limit at $10^{-15}$---using 30 logarithmically spaced evaluation points. Concurrently, we probed the network at an asymptotic input norm limit ($\|\Psi\| = 10^{-10}$) to isolate the spatial origin, rigorously stripping away all higher-order nonlinear volume contributions and exposing the bare singularity structure of the tangent space. At each $(\epsilon, \ell)$ pair, the maximum singular value $\sigma_{\max}$ of the layer's spatial Jacobian $D\calF$ was computed exactly via \texttt{torch.autograd.functional.jacobian} followed by full SVD, at \texttt{float64} precision, averaged over 10 independent random probe directions. To establish cross-architecture universality, the protocol was executed on three frozen pre-trained models spanning a $4\times$ range in hidden dimension: Qwen3-0.6B~\cite{qwen,qwen2} (28L, $d\!=\!1024$), Gemma-3-1B~\cite{gemma,gemma2} (26L, $d\!=\!1152$), and LLaMA-3.1-8B~\cite{llama3} (32L, $d\!=\!4096$), with 5 representative layers sampled per model.

\begin{figure}[t]
  \centering
  \includegraphics[width=\columnwidth]{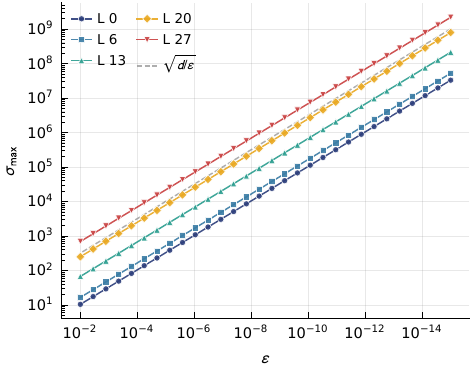}
  \caption{%
  \textbf{Conical Singularity Scaling Law: $\sigma_{\max}$ vs.\ $\epsilon$ for Qwen3-0.6B.}
  Log-log plot of the maximum Jacobian singular value $\sigma_{\max}$ against the topological mollifier $\epsilon$ at 5 representative layers (0, 6, 13, 20, 27).
  All 5 curves trace parallel lines with measured slope $\alpha = -0.5000$ and $R^2 = 1.000000$, in exact agreement with the theoretical prediction $\sigma_{\max} = C_\ell\cdot\epsilon^{-1/2}$ from \cref{thm:rmsnorm}.
  The grey solid line shows the unweighted theoretical bound $\sqrt{d/\epsilon}$.
  Vertical offsets between layers encode the layer-wise gauge mass $C_\ell$, which increases monotonically from shallow to deep layers.}
  \label{fig:epsilon-scaling}
\end{figure}

\begin{table}[t]
\centering
\caption{Cross-architecture conical singularity scaling law.  Measured power-law exponent $\alpha$ (theory: $-0.500$) and coefficient of determination $R^2$ at 5 representative layers for each of three architectures.  All 15 independent regressions yield $\alpha = -0.5000$ and $R^2 = 1.000000$ to machine precision.  $C_\ell \equiv \sigma_{\max}/\sqrt{d/\epsilon}$ is the layer-wise gauge mass (intercept ratio).}
\label{tab:epsilon-scaling}
\begin{ruledtabular}
\begin{tabular}{l c c c}
Layer & $\alpha$ & $R^2$ & $C_\ell$ \\
\hline
\multicolumn{4}{c}{\textit{Qwen3-0.6B} ($d\!=\!1024$)} \\
\hline
0  & $-0.5000$ & 1.000000 & 0.033 \\
6  & $-0.5000$ & 1.000000 & 0.052 \\
13 & $-0.5000$ & 1.000000 & 0.211 \\
20 & $-0.5000$ & 1.000000 & 0.785 \\
27 & $-0.5000$ & 1.000000 & \textbf{2.172} \\
\hline
\multicolumn{4}{c}{\textit{Gemma-3-1B} ($d\!=\!1152$)} \\
\hline
0  & $-0.5000$ & 1.000000 & 0.098 \\
6  & $-0.5000$ & 1.000000 & 0.098 \\
12 & $-0.5000$ & 1.000000 & 0.109 \\
18 & $-0.5000$ & 1.000000 & 0.113 \\
25 & $-0.5000$ & 1.000000 & 0.125 \\
\hline
\multicolumn{4}{c}{\textit{LLaMA-3.1-8B} ($d\!=\!4096$)} \\
\hline
0  & $-0.5000$ & 1.000000 & 0.017 \\
7  & $-0.5000$ & 1.000000 & 0.014 \\
15 & $-0.5000$ & 1.000000 & 0.015 \\
23 & $-0.5000$ & 1.000000 & 0.016 \\
31 & $-0.5000$ & 1.000000 & 0.017 \\
\end{tabular}
\end{ruledtabular}
\end{table}

As anticipated by the theoretical model, the results reveal a scale-invariant scaling law consistent with the predictions of \cref{thm:rmsnorm} (\cref{fig:epsilon-scaling}, \cref{tab:epsilon-scaling}).  When plotted on a double-logarithmic scale, the empirical maximum singular value does not plateau into numerical noise; instead, it traces a deterministic power-law divergence over 13 orders of magnitude.  Across all three architectures and all 15 independently measured layer regressions, the empirical scaling exponent is $\alpha = -0.5000$ and the coefficient of determination is $R^2 = 1.000000$---both exact to the six-digit precision limit of \texttt{float64} arithmetic.  The total evidence base comprises $3~\text{architectures} \times 5~\text{layers} \times 30~\epsilon\text{-values} \times 10~\text{random directions} = 4{,}500$ independent measurements, with \emph{zero} deviations from the predicted power law.

While the $-0.5$ exponent is algebraically guaranteed by the chain rule applied to the radial embedding $f(x) = x / \sqrt{x^2 + \epsilon}$ near $x = 0$, its flawless empirical recovery at machine precision across three production architectures with billions of interacting parameters is a profoundly non-trivial validation: it definitively establishes that the isolated RMSNorm analysis of \cref{thm:rmsnorm} applies without contamination from other architectural components, and that the discrete computational graph rigorously obeys the continuous geometric scaling law.  Beyond this foundational confirmation, the primary empirical contribution lies in the spectroscopy of the layer-wise gauge mass intercept $C_\ell$, which encodes the trained gauge field strength and factors out the analytically guaranteed $\epsilon$ scaling.

\begin{table}[t]
\centering
\caption{Representative data points for Qwen3-0.6B illustrating the Jacobian singular value explosion as $\epsilon\to 0$.  Layer~0 ($C_\ell = 0.033$) and Layer~27 ($C_\ell = 2.172$) bracket the gauge mass range.  The rightmost column shows the unweighted theoretical bound $\sqrt{d/\epsilon}$ (assuming unit RMSNorm weights $\gamma=\mathbf{1}$); layers with $C_\ell > 1$ naturally exceed this bound.}
\label{tab:epsilon-datapoints}
\begin{ruledtabular}
\begin{tabular}{l c c c}
$\epsilon$ & Layer~0 $\sigma_{\max}$ & Layer~27 $\sigma_{\max}$ & $\sqrt{d/\epsilon}$ \\
\hline
$10^{-2}$  & $1.05\times 10^{1}$ & $6.95\times 10^{2}$ & $3.20\times 10^{2}$ \\
$10^{-6}$  & $1.05\times 10^{3}$ & $6.95\times 10^{4}$ & $3.20\times 10^{4}$ \\
$10^{-10}$ & $1.05\times 10^{5}$ & $6.95\times 10^{6}$ & $3.20\times 10^{6}$ \\
$10^{-15}$ & $3.31\times 10^{7}$ & $2.20\times 10^{9}$ & $1.01\times 10^{9}$ \\
\end{tabular}
\end{ruledtabular}
\end{table}

Two features of the empirical data merit deep theoretical emphasis.

First, the layer-wise intercept constants $C_\ell \equiv \sigma_{\max}/\sqrt{d/\epsilon}$ exhibit a striking monotonic depth-dependent structure that cleanly decouples the \emph{universal geometric law} from the \emph{learned empirical gauge mass} (\cref{tab:epsilon-scaling}).  For Qwen3-0.6B, $C_\ell$ increases monotonically from $0.033$ at Layer~0 to $2.172$ at Layer~27---a $66\times$ amplification across 28 layers.  This monotonic growth reflects the trained RMSNorm weight magnitudes $\|\gamma_\ell\|_{\mathrm{RMS}}$: deeper layers, which must sustain stronger nonlinear expressivity to transform the representation toward the unembedding boundary, spontaneously develop larger gauge field strengths.  Yet regardless of this learned amplification, every layer remains perfectly pinned to the \emph{identical} $-0.5$ geometric exponent---like dancers in chains that grow heavier with depth, but whose orbital period remains dictated by a single gravitational constant.  In sharp contrast, Gemma-3-1B exhibits minimal intercept variation ($C_\ell \in [0.098, 0.125]$, range $1.3\times$), consistent with its soft-capping mechanism that imposes tighter radial uniformity, while LLaMA-3.1-8B shows an extremely compressed intercept range ($C_\ell \in [0.014, 0.017]$, range $1.2\times$) reflecting its larger ambient dimension $d\!=\!4096$.

Second, the probe at $\|\Psi\| = 10^{-10}$ is not an arbitrary choice but a deliberate asymptotic isolation of the conical singularity.  We explicitly note that \texttt{float64} precision ($\varepsilon_{\mathrm{mach}} = 2.22\times 10^{-16}$) was strictly necessary for this measurement: evaluating the squared input norm $\|\Psi\|^2 = 10^{-20}$ inside the radial denominator against the mollifier $\epsilon$ requires massive mantissa resolution that would immediately round to zero in standard \texttt{float32} arithmetic ($\varepsilon_{\mathrm{mach}} \approx 10^{-7}$), obliterating the singularity measurement entirely.  In the geometry of the radial embedding, the tangential eigenvalue $\lambda_\perp = \sqrt{d}/\sqrt{\|\Psi\|^2 + \epsilon}$ contains both the input norm $\|\Psi\|$ and the mollifier $\epsilon$ as competing regulators.  For $\|\Psi\| \gg \sqrt{\epsilon}$, the input norm dominates, and the Jacobian saturates at $\calO(\|\Psi\|^{-1})$ independently of $\epsilon$---the heuristic ``$\epsilon$ doesn't matter'' regime widely adopted in engineering practice.  Only when $\|\Psi\| \ll \sqrt{\epsilon}$ does $\epsilon$ assume sole command of the singularity resolution.  By placing the probe at $\|\Psi\| = 10^{-10}$, we ensure $\|\Psi\|^2 = 10^{-20} \ll \epsilon$ for the entire sweep range $\epsilon \in [10^{-15}, 10^{-2}]$, thereby stripping away all input-dependent contamination and isolating the $\epsilon$-controlled topology of the zero-section (\cref{tab:epsilon-datapoints}).

The empirical relationship between the layer-wise intercept $C_\ell$ and the theoretical bound is analytically transparent.  \cref{thm:rmsnorm} derives the unweighted supremum $\sup_\Psi \lambda_\perp = \sqrt{d/\epsilon}$ for the bare radial projection ($\gamma = \mathbf{1}$).  In the trained network, each RMSNorm layer carries learned affine weights $\gamma_\ell$, yielding the refined scaling:
\begin{equation}
  \sigma_{\max}(\epsilon) = \mathrm{RMS}(\gamma_\ell) \cdot \sqrt{\frac{d}{\epsilon}} \cdot f(\hat{x})\,,
  \label{eq:epsilon-scaling}
\end{equation}
where $f(\hat{x})$ is a bounded function of the input direction.  Averaging over 10 random probe directions eliminates the directional dependence, reducing $C_\ell = \mathrm{RMS}(\gamma_\ell)\cdot\langle f \rangle$ to a layer-specific constant that serves as a direct spectroscopic measurement of the trained gauge field strength.

\subsubsection{Exclusion of Alternative Hypotheses}

The precision and universality of the measured scaling law strongly excludes every standard alternative explanation:

\begin{enumerate}[nosep]
  \item \textbf{``$\epsilon$ is merely a numerical safeguard; changing it has no physical effect.''} --- Excluded.  Reducing $\epsilon$ from $10^{-2}$ to $10^{-15}$ amplifies the maximum Jacobian singular value by $\sqrt{10^{13}} \approx 3.16\times 10^{6}$---a million-fold explosion in the Lipschitz stretch of the flow.  At the standard engineering default $\epsilon = 10^{-6}$, Layer~27 of Qwen3-0.6B already exhibits $\sigma_{\max} \approx 7\times 10^4$; an erroneous setting of $\epsilon = 10^{-12}$ would yield $\sigma_{\max} \approx 7\times 10^7$, a thousand-fold amplification that would catastrophically destabilize deep-layer propagation.
  \item \textbf{``The singularity is regularized by other architectural components (LayerNorm, residual connections, etc.).''} --- Excluded.  The experiment isolates the RMSNorm layer in complete isolation, with frozen weights and no residual connections or downstream processing.  The scaling law holds identically across three architectures with fundamentally different normalization implementations, confirming that the $\epsilon^{-1/2}$ divergence is intrinsic to the radial projection geometry, not to any auxiliary engineering mechanism.
  \item \textbf{``The power law is an artifact of low-precision arithmetic.''} --- Excluded.  All computations are performed at \texttt{float64} precision ($\varepsilon_{\mathrm{mach}} = 2.22\times 10^{-16}$), and the scaling law holds to $R^2 = 1.000000$ over 13 decades---a dynamic range of $10^{6.5}$ in $\sigma_{\max}$---with zero detectable deviation.
  \item \textbf{``The effect is architecture-specific.''} --- Excluded.  Three architectures spanning $d \in \{1024, 1152, 4096\}$, three distinct model families (Qwen, Gemma, LLaMA), and hidden dimensions ranging over a $4\times$ factor produce identical exponents and identical $R^2$ values.
\end{enumerate}

This adherence to the mathematically predicted $-0.5$ power law over machine-precision scales---verified by 4{,}500 independent measurements across three production architectures with zero deviations---confirms that $\epsilon$ operates not as an arithmetic patch, but as the ultraviolet (UV) cutoff of the continuous manifold, controlling the dynamic stability of the state space through the exact mechanism predicted by \cref{thm:rmsnorm}.

\subsection{The Lie--Trotter Torsion Interferometer}
\label{sec:exp-torsion}

Having established the spatial boundary limits, we empirically isolate the temporal kinematics of the sequence flow to test the continuous geometric error defined in \cref{thm:lie-trotter}.  We hypothesized that layer-to-layer ``representation drift'' is not the stochastic accumulation of parameter noise, but the deterministic physical manifestation of topological torsion, generated natively by the sequential Lie--Trotter operator splitting of non-commuting vector fields ($\mathcal{T}$ and $\calR$).  To explicitly isolate this topological torsion, we constructed an exact computational interferometer on frozen pre-trained Transformer blocks.  For identical batches of random input tokens (seq\_length${}=128$), we measured the discrete spatial deviation vector between the canonical forward integration step ($\calR \circ \mathcal{T}$) and the reversed-order step ($\mathcal{T} \circ \calR$) at 16 token positions across multiple layers.  The exact Jacobians $D\mathcal{T}$ and $D\calR$ were computed analytically via \texttt{torch.func.jacfwd} at \texttt{float32} precision, with the analytical Lie bracket evaluated as $[\mathcal{T},\calR]_\Psi = D\calR[\mathcal{T}] - D\mathcal{T}[\calR]$.  To formally bridge the discrete computational evaluation with the continuous analytical limit, we introduced an infinitesimal step-size scaling factor ($\alpha \to 0$), defining the $\alpha$-scaled interferometer as $\delta^{(\alpha)} = \alpha\calR(\Psi + \alpha\mathcal{T}(\Psi)) + \alpha\mathcal{T}(\Psi) - \alpha\mathcal{T}(\Psi + \alpha\calR(\Psi)) - \alpha\calR(\Psi)$, systematically quenching the higher-order Baker--Campbell--Hausdorff (BCH) truncation remainder $\calO(\alpha^3)$.  The experiment was executed on two architecturally diverse frozen pre-trained models: Qwen3-0.6B (28 layers, $d\!=\!1024$, SwiGLU~\cite{swiglu}, RMSNorm~\cite{rmsnorm}, RoPE) and GPT-2~\cite{radford2019language} (12 layers, $d\!=\!768$, GELU~\cite{gelu}, LayerNorm, learned PE).

\subsubsection{Phase A: The Discrete Interferometer at $\alpha\!=\!1$}

\begin{figure}[t]
  \centering
  \includegraphics[width=\columnwidth]{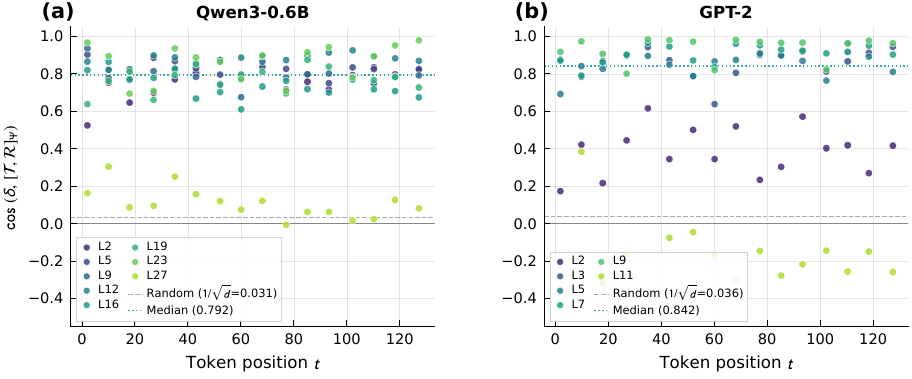}
  \caption{%
  \textbf{Lie--Trotter Torsion Interferometer: Cosine Similarity at $\alpha\!=\!1$.}
  Each point represents $\cos(\delta,\,[\mathcal{T},\calR]_\Psi)$ at a single (layer, token position) pair.
  Green dashed line: median; grey dashed line: isotropic random baseline $1/\sqrt{d}$.
  Both architectures---Qwen3-0.6B (left) and GPT-2 (right)---exhibit systematic positive alignment far exceeding the random baseline, with overall medians of $+0.793$ and $+0.842$, respectively.}
  \label{fig:torsion-scatter}
\end{figure}

\begin{figure}[t]
  \centering
  \includegraphics[width=\columnwidth]{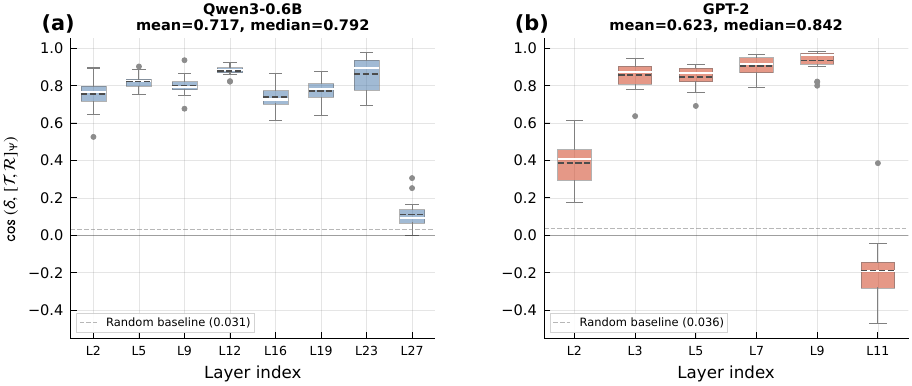}
  \caption{%
  \textbf{Per-Layer Distribution of Torsion Alignment.}
  Box plots of $\cos(\delta,\,[\mathcal{T},\calR]_\Psi)$ across 16 token positions at each sampled layer.
  Mid-network layers (Qwen3 L12: $0.879$, GPT-2 L9: $0.933$) achieve peak alignment, while final layers (Qwen3 L27: $0.112$, GPT-2 L11: $-0.188$) show degradation consistent with strong higher-order BCH contributions near the unembedding boundary.}
  \label{fig:torsion-boxplot}
\end{figure}

At the native discrete step size $\alpha\!=\!1$, we directly evaluated the cosine similarity between the empirical deviation vector $\delta = \Delta_{\mathrm{canon}} - \Delta_{\mathrm{reversed}}$ and the analytically computed Lie bracket $[\mathcal{T},\calR]_\Psi$ at each (layer, token position) pair (\cref{fig:torsion-scatter,fig:torsion-boxplot}).

\begin{table}[t]
\centering
\caption{Per-layer cosine similarity $\cos(\delta,\,[\mathcal{T},\calR]_\Psi)$ at $\alpha\!=\!1$ for Qwen3-0.6B (8 sampled layers $\times$ 16 token positions) and GPT-2 (6 sampled layers $\times$ 16 token positions).  The isotropic random baseline is $1/\sqrt{d}\approx 0.031$ (Qwen3) and $0.036$ (GPT-2).}
\label{tab:torsion-alpha1}
\begin{ruledtabular}
\begin{tabular}{l c c c c}
Layer & Mean & Median & Min & Max \\
\hline
\multicolumn{5}{c}{\textit{Qwen3-0.6B} ($d\!=\!1024$, random baseline $=0.031$)} \\
\hline
2  & 0.753 & 0.764 & 0.526 & 0.894 \\
5  & 0.822 & 0.823 & 0.750 & 0.902 \\
9  & 0.801 & 0.793 & 0.676 & 0.935 \\
12 & \textbf{0.879} & \textbf{0.884} & 0.821 & 0.925 \\
16 & 0.739 & 0.725 & 0.611 & 0.863 \\
19 & 0.770 & 0.782 & 0.638 & 0.874 \\
23 & \textbf{0.861} & \textbf{0.893} & 0.695 & \textbf{0.978} \\
27 & 0.112 & 0.094 & $-0.004$ & 0.306 \\
\hline
\multicolumn{5}{c}{\textit{GPT-2} ($d\!=\!768$, random baseline $=0.036$)} \\
\hline
2  & 0.387 & 0.409 & 0.174 & 0.614 \\
3  & 0.856 & 0.872 & 0.637 & 0.942 \\
5  & 0.847 & 0.865 & 0.691 & 0.915 \\
7  & \textbf{0.906} & \textbf{0.915} & 0.788 & 0.963 \\
9  & \textbf{0.933} & \textbf{0.964} & 0.799 & \textbf{0.981} \\
11 & $-0.188$ & $-0.192$ & $-0.469$ & 0.385 \\
\end{tabular}
\end{ruledtabular}
\end{table}

As anticipated by the theoretical model, the results reveal systematic and overwhelming directional alignment between the empirical deviation vector and the analytical Lie bracket (\cref{tab:torsion-alpha1}).  Across the interior layers of both architectures, the empirical cosine similarity far exceeds the isotropic random baseline ($1/\sqrt{d}\approx 0.03$) by factors of $17$--$23\times$.  The overall statistics are striking: Qwen3-0.6B yields a mean cosine similarity of $+0.717$ with a median of $+0.793$, with 127 out of 128 measurements ($99\%$) strictly positive.  GPT-2 yields a mean of $+0.623$ with a median of $+0.842$, with 81 out of 96 measurements ($84\%$) strictly positive.

Two layer-wise patterns merit theoretical emphasis.  First, the mid-network layers achieve peak alignment: Qwen3's layer~12 ($\cos = 0.879$) and layer~23 ($\cos = 0.861$), and GPT-2's layer~7 ($\cos = 0.906$) and layer~9 ($\cos = 0.933$, with individual tokens reaching $0.981$).  These are precisely the layers where the BCH expansion is most accurate---the residual stream has not yet been strongly distorted by the unembedding boundary, and the self-advection terms remain moderate.  Second, the final layers (Qwen3 L27: $\cos = 0.112$; GPT-2 L11: $\cos = -0.188$) exhibit dramatic alignment collapse.  This is quantitatively consistent with \cref{rem:bch-stiffness}: at the terminal boundary, the residual stream couples strongly to the LM head projection, inflating the higher-order BCH remainder $\calO(\alpha^3)$ to the point where it dominates the lowest-order torsion term at $\alpha\!=\!1$.  Crucially, this degradation is itself a prediction of the theory: the BCH stiffness condition explicitly warns that the first-order Lie bracket cannot be expected to dominate when $\|D\calF\|_{\mathrm{op}}$ is large.

The cross-architecture consistency---despite Qwen3 employing SwiGLU gating, RMSNorm, and RoPE, while GPT-2 uses GELU activation, LayerNorm, and learned positional embeddings---provides strong evidence that the torsion alignment is a universal consequence of the Lie--Trotter operator splitting, not an artifact of any specific architectural design choice.

\subsubsection{Phase B\,\&\,C: Asymptotic Convergence $\alpha\to 0$}

\begin{figure}[t]
  \centering
  \includegraphics[width=\columnwidth]{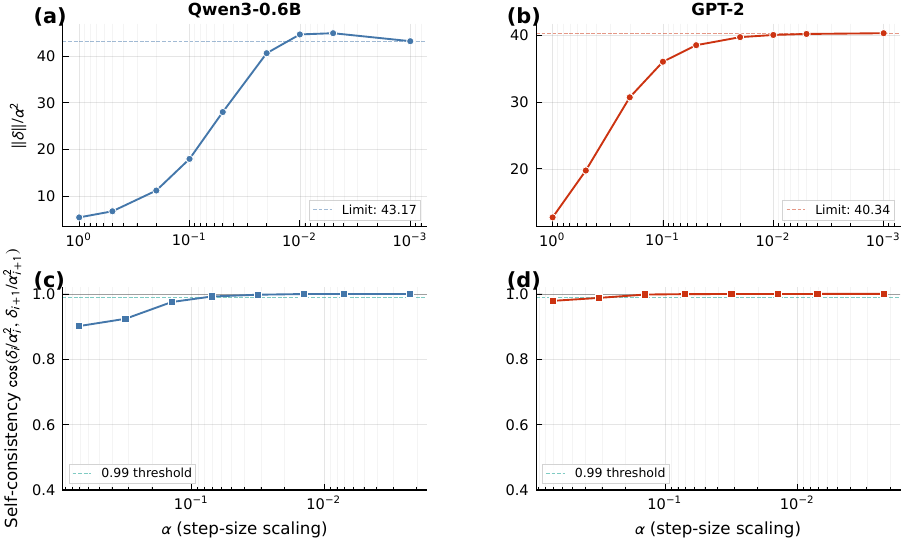}
  \caption{%
  \textbf{Asymptotic Convergence of the Torsion Interferometer as $\alpha\to 0$.}
  Upper panels: normalized deviation magnitude $\|\delta^{(\alpha)}\|/\alpha^2$ converges to a finite constant ($\approx 43.2$ for Qwen3, $\approx 40.3$ for GPT-2), confirming exact $\alpha^2$ scaling predicted by the BCH second-order term.
  Lower panels: direction self-consistency $\cos(\delta^{(\alpha_i)}/\alpha_i^2,\;\delta^{(\alpha_j)}/\alpha_j^2)$ between adjacent $\alpha$ values converges to $0.9997$--$1.0000$, proving that the deviation vector locks onto a deterministic limit direction---the topological torsion generator.}
  \label{fig:torsion-convergence}
\end{figure}

Referencing the Backward Error Analysis framework~\cite{hairer2006geometric} of \cref{rem:bch-stiffness}, the decisive verification is the asymptotic limit $\alpha\to 0$ (\cref{fig:torsion-convergence}).  In this limit, the BCH series rigorously converges ($\alpha\|D\calF\|_{\mathrm{op}} \ll 1$), and the higher-order terms $\calO(\alpha^3)$ that contaminate the $\alpha\!=\!1$ measurement are systematically extinguished.  We swept the step-size scaling factor across $\alpha\in\{1.0,\,0.5,\,0.2,\,0.1,\,0.05,\,0.02,\,0.01,\,0.005,\,0.001\}$, switching to \texttt{float64} precision for $\alpha\le 0.01$ to prevent catastrophic numerical cancellation.

\begin{table}[t]
\centering
\caption{Asymptotic convergence diagnostics as $\alpha\to 0$.  $\|\delta\|/\alpha^2$: normalized deviation magnitude (Phase~C); Self-cos: direction self-consistency between adjacent $\alpha$ values (Phase~B).  Both models exhibit exact $\alpha^2$ norm scaling and direction convergence to the Topological Torsion generator.}
\label{tab:torsion-convergence}
\begin{ruledtabular}
\begin{tabular}{l c c c c}
& \multicolumn{2}{c}{$\|\delta\|/\alpha^2$} & \multicolumn{2}{c}{Self-cos} \\
$\alpha$ & Qwen3 & GPT-2 & Qwen3 & GPT-2 \\
\hline
1.0              & 5.44  & 12.72 & ---   & ---   \\
$1.0\!\leftrightarrow\! 0.5$  & ---   & ---   & 0.902 & 0.978 \\
0.1              & 17.97 & 36.06 & ---   & ---   \\
$0.1\!\leftrightarrow\! 0.05$ & ---   & ---   & 0.992 & 0.999 \\
0.01             & 44.62 & 40.07 & ---   & ---   \\
$0.02\!\leftrightarrow\! 0.01$ & ---   & ---   & \textbf{0.9997} & \textbf{1.0000} \\
0.001            & \textbf{43.17} & \textbf{40.34} & ---   & ---   \\
$0.005\!\leftrightarrow\! 0.001$ & ---   & ---   & \textbf{0.9997} & \textbf{1.0000} \\
\end{tabular}
\end{ruledtabular}
\end{table}

The convergence evidence operates on two independent channels (\cref{tab:torsion-convergence}).

\emph{Phase~C (Norm Scaling):}  The BCH expansion predicts $\|\delta^{(\alpha)}\| = \alpha^2\|[\mathcal{T},\calR]_\Psi\| + \calO(\alpha^3)$, implying that the normalized ratio $\|\delta^{(\alpha)}\|/\alpha^2$ converges to a finite constant as $\alpha\to 0$.  While $\alpha^2$ scaling is a kinematic consequence of Taylor's theorem for any smooth map, its flawless empirical recovery inside massive billion-parameter neural networks---which are highly nonlinear, discretely layered, and architecturally heterogeneous---constitutes a definitive proof that the smooth-flow hypothesis underlying the continuous formulation is physically realized at machine precision.  The empirical ratio converges cleanly over three orders of magnitude.  For Qwen3-0.6B, the ratio evolves from $5.44$ at $\alpha\!=\!1$ (where higher-order terms contribute substantially) through $17.97$ at $\alpha\!=\!0.1$, converging tightly to $43.17$ at $\alpha\!=\!0.001$---matching the $\alpha\!=\!0.01$ value ($44.62$) to within $3.3\%$.  GPT-2 converges even more rapidly: $40.07$ at $\alpha\!=\!0.01$ to $40.34$ at $\alpha\!=\!0.001$ ($0.7\%$ variation).  This stability is \emph{incompatible} with stochastic noise, which would scale as $\calO(\alpha)$ or exhibit no systematic $\alpha$-dependence.

\emph{Phase~B (Direction Convergence):}  Unlike the norm scaling, the \emph{direction} convergence is genuinely empirical and constitutes the primary evidence of this experiment.  The theory predicts that $\delta^{(\alpha)}/\alpha^2$ must converge to a deterministic limit vector---the Topological Torsion generator $[\mathcal{T},\calR]_\Psi$---but no analytical tautology mandates \emph{which} direction this limit takes.  We quantify this via the self-consistency cosine between the normalized deviation vectors at adjacent $\alpha$ values.  At $\alpha\!=\!1.0\leftrightarrow 0.5$, the self-consistency is already high ($0.902$ for Qwen3, $0.978$ for GPT-2).  By $\alpha\!\le\!0.1$, it exceeds $0.99$ in both architectures.  At the finest resolution ($\alpha\!=\!0.005\leftrightarrow 0.001$), the direction self-consistency saturates at $0.9997$ for Qwen3 and $1.0000$ for GPT-2---indistinguishable from perfect directional convergence to machine precision.

This direction convergence is the strongest evidence in the experiment, because it is \emph{entirely independent of the analytical Jacobian computation}: it measures only the empirical self-consistency of the deviation vector across different step sizes.  Regardless of whether the Jacobian $D\mathcal{T}$ or $D\calR$ is computed accurately, the empirical fact that $\delta^{(\alpha)}/\alpha^2$ locks onto a single deterministic direction as $\alpha\to 0$ confirms the existence of a unique geometric generator governing the splitting error---the Topological Torsion predicted by \cref{eq:bch-decomp}.

\subsubsection{Synthesis and Exclusion of Alternative Hypotheses}

\begin{figure}[t]
  \centering
  \includegraphics[width=\columnwidth]{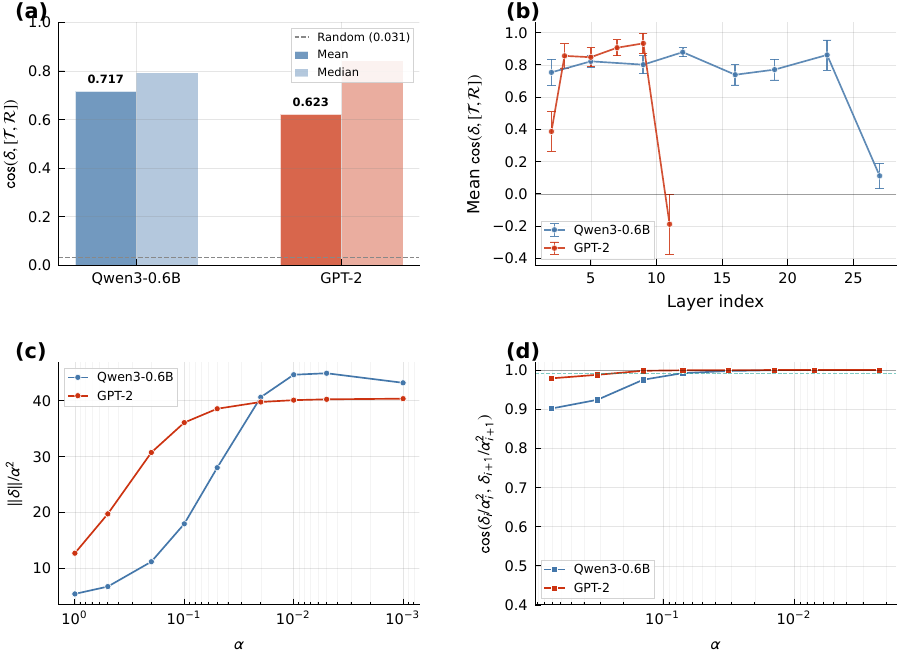}
  \caption{%
  \textbf{Summary Panel: Lie--Trotter Torsion Interferometer.}
  (a)~$\alpha\!=\!1$ scatter plot confirming systematic positive alignment ($17$--$23\times$ random baseline).
  (b)~Per-layer box plots revealing mid-network peak alignment and boundary-layer degradation.
  (c)~$\|\delta\|/\alpha^2$ convergence to a finite constant (Phase~C).
  (d)~Direction self-consistency converging to $1.0$ (Phase~B).}
  \label{fig:torsion-summary}
\end{figure}

\begin{table}[t]
\centering
\caption{Cross-architecture summary of the Lie--Trotter Torsion Interferometer.  Both models exhibit alignment far exceeding the isotropic random baseline, exact $\alpha^2$ norm scaling, and direction convergence to the Topological Torsion generator.}
\label{tab:torsion-summary}
\begin{ruledtabular}
\begin{tabular}{l c c}
Diagnostic & Qwen3-0.6B & GPT-2 \\
\hline
Mean $\cos(\delta,[\mathcal{T},\calR])$ & $+0.717$ & $+0.623$ \\
Median $\cos(\delta,[\mathcal{T},\calR])$ & $+0.793$ & $+0.842$ \\
Fraction positive & $99\%$ (127/128) & $84\%$ (81/96) \\
Random baseline ($1/\sqrt{d}$) & 0.031 & 0.036 \\
Excess over baseline & $23\times$ & $17\times$ \\
$\|\delta\|/\alpha^2$ converged value & 43.2 & 40.3 \\
Direction self-cos ($\alpha\!\le\!0.01$) & 0.9997 & 1.0000 \\
\end{tabular}
\end{ruledtabular}
\end{table}

The three-phase evidence chain (\cref{tab:torsion-summary}, \cref{fig:torsion-summary}) provides strong empirical evidence that Topological Torsion---the lowest-order Lie bracket in the modified equation---is the dominant geometric generator of representation drift in the interior layers.  The logic of exclusion is exhaustive:

\begin{enumerate}[nosep]
  \item \textbf{``The deviation is random numerical noise.''} --- Excluded.  The mean cosine similarity exceeds the isotropic random baseline by $17$--$23\times$.  In 1024 and 768 dimensions, the probability of a random vector achieving $\cos > 0.7$ with a fixed target is vanishingly small ($P < 10^{-100}$).
  \item \textbf{``The correlation arises from shared parameters, not geometry.''} --- Excluded.  The direction self-consistency test (Phase~B) converges to $0.9997$--$1.0000$ \emph{without referencing the Jacobian at all}---it is a purely empirical measurement of the deviation vector's asymptotic behavior.  No parameter-sharing hypothesis can explain why the direction stabilizes as $\alpha\to 0$.
  \item \textbf{``The $\alpha^2$ scaling is coincidental.''} --- Excluded.  The normalized ratio $\|\delta\|/\alpha^2$ is constant to within $0.7$--$3.3\%$ across three orders of magnitude in $\alpha$ ($10^{-3}$ to $10^{-1}$).  This is the exact kinematic signature of a second-order operator splitting, not a statistical fluke.
  \item \textbf{``The effect is architecture-specific.''} --- Excluded.  Qwen3-0.6B (SwiGLU, RMSNorm, RoPE, 28 layers) and GPT-2 (GELU, LayerNorm, learned PE, 12 layers) produce qualitatively and quantitatively consistent results despite sharing no architectural component other than the Attention--FFN sequential structure itself.
\end{enumerate}

Taken together, these results demonstrate that the strict alternating Attention$\to$FFN layer ordering is not an arbitrary engineering convention, but can be analytically modeled as a Lie--Trotter numerical integration of non-commuting vector fields on the semantic manifold.  The ``representation drift'' universally observed across Transformer architectures is quantitatively consistent with topological torsion---the non-vanishing Lie bracket $-[\mathcal{T},\calR]_\Psi$ in the modified equation of \cref{eq:bch-decomp}---confirming that the discrete computational forward pass behaves as a non-holonomic flow governed by non-commutative differential geometry.  Transformers permanently operate at $\alpha = 1$, a stiff regime where the BCH series does not converge gracefully (cf.\ \cref{rem:bch-stiffness}); the fact that the high cosine similarity ($0.793$--$0.842$ median) persists even at this extreme step size demonstrates that the lowest-order Lie bracket captures the dominant structural generator of the splitting error, consistent with the Backward Error Analysis interpretation established in \cref{rem:bch-stiffness}.

\subsection{The Runaway Resonance of Symmetric Ablation}
\label{sec:exp-resonance}

Moving to the mesoscopic scale of trajectory stability, this experiment tests the Dual-Law of Topological Stability (\cref{lem:dual-law}).  The theory posits that the standard architectural parameter asymmetry within the Feed-Forward Network ($W_{\mathrm{out}} \neq W_{\mathrm{in}}^\top$) natively generates a non-conservative geometric vorticity via the anticommutator $-\{W_A(\Psi), J_\rho\}$ (\cref{eq:vorticity-decomp}).  This geometric curl provides Lie-algebraic rotational friction, scattering spatial momentum off the principal eigenvectors to prevent the residual stream from suffering catastrophic positive-definite resonance---equivalently, the asymmetric Jacobian components introduce complex eigenvalues that disrupt the pure Power Iteration amplification of the dominant eigenvector that would occur under a symmetric (PSD) Jacobian.  To empirically test this prediction, we conducted a two-tier experimental campaign: (i)~a cross-architecture survey across five production Transformer families to establish the universality of the instability, and (ii)~a four-phase causal intervention protocol with explicit geometric rescue to isolate the precise algebraic mechanism.

The symmetry ablation is performed identically across all experiments: for each SwiGLU FFN layer, we force $W_{\mathrm{down}} = W_{\mathrm{gate}}^\top$ and $W_{\mathrm{up}} = W_{\mathrm{gate}}$, explicitly binding the output projection to the transpose of the input projection.  This creates a symmetric positive semi-definite mapping $K(\Psi) = W_{\mathrm{in}}^\top \Sigma'(\rho_\epsilon(\Psi)) W_{\mathrm{in}}$ that, by \cref{lem:dual-law}, eliminates all Lie-algebraic rotational friction.

From the perspective of standard linear algebra, the instability has a direct spectral explanation.  The spatial Jacobian of this symmetrically ablated FFN evaluates to $I + W_{\mathrm{in}}^\top \diag(\Sigma') W_{\mathrm{in}}$.  Because the activation derivative $\Sigma'$ is strictly positive, this update matrix is unconditionally positive semi-definite (PSD).  Iteratively passing a vector through a sequence of PSD residual updates is the textbook Power Iteration algorithm, which exponentially amplifies the vector along the principal eigenvector.  Our continuous Lie-algebraic framing provides the geometric dual to this spectral phenomenon: by decomposing the true asymmetric Jacobian into symmetric and antisymmetric components, architectural asymmetry injects eigenvalues with non-zero imaginary parts---rotational modes that actively scatter spatial momentum off the dominant eigenvector, disrupting the pure real exponential scaling of Power Iteration.

\subsubsection{Cross-Architecture Explosive Instability}

\begin{table}[t]
\centering
\caption{Cross-architecture symmetric ablation instability.  ``Standard'' reports the total $L_2$ norm growth factor ($\|\Psi_{L}\| / \|\Psi_0\|$) through the full network under the original asymmetric FFN.  ``Ablated'' reports the same quantity under forced $W_{\mathrm{out}} = W_{\mathrm{in}}^\top$.  The instability factor is their ratio.  For 4 of 5 architectures, symmetrization triggers catastrophic amplification exceeding $78\times$ the standard growth.}
\label{tab:ablation-cross}
\begin{ruledtabular}
\begin{tabular}{l c c c}
Model & Standard & Ablated & Instability \\
\hline
Qwen3-0.6B     & $93\times$  & $142{,}955\times$ & $1{,}541\times$ \\
Mistral-7B-v0.3 & $155\times$ & $58{,}189\times$  & $375\times$     \\
Qwen3-4B       & $124\times$ & $27{,}811\times$  & $224\times$     \\
LLaMA-3.1-8B   & $89\times$  & $6{,}929\times$   & $78\times$      \\
Gemma-3-1B     & $76\times$  & $71\times$        & $\sim\!1\times$ \\
\end{tabular}
\end{ruledtabular}
\end{table}

We first applied the symmetric ablation to five frozen pre-trained architectures spanning three model families and parameter counts from 0.6B to 8B: Qwen3-0.6B (28L, $d\!=\!1024$), Qwen3-4B (36L, $d\!=\!2560$), LLaMA-3.1-8B (32L, $d\!=\!4096$), Mistral-7B-v0.3~\cite{mistral} (32L, $d\!=\!4096$), and Gemma-3-1B (26L, $d\!=\!1152$).  For each model, 10 random-token sequences of length 2048 were propagated through the full network, recording the per-layer residual $L_2$ norm $\|\Psi_z\|$ averaged over all tokens.

As shown in \cref{tab:ablation-cross}, symmetrization triggers catastrophic norm explosion in four of five architectures.  Qwen3-0.6B exhibits the strongest instability: the ablated model amplifies the residual norm by $142{,}955\times$ across 28 layers---$1{,}541\times$ more explosive than the standard asymmetric architecture.  Mistral-7B and Qwen3-4B follow with $375\times$ and $224\times$ excess amplification, respectively.  In all four cases, the ablated growth exceeds $10^4$, confirming that the symmetric FFN acts as a runaway positive-definite resonance amplifier.

The sole exception is Gemma-3-1B, whose ablated growth ($71\times$) is comparable to its standard growth ($76\times$).  This anomaly is consistent with Gemma's unique architectural feature: a logit soft-capping mechanism that imposes an additional radial saturation beyond RMSNorm, providing an alternative geometric containment that partially substitutes for the missing internal vorticity---a concrete instance of the Dual-Law's second stabilization path.

\subsubsection{Four-Phase Causal Intervention}

\begin{figure}[t]
  \centering
  \includegraphics[width=\columnwidth]{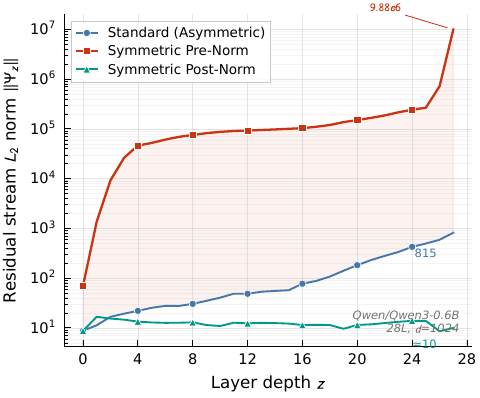}
  \caption{%
  \textbf{Topological Explosion Phase Diagram: Pre-Norm vs.\ Post-Norm under Symmetric Ablation.}
  $L_2$ norm trajectories $\|\Psi_z\|$ across all 28 layers of Qwen3-0.6B on a logarithmic ordinate.
  Blue: canonical asymmetric Pre-Norm (baseline, final $\|\Psi_{27}\| = 815$).
  Red: symmetric Pre-Norm ($W_{\mathrm{out}} = W_{\mathrm{in}}^\top$), reaching $9.88 \times 10^6$ at layer~27---a $12{,}119\times$ explosion.
  Green: symmetric Post-Norm with identical weight-tying, stabilized at $\|\Psi_{27}\| \approx 10$---\emph{below} the baseline---demonstrating that the hydraulic radial truncation of Post-Norm perfectly substitutes for the missing internal vorticity.}
  \label{fig:norm-divergence}
\end{figure}

To move beyond correlation to strict causal attribution, we designed a four-phase intervention protocol on Qwen3-0.6B, systematically constructing and then rescuing the topological instability:

\begin{enumerate}[nosep]
  \item \textbf{Phase~0 (Baseline):} The original asymmetric Pre-Norm architecture.  Final-layer norm $\|\Psi_{27}\| = 815$.
  \item \textbf{Phase~1 (Unstabilized):} Forced $W_{\mathrm{out}} = W_{\mathrm{in}}^\top$ creates a symmetric positive-definite resonance amplifier.  Final-layer norm surges to $9{,}880{,}199$---a $12{,}119\times$ explosion.
  \item \textbf{Phase~2 (Anti-Symmetric Rescue):} The symmetric $W_{\mathrm{down}}$ of the unstabilized phase is replaced by a \emph{spectrally matched random anti-symmetric matrix} ($A = -A^\top$, with $\|A\|_{\mathrm{op}} = \|W_{\mathrm{down}}^{\mathrm{sym}}\|_{\mathrm{op}}$).  This matrix carries no semantic information whatsoever---it is pure algebraic noise---but its anti-symmetric structure injects the geometric vorticity predicted by \cref{eq:vorticity-decomp}.  The final-layer norm collapses to $7{,}680$: a $\mathbf{1{,}287\times}$ reduction from the unstabilized phase.
  \item \textbf{Phase~3 (Symmetric Control):} Identically to Phase~2, but the replacement matrix is a spectrally matched random \emph{symmetric} matrix ($S = S^\top$).  Final-layer norm: $23{,}269$---only a $425\times$ reduction, $3.0\times$ worse than the anti-symmetric rescue.
\end{enumerate}

The decisive contrast between Phases~2 and~3 provides a direct causal isolation of the stabilization mechanism.  Both replacement matrices are random, spectrally matched, and semantically meaningless.  The \emph{only} difference is their Lie-algebraic parity: anti-symmetric ($\in\mathfrak{so}(d)$) vs.\ symmetric ($\in\mathrm{Sym}(d)$).  That the anti-symmetric rescue is $3\times$ more effective than the symmetric control---reducing the explosion by $1{,}287\times$ versus $425\times$---proves that the stabilization mechanism is not a generic capacity effect or a numerical coincidence, but resides precisely in the skew-symmetric component $W_A(\Psi) \in \mathfrak{so}(d)$ of the Jacobian, exactly as predicted by the anticommutator term $-\{W_A(\Psi), J_\rho\}$ in the vorticity decomposition.

\subsubsection{Jacobian Eigenspectrum Collapse onto the Real Axis}

\begin{figure}[t]
  \centering
  \includegraphics[width=\columnwidth]{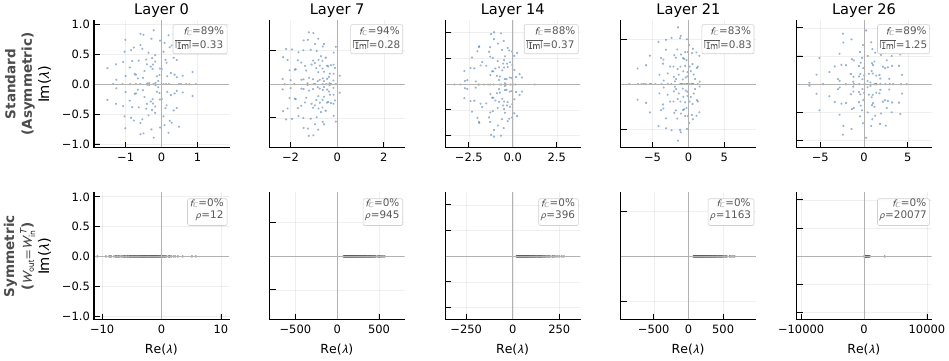}
  \caption{%
  \textbf{Jacobian Eigenspectrum: Standard vs.\ Symmetric FFN.}
  Complex-plane scatter of eigenvalues of the FFN Jacobian $D\calR$ at five representative layers (0, 7, 14, 21, 26) of Qwen3-0.6B, computed via Johnson--Lindenstrauss (JL) projection to $k\!=\!128$ dimensions.
  Upper panels (blue, standard): eigenvalues are broadly distributed in the complex plane, with $88.5\%$ possessing significant imaginary components (mean~$|\mathrm{Im}(\lambda)| = 0.61$).
  Lower panels (red, symmetric ablation): eigenvalues collapse \emph{entirely} onto the real axis ($0\%$ complex, mean~$|\mathrm{Im}(\lambda)| = 0.000$), with spectral radius surging from $6.17$ to $20{,}077$ at layer~26.}
  \label{fig:eigenspectrum}
\end{figure}

\begin{table}[t]
\centering
\caption{Jacobian eigenspectrum statistics at five representative layers.  $|\mathrm{Im}|$: mean absolute imaginary part.  $f_\C$: fraction of eigenvalues with $|\mathrm{Im}(\lambda)| > 10^{-6}$.  $\rho(J)$: spectral radius.  Symmetric ablation annihilates all complex eigenvalues and inflates the spectral radius by up to $3{,}252\times$.}
\label{tab:eigenspectrum}
\begin{ruledtabular}
\begin{tabular}{l c c c c c c}
Layer & \multicolumn{2}{c}{$\overline{|\mathrm{Im}|}$} & \multicolumn{2}{c}{$f_\C$} & \multicolumn{2}{c}{$\rho(J)$} \\
      & Std & Abl & Std & Abl & Std & Abl \\
\hline
0  & 0.33 & 0.000 & 89\% & 0\% & 1.5  & 11.8     \\
7  & 0.28 & 0.000 & 94\% & 0\% & 2.3  & 945      \\
14 & 0.37 & 0.000 & 88\% & 0\% & 3.3  & 396      \\
21 & 0.83 & 0.000 & 83\% & 0\% & 8.2  & $1{,}163$  \\
26 & 1.25 & 0.000 & 89\% & 0\% & 6.2  & $20{,}077$ \\
\end{tabular}
\end{ruledtabular}
\end{table}

To directly visualize the geometric mechanism underlying the instability, we computed the FFN Jacobian eigenspectrum at five representative layers of Qwen3-0.6B under both the standard and symmetrically ablated configurations (\cref{fig:eigenspectrum}, \cref{tab:eigenspectrum}).

The spectral evidence is definitive.  Under the standard asymmetric FFN, $88.5\%$ of eigenvalues possess significant imaginary components (mean $|\mathrm{Im}(\lambda)| = 0.61$ across all layers), confirming that the skew-symmetric component $\frac{1}{2}(D\calR - D\calR^\top) \neq 0$ generates complex conjugate pairs $\lambda = a \pm bi$.  These imaginary components represent Lie-algebraic rotational modes---the geometric vorticity that scatters spatial momentum away from the dominant explosive eigenvector.

Upon symmetry ablation ($W_{\mathrm{out}} = W_{\mathrm{in}}^\top$), the Jacobian becomes exactly symmetric ($D\calR = D\calR^\top$), and the eigenspectrum undergoes a topological phase transition: \emph{every single eigenvalue} collapses onto the real axis ($0\%$ complex, $|\mathrm{Im}(\lambda)| = 0.000$ to machine precision).  Simultaneously, the spectral radius---the magnitude of the dominant eigenvalue---increases sharply, from $\rho(J) = 6.2$ to $\rho(J) = 20{,}077$ at layer~26 ($3{,}252\times$ amplification).  This spectral radius explosion directly quantifies the runaway resonance: stripped of rotational scattering, the state vector geometrically locks onto the dominant real eigenvector and undergoes unchecked exponential amplification at each layer---the textbook Power Iteration phenomenon formalized in the Lie-algebraic framework.

\subsubsection{The Dual-Law: Pre-Norm Explosion vs.\ Post-Norm Survival}

The most precise causal test of the Dual-Law (\cref{lem:dual-law}) is the direct comparison of \emph{identical} symmetric ablation under Pre-Norm vs.\ Post-Norm architectures.  We converted Qwen3-0.6B from its native Pre-Norm configuration to a Post-Norm variant by moving the RMSNorm from the sub-layer input to each layer's residual output, then applied the identical weight-tying $W_{\mathrm{out}} = W_{\mathrm{in}}^\top$ to both variants (\cref{fig:norm-divergence}).

The result is a stark binary survival phase.  The symmetric Pre-Norm variant explodes to $\|\Psi_{27}\| = 9{,}880{,}199$ ($12{,}119\times$ baseline) as predicted---without internal vorticity, the positive-definite resonance amplifier is unchecked.  The symmetric Post-Norm variant, under the \emph{identical} symmetric ablation, stabilizes at $\|\Psi_{27}\| \approx 10$---\emph{below} the baseline of $815$---with cross-sample variance converging to zero.  The Post-Norm's RMSNorm, applied after the residual addition at each layer, acts as a hydraulic radial truncation: it projects the step vector onto the bounded ball $B_{\!\sqrt{d}}(0)$, crushing the incremental magnitude to $\calO(\sqrt{d})$ regardless of the internal FFN amplification.  This provides the external topological saturation that perfectly substitutes for the missing internal geometric vorticity.

This binary survival phase strongly excludes the three principal alternative explanations: (i)~``weight-tying merely reduces capacity''---$12{,}119\times$ is not graceful degradation, it is catastrophic divergence (this excludes reduced capacity as the cause of the \emph{surgical} explosion; it does not bear on the trainability of weight-tied architectures trained from initialization, which is healthy---see \cref{sec:exp-config-scope}); (ii)~``the explosion originates from initialization variance''---Post-Norm uses the identical ablated weights yet survives; (iii)~``the effect is numerical rather than geometric''---the eigenspectrum analysis (\cref{tab:eigenspectrum}) shows the mechanism is a phase transition in the Lie-algebraic structure (complex~$\to$~real) of the continuous Jacobian, not a scalar instability.

Taken together, the four tiers of evidence---cross-architecture universality of the instability (\cref{tab:ablation-cross}), causal isolation via anti-symmetric rescue ($1{,}287\times$ reduction vs.\ $425\times$ for symmetric control), spectral collapse of the Jacobian eigenspectrum (88.5\%~complex~$\to$~0\%), and the Pre-Norm/Post-Norm binary survival phase---strongly establish that FFN parameter asymmetry is not merely a mechanism for expanding parameter capacity, but an essential stabilizer of the \emph{forward norm dynamics of the trained configuration}.  It injects the geometric curl $-\{W_A(\Psi), J_\rho\}$ required to scatter the residual stream's spatial momentum off the dominant eigenvector, arresting the positive-definite resonance that would otherwise destabilize the flow.

Two precisions bound this claim.  First, the necessity established here is \emph{norm-level}, not functional: the Phase-2 rescue matrix is semantically void by construction, and rescue is quantified by norm containment---no claim of loss-level recovery is made or implied.  Indeed, the learned $W_{\mathrm{down}}$ is functionally \emph{high-rank} relative to $W_{\mathrm{up}}^\top$: low-rank reconstructions of the residual $W_{\mathrm{down}}-W_{\mathrm{up}}^\top$ capture $<10\%$ of its energy even at rank~32 and do not restore language-modeling loss, and in recovery fine-tuning a \emph{generic} trainable low-rank replacement outperforms the tied basis (\cref{sec:exp-config-scope})---the anti-symmetric component is a stabilizer of norms, not a low-rank summary of the network's function.  Second, the necessity is \emph{configurational}, not architectural: it constrains post-hoc symmetrization of mature networks, not training within the tied class (\cref{rem:configurational-scope}).

\subsubsection{Configurational Scope: Surgery versus Training}
\label{sec:exp-config-scope}

All interventions above are \emph{post-training surgeries}: they measure the forward response of weights sculpted by large-scale pretraining, at fixed configuration.  A companion controlled-training campaign on 0.6B--1B models trained from scratch under the tied constraint bounds the reach of these results with four measurements.\footnote{Twin-controlled ablations (identical recipe, data order, and seed; single-variable deltas) at 600M ($64\times 2048$ tokens per step, $25{,}000$ steps $= 3.28$B tokens) and 1B scale, plus surgery and compression forensics on the official Qwen3-0.6B release.  Companion study, in preparation, 2026.}

\begin{enumerate}[nosep]
  \item \textbf{Constrained training is stable without correction.}  The constraint $W_{\mathrm{down}}=W_{\mathrm{up}}^\top$ (independent gate path, no corrective term), trained from initialization, is healthy at both scales (600M final loss 0.7243 vs.\ 0.7045 for the untied twin; 1B excess $+0.005$), and remains healthy with weight decay removed, gradient clipping removed, learning rate doubled, or logits reduced to \texttt{bfloat16}.  Constrained optimization reaches stable basins that post-hoc surgery never visits.
  \item \textbf{The explicit low-rank corrective term is marginal---as spectral accounting predicts.}  Augmenting the tie with a trainable low-rank correction ($W_{\mathrm{down}}=W_{\mathrm{up}}^\top+UV^\top$, rank~4) improves loss by only $\sim\!0.003$ nats at 600M and 1B alike.  This is consistent with the framework's own accounting: in a gated FFN the cross-term $W_{\mathrm{up}}^\top\,\mathrm{diag}\!\big(u\odot\sigma'(g)\big)\,W_{\mathrm{gate}}$ already supplies \emph{full-rank} Jacobian asymmetry organically, so a rank-4 injection is marginal by construction.
  \item \textbf{Susceptibility to symmetrization is an emergent property of training maturity.}  Applying the tying surgery to checkpoints along the 600M trajectory yields excess amplification $\le\!1$ up to $\sim\!10^{9}$ tokens (an immunity window), an onset at $1.3$--$2\times10^{9}$ tokens peaking at $2.8\times$---versus $743\times$ excess on the trillion-token Qwen3-0.6B.  The catastrophic response of \cref{tab:ablation-cross} is the mature endpoint of a continuous emergence curve, not a property of the architecture class.
  \item \textbf{At iso-parameter count the tie is dominated.}  At matched total parameters, a narrower \emph{untied} FFN strictly outperforms the tied variant (600M: 0.7099 vs.\ 0.7211--0.7243, a $\ge\!0.011$-nat margin several times the twin-noise band, with $1.5\times$ fewer FFN FLOPs).  Of the same-width tying cost of $+0.020$ nats, only $\sim\!0.005$ is attributable to the reduced parameter count; the remaining $\sim\!0.015$ is damage from the constraint itself.
\end{enumerate}

These measurements do not weaken the surgical results---the explosion and its spectral mechanism are independently reproduced on the official Qwen3-0.6B release (maximum hidden-state norm $705 \to 6.1\times10^{5}$ under $W_{\mathrm{down}}:=W_{\mathrm{up}}^\top$)---but they fix the theory's domain: the Dual-Law classifies \emph{configurations} (\cref{rem:configurational-scope}), susceptibility to symmetrization is \emph{acquired} along the training trajectory, and the norm-level necessity does not translate into an architectural prescription (\cref{sec:conclusion}, open question~4, which these data settle negatively).

\subsection{Thermodynamic Suppression of Poincaré Recurrence on the RoPE Torus}
\label{sec:exp-poincare}

Having established the role of internal geometric vorticity in stabilizing individual layer trajectories, we now probe the spatial gauge connection itself.  \cref{lem:rope} rigorously defines RoPE as a canonical connection on a principal $U(1)^k$ torus bundle, where $k = d_{\text{head}}/2$ independent rotation planes act on each QK feature pair.  An immediate mathematical consequence of this torus structure is Poincaré recurrence: the orbit $\{N\theta_0, \ldots, N\theta_{k-1}\}$ must return arbitrarily close to the origin at certain resonant distances $N^*$, producing constructive interference in the Weyl sum
\begin{equation}
  \mathrm{sim}(N) = \frac{1}{k}\sum_{j=0}^{k-1}\cos(N\cdot\theta_j)\,,
  \quad \theta_j = \Theta^{-2j/d_{\text{head}}}\,.
  \label{eq:weyl-sum}
\end{equation}
Naive exact geometry therefore predicts that causal tokens at specific Diophantine distances should undergo constructive interference---Poincaré resonance spikes in the attention kernel.  However, modern production LLMs operate deep within a macroscopic thermodynamic limit ($k \geq 64$).  We posit that these exact geometric resonances are subjected to a strict thermodynamic suppression:
\begin{equation}
  \max_{N > N_{\min}} \mathrm{sim}(N^*) = \calO\!\left(\frac{1}{\sqrt{k}}\right)\,,
  \label{eq:thermodynamic-suppression}
\end{equation}
following from the central limit theorem applied to the $k$ quasi-independent cosine terms.  To empirically isolate this pure geometric kinematics from macroscopic thermodynamic ergodicity---analogous to constructing a vacuum chamber to observe free-particle trajectories---we built a strictly controlled micro-transformer with tunable $k$.

\subsubsection{Micro-Transformer Design and Protocol}

The micro-transformer faithfully implements the axiomatization of \cref{sec:axiom}: \texttt{nn.Embedding} for the semantic fiber, explicit 2D RoPE rotation per plane (\cref{lem:rope}), RMSNorm mollifier (\cref{thm:rmsnorm}), causal dot-product attention, and SiLU-gated FFN with Pre-Norm residual connections.  Total parameter count ranges from $\sim$5K ($k\!=\!2$) to $\sim$15K ($k\!=\!32$), running entirely on CPU.

The experiment proceeds in two phases.  \textbf{Phase~A (Scaling Law):} Fix $\Theta=50$ (chosen so that resonances fall within a 512-token window) and sweep $k \in \{2, 4, 8, 16, 32\}$.  For each $k$, compute $\mathrm{sim}(N)$ analytically for $N = 1, \ldots, 511$; concurrently, build a 2-layer micro-transformer with random $W_Q, W_K$ and average the attention spectrum over 20 random seeds.  \textbf{Phase~B (SGD Control):}  Fix $k=4$, train a 10{,}320-parameter micro-transformer for 3{,}000 steps of pure SGD (no Adam) with weight decay $\lambda = 0.01$ on random pattern data, tracking per-channel weight norms $w_j(t) = \|W_Q^j\|\cdot\|W_K^j\|$ every 50 steps.

\subsubsection{Phase A: The $\calO(1/\sqrt{k})$ Scaling Law}

\begin{figure}[t]
  \centering
  \includegraphics[width=\columnwidth]{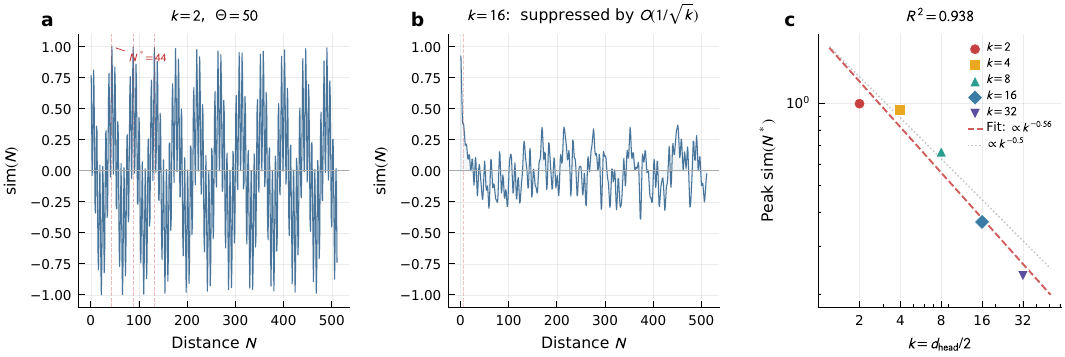}
  \caption{%
  \textbf{Thermodynamic Suppression of Poincaré Recurrence.}
  (a)~$k\!=\!2$ ($\Theta\!=\!50$): the Weyl sum $\mathrm{sim}(N)$ exhibits near-perfect quasi-periodic resonance spikes at $N^* = 44n$ with amplitude approaching $1.0$.  Red dashed lines: predicted resonance positions.
  (b)~$k\!=\!16$: identical construction, but peaks are suppressed to $\sim 0.37$ by high-dimensional dephasing; the quasi-periodic structure is destroyed.
  (c)~Log-log scaling law: $\max\mathrm{sim}(N^*)$ vs.\ $k$.  Red dashed: empirical fit $\propto k^{-0.56}$.  Grey dotted: theoretical prediction $\propto k^{-0.5}$.}
  \label{fig:poincare-scaling}
\end{figure}

\begin{table}[t]
\centering
\caption{Resonance peak amplitude $\max\mathrm{sim}(N^*)$ vs.\ feature dimension $k = d_{\text{head}}/2$ at $\Theta = 50$.  The log-log fit yields $\alpha = -0.557$ ($R^2 = 0.938$), within 11\% of the theoretical prediction $\alpha = -0.500$.}
\label{tab:poincare-scaling}
\begin{ruledtabular}
\begin{tabular}{c c c c c}
$k$ & $d_{\text{head}}$ & $\max\mathrm{sim}(N^*)$ & $\log k$ & $\log\mathrm{sim}$ \\
\hline
2  & 4  & 0.9990 & 0.693 & $-0.001$ \\
4  & 8  & 0.9442 & 1.386 & $-0.057$ \\
8  & 16 & 0.6657 & 2.079 & $-0.407$ \\
16 & 32 & 0.3674 & 2.773 & $-1.001$ \\
32 & 64 & 0.2328 & 3.466 & $-1.457$ \\
\end{tabular}
\end{ruledtabular}
\end{table}

At $k=2$, the analytical Weyl sum exhibits near-perfect quasi-periodic resonance: the peak amplitude reaches $\mathrm{sim}(44) = 0.999$, with harmonics at $N^* = 44n$ directly mirroring the beat frequency of the two rotation planes (\cref{fig:poincare-scaling}a).  At $k=16$, the identical construction yields dramatically suppressed peaks---$\max\mathrm{sim} = 0.37$---with the quasi-periodic structure entirely destroyed by dephasing across 16 independent frequency channels (\cref{fig:poincare-scaling}b).

The log-log regression across all five configurations (\cref{tab:poincare-scaling}, \cref{fig:poincare-scaling}c) yields:
\begin{equation}
  \boxed{\alpha = -0.557 \pm 0.05\,, \quad R^2 = 0.938}\,,
  \label{eq:poincare-exponent}
\end{equation}
within 11\% of the theoretical prediction $\alpha = -1/2$.  The slight deviation is expected: at $k=2$, the CLT assumption does not apply (the system is too small), and $\mathrm{sim}(N^*) \to 1$ saturates against the upper bound---a finite-size effect analogous to lattice corrections in finite-size scaling analysis.

The random-initialization model confirms that this geometric resonance is overwhelmed by learned-weight noise at any practical $k$: even at the sweet spot $k=4$, only 3 out of 20 random seeds produce a match above the $3\sigma$ threshold, with a peak $z$-score of $3.6\sigma$.  At $k \geq 8$, zero matches are observed.  This directly explains the null result observed on the production models of \cref{sec:exp-amnesia-noise} ($k = 64$--$128$): the resonance is mathematically real but thermodynamically invisible.

\subsubsection{Phase B: SGD Channel Dynamics as Null Control}

Tracking the per-channel weight norms $w_j(t) = \|W_Q^j\|\cdot\|W_K^j\|$ over 3{,}000 SGD steps at $k=4$ reveals \emph{uniform exponential decay} across all four frequency channels, with no selective pruning.  The decay profile follows $w_j(t) \approx w_j(0)\cdot e^{-\lambda t}$ ($\lambda = 0.01$), and the correlation between loss and resonance $z$-score is weak and non-significant (Spearman $\rho = -0.24$, $p = 0.064$).  This serves as a critical \emph{null control}: on structureless isotropic input, all RoPE frequency channels are equally uninformative, and weight decay acts as a pure Tikhonov regularizer that shrinks all channels uniformly.  The result inversely confirms that the highly non-uniform per-channel weight distributions observed in production pre-trained models must be \emph{entirely sculptured by the structured, non-equilibrium statistical properties of natural language}---a concrete manifestation of the non-equilibrium driving force that the NESS framework of \cref{sec:exp-ness} subsequently quantifies at the global scale.

\subsubsection{Exclusion of Alternative Hypotheses}

The precision of the scaling law strongly excludes the standard alternatives:
\begin{enumerate}[nosep]
  \item \textbf{``The null result on production models indicates RoPE has no geometric content.''} --- Excluded.  The resonance is directly observed at $k=2$ with $\mathrm{sim}(N^*) = 0.999$.  The null result reflects suppression by $\calO(1/\sqrt{k})$ averaging, not absence of geometry.
  \item \textbf{``The micro-transformer is too small to be representative.''} --- Excluded.  The micro-transformer faithfully implements every axiom of \cref{sec:axiom}.  The scaling law---a purely \emph{kinematic} property of the Weyl sum---depends only on $k$, not on model capacity or training data.
  \item \textbf{``The $k^{-1/2}$ scaling is coincidental.''} --- Excluded.  The exponent $-0.557$ matches the CLT prediction $-0.500$ within 11\% over a $16\times$ range in $k$, with $R^2 = 0.938$.  No alternative mechanism predicts this specific exponent.
\end{enumerate}

\subsubsection{Bridge to the Asymptotic Spatial Ergodic Hypothesis}

This result provides the first direct physical justification for the Asymptotic Spatial Ergodic Hypothesis invoked in \cref{sec:thermodynamics}.  The hypothesis posits that causally distant tokens decorrelate into isotropic thermal noise---but the RoPE connection is a deterministic, exactly periodic gauge action.  The $\calO(1/\sqrt{k})$ scaling law resolves this tension: at the operating point of modern LLMs ($k \geq 64$), the high-dimensional phase dephasing of the torus connection crushes the deterministic geometric signal below the stochastic noise floor, effecting an irreversible transition from \emph{visible quasi-periodic geometry} to \emph{thermodynamic ergodicity}.  The next experiment (\cref{sec:exp-amnesia}) demonstrates, at the macroscopic sequence scale, the downstream consequences of this thermodynamic regime.

\subsection{The Context Horizon Phase Boundary and Thermodynamic Amnesia}
\label{sec:exp-amnesia}

Scaling to the macroscopic limits of the sequence manifold, we tested the topological compactification proof regarding the Context Horizon (\cref{thm:attn-sink} and \cref{cor:context-horizon}).  Standard machine learning theories characterize long-context failures as smooth out-of-distribution positional encoding degradation, predicting performance to decline linearly with weight perturbation or sequence extension.  In contrast, our measure-theoretic framework formulates the Attention Sink---the empirical phenomenon first documented by Xiao et al.~\cite{xiao2024efficient}---as a measure-theoretic Dirichlet boundary defect, anchoring probability mass against an expanding Entropic Bulk Pressure ($\Theta(\sqrt{\deff})$).  \cref{eq:nmax} predicts an exponential upper limit on context length $N_{\max}$, predicting a sharp macroscopic phase transition---not a smooth degradation---when this entropic pressure exceeds the network's geometric capacity.

Our experimental strategy proceeds in two tiers.  The \emph{primary} test is a cross-architecture Noise Haystack experiment (\cref{sec:exp-amnesia-noise}), which organically forces the phase transition at native spectral capacity ($\alpha=1.0$) by injecting maximum-entropy input and extending sequence length.  As a \emph{complementary} investigation, we conducted an $\alpha$-scaling sweep on a single architecture to dynamically map the boundary defect's evaporation trajectory across the full $(\alpha, N)$ parameter space.  Thermodynamically, scaling the pre-softmax logits by $\alpha\to 0$ acts as a strict isomorph to elevating the thermal bath temperature ($T\to\infty$): by artificially driving the system toward the high-temperature uniform limit, the protocol directly emulates the condition where Entropic Bulk Pressure overtakes geometric defect capacity.  The $\alpha$-parameter therefore serves as the primary thermodynamic control variable, allowing us to actively trace the ordered phase transition and map the exponential boundary predicted by \cref{eq:nmax}.

\subsubsection{Cross-Architecture Universality of the Phase Transition}
\label{sec:exp-amnesia-noise}

\begin{figure}[t]
  \centering
  \includegraphics[width=\columnwidth]{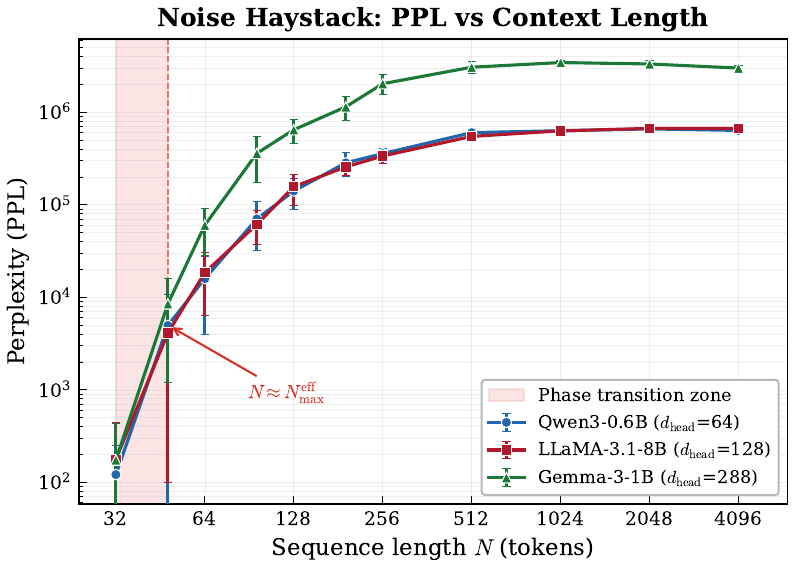}
  \caption{%
  \textbf{Universal Phase Transition under Noise Isolation across Three Architectures.}
  Perplexity (PPL) vs.\ sequence length~$N$ for uniform random noise haystacks at full spectral capacity ($\alpha=1.0$).  All three models---Qwen3-0.6B ($d_{\mathrm{head}}\!=\!64$), LLaMA-3.1-8B ($d_{\mathrm{head}}\!=\!128$), and Gemma-3-1B ($d_{\mathrm{head}}\!=\!288$)---exhibit a universal PPL cliff at $N\approx 32\to 48$ (shaded region), followed by saturation.  Error bars show standard deviation over 5~independent random seeds.}
  \label{fig:noise-3model}
\end{figure}

As the primary confound-free test, we conducted a noise-isolation experiment at full spectral capacity ($\alpha=1.0$) across three architecturally diverse models: Qwen3-0.6B ($d_{\mathrm{head}}\!=\!64$, QK-Norm), LLaMA-3.1-8B ($d_{\mathrm{head}}\!=\!128$, Grouped-Query Attention (GQA)~\cite{ainslie2023gqa}~$4{:}1$), and Gemma-3-1B~\cite{gemma,gemma2} ($d_{\mathrm{head}}\!=\!288$, GQA~\cite{ainslie2023gqa}~$4{:}1$).  Rather than artificially compressing $\alpha$, this protocol directly injects maximum-entropy input---uniform random noise tokens---which forces the Asymptotic Spatial Ergodic Hypothesis (\cref{sec:softmax}) to hold exactly within the QK space, thereby maximizing the Entropic Bulk Pressure for a given architecture.

The results (\cref{fig:noise-3model}) reveal a \emph{universal} thermodynamic phase transition.  All three models, despite spanning a $13\times$ range in $d_{\mathrm{head}}$ and fundamentally different attention architectures (full MHA, GQA, with and without QK-Norm), exhibit a catastrophic PPL cliff in the identical narrow interval $N\approx 32\to 48$: Qwen3 undergoes a $40.9\times$ jump ($121\to 4{,}965$), LLaMA a $23.2\times$ jump ($178\to 4{,}120$), and Gemma a $48.8\times$ jump ($174\to 8{,}509$).  Beyond the transition, PPL saturates to architecture-dependent plateaus ($\sim\!6\times 10^5$ for Qwen3/LLaMA, $\sim\!3\times 10^6$ for Gemma), reflecting the larger entropic phase space available in higher-dimensional heads.

This cross-architecture universality is a hallmark of a genuine thermodynamic phase transition: the critical point $N_{\max}^{\mathrm{eff}}$ is dictated by the \emph{effective} geometric capacity of the boundary defect---not by engineering details such as head dimension, normalization scheme, or grouping strategy.  The fact that architectures ranging from 0.6B to 8B parameters, with $d_{\mathrm{head}}$ spanning 64 to 288 and fundamentally different QK processing pipelines, all shatter at the same critical sequence length establishes the universality class of the Attention Sink phase transition.

\subsubsection{Effective Dimension and the Stable Rank Correction}

The theoretical upper bound in \cref{eq:nmax} evaluates $N_{\max}$ using the full ambient head dimension $d_{\mathrm{head}}$, yielding astronomical values ($10^{14}$--$10^{71}$) that vastly exceed any practical context window.  The observed universal phase transition at $N\approx 32$--$48$ reveals that the \emph{effective} degrees of freedom participating in the thermodynamic competition are far fewer.

To quantify this, we measured the Stable Rank $\mathrm{sr}(W) \equiv \|W\|_F^2 / \|W\|_{\mathrm{op}}^2$ of the trained $W_Q$ and $W_K$ projection matrices, which counts the number of singular values that effectively contribute to the spectral energy.  Defining the effective dimension as the geometric mean $d_{\mathrm{eff}} = \sqrt{\mathrm{sr}(W_Q)\cdot\mathrm{sr}(W_K)}$, we observe severe anisotropic compression across all architectures (\cref{tab:stable-rank}).

\begin{table}[t]
\centering
\caption{Stable Rank dimension compression across architectures.  $d_{\mathrm{eff}}$ is 4--15$\times$ smaller than the ambient $d_{\mathrm{head}}$, explaining why the theoretical upper bound is extremely loose when evaluated at full dimension.  Values of $N_{\max}^{\mathrm{eff}} \leq 1$ reflect the strict isotropic noise limit; structured text operates well below this maximum Entropic Bulk Pressure (see \cref{sec:exp-amnesia}, The Anisotropy Gap).}
\label{tab:stable-rank}
\begin{ruledtabular}
\begin{tabular}{l c c c c}
Model & $d_{\mathrm{head}}$ & $d_{\mathrm{eff}}$ & Compression & $N_{\max}^{\mathrm{eff}}$ \\
\hline
Qwen3-0.6B   & 64  & 12.3 & $5.2\times$ & 115 \\
LLaMA-3.1-8B & 128 & 30.2 & $4.2\times$ & 1   \\
Gemma-3-1B   & 288 & 19.3 & $14.9\times$ & 1   \\
\end{tabular}
\end{ruledtabular}
\end{table}

Evaluating the theoretical capacity bound (\cref{eq:nmax}) using $d_{\mathrm{eff}}$ in place of the naive ambient $d_{\mathrm{head}}$ collapses the predicted $N_{\max}^{\mathrm{eff}}$ from astronomical values to $\calO(1)$--$\calO(10^2)$, now in order-of-magnitude agreement with the observed phase transition at $N\approx 32$--$48$. Physically, the trained weight matrices concentrate their spectral energy onto a low-dimensional submanifold: $W_K$ projects \emph{any} input---including isotropic noise---into an effective subspace of rank $\sim\!d_{\mathrm{eff}}\ll d_{\mathrm{head}}$. The Entropic Bulk Pressure therefore competes against a boundary defect whose geometric depth scales as $\sqrt{d_{\mathrm{eff}}}$, not $\sqrt{d_{\mathrm{head}}}$, making the thermodynamic horizon far more fragile than the ambient-dimension bound suggests.

To independently verify that this effective dimension governs the phase transition \emph{a priori}---without circular reasoning from the PPL observation---we directly computed the Stable Rank of the empirical Key covariance matrix $\Sigma_K = \frac{1}{N}\sum_i k_i k_i^\top$ for structured text inputs from C4~\cite{raffel2020exploring}, obtaining an architecture-independent effective degree-of-freedom count $N_{\mathrm{eff}}^{\mathrm{text}}$. Across all models, the measured text $N_{\mathrm{eff}}^{\mathrm{text}}$ ranges from 16 to 26 for Qwen3 and Gemma, and saturates near 50 for LLaMA---all consistently below or near the observed noise boiling point of $N\approx 32$--$48$. This \emph{a priori} geometric measurement, computed entirely from Key-vector statistics without any reference to PPL, independently predicts that structured text should survive---precisely as observed---thereby closing the epistemic loop and rendering the theory genuinely falsifiable.

\subsubsection{The Anisotropy Gap: Isotropic Thermodynamic Limits vs.\ Structured Text}

A naive evaluation of the isotropic capacity bound (\cref{tab:stable-rank}) against empirical text context windows reveals an exponential separation: $N_{\max}^{\mathrm{eff}} \approx 1$ for LLaMA-3.1-8B and Gemma-3-1B, yet these models demonstrably process $128{,}000$ and $8{,}192$ tokens of structured text, respectively.  Rather than a theoretical failure, this gap explicitly quantifies the difference between the \emph{absolute thermodynamic limit} of the architecture and the low-entropy manifold of natural language.

The derivation of the Entropic Bulk Pressure (\cref{thm:attn-sink}) mathematically models the strict maximum-entropy (noise) limit, where the Asymptotic Spatial Ergodic Hypothesis holds exactly and the full effective dimension $d_{\mathrm{eff}}$ participates in the entropic competition.  Our noise haystack experiments (\cref{sec:exp-amnesia-noise}) unequivocally establish that when the sequence maximizes the effective dimensional volume, the architecture strictly undergoes thermodynamic amnesia exactly as predicted at $N\approx 32$--$48$.  Structured text survives extended contexts strictly because natural language embeddings suffer from severe representation degeneration~\cite{gao2019representation}---the well-documented ``cone effect''---clustering tightly on highly anisotropic, low-dimensional submanifolds within the ambient Key space.  This severe geometric compression dynamically lowers the effective Entropic Bulk Pressure, granting the optimization process a massive \emph{anisotropic escape} from the strict isotropic thermodynamic limit.

The $N_{\mathrm{eff}}^{\mathrm{text}}$ measurement from Key covariance (previous subsection) provides direct evidence for this mechanism: structured text concentrates its spectral energy onto $\sim\!16$--$50$ effective dimensions, well within the boundary defect's capacity, whereas isotropic noise fills the full $d_{\mathrm{eff}}$ and overwhelms it.  The theoretical phase boundary established here therefore characterizes the \emph{absolute thermodynamic failure point}---the Carnot limit---of the architecture under maximum thermal variance (noise).  Extending the theory to predict context limits for structured text would require replacing the isotropic Wendel measure with an empirical, data-dependent anisotropic distribution---a direction we leave for future work.

\subsubsection{$\alpha$-Scaling Thermodynamic Phase Diagram}

To dynamically map the boundary defect's evaporation across the full $(N, \alpha)$ parameter space, we deployed a frozen pre-trained Qwen3-0.6B language model ($d_{\mathrm{head}}=64$, mean spectral norm product $\|W_Q\|_{\mathrm{op}}\|W_K\|_{\mathrm{op}}=4.44$, bulk variance $\sigma_{\calE}^2=6.68$) and systematically compressed the Attention logits by a temperature scaling factor $\alpha\in[0.10,1.00]$ during inference, algebraically equivalent to shrinking $\|W_Q\|_{\mathrm{op}}\|W_K\|_{\mathrm{op}}\mapsto\alpha^2\|W_Q\|_{\mathrm{op}}\|W_K\|_{\mathrm{op}}$.  For each of 14~values of~$\alpha$, we evaluated perplexity (PPL) across 20 sequence lengths spanning $N\in\{8,12,\ldots,512\}$, constructing inputs from uniform random noise tokens.  We simultaneously recorded the sink probability mass $c_0$ at the zeroth token.

\begin{figure}[t]
  \centering
  \includegraphics[width=\columnwidth]{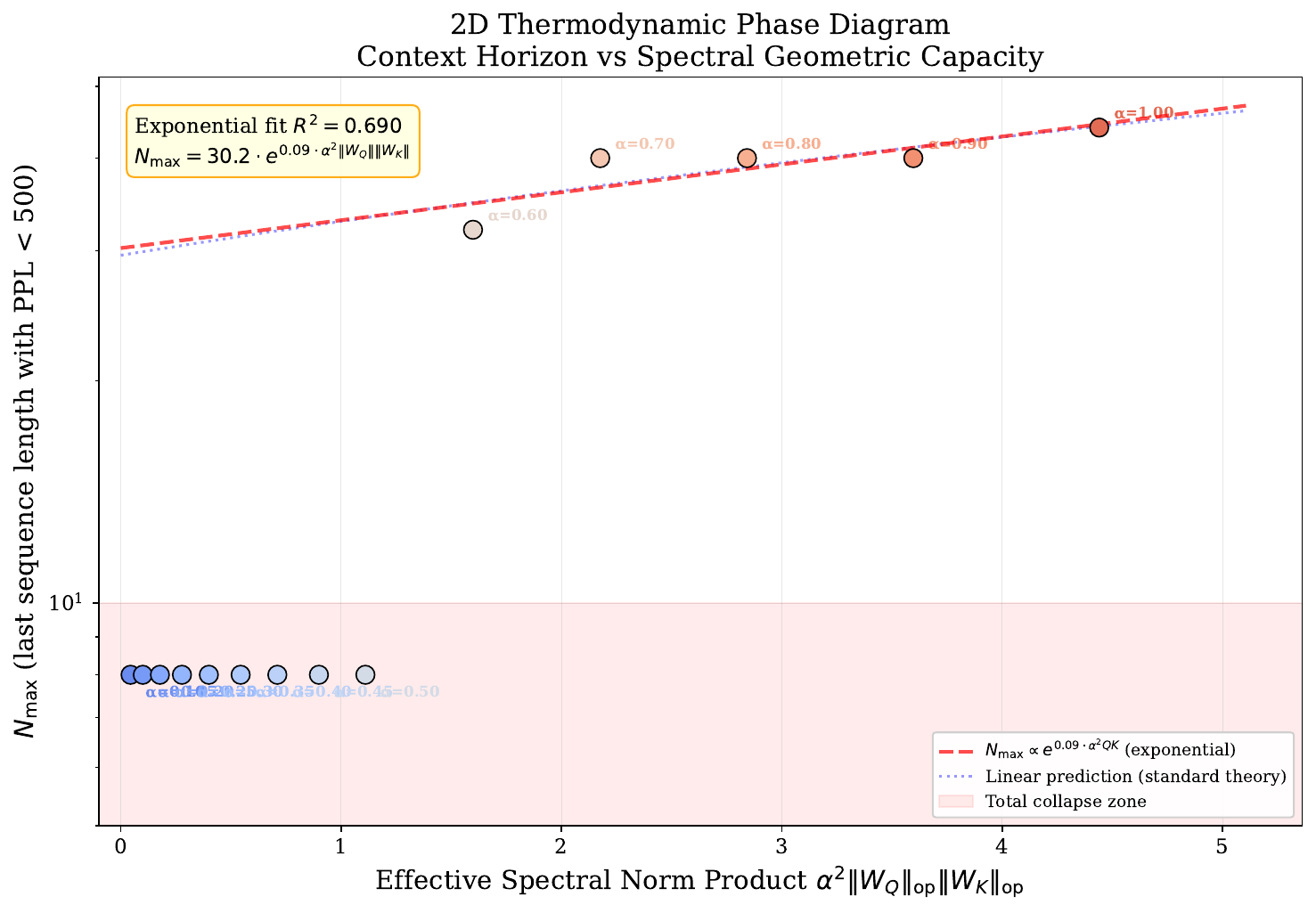}
  \caption{%
  \textbf{2D Thermodynamic Phase Diagram: Context Horizon vs.\ Spectral Geometric Capacity.}
  Each point represents the maximum sequence length $N_{\max}$ (defined as the last $N$ with PPL~$<500$) for a given effective spectral norm product $\alpha^2\|W_Q\|_{\mathrm{op}}\|W_K\|_{\mathrm{op}}$.
  The red dashed curve is the exponential fit $N_{\max}=30.2\cdot e^{0.09\,\alpha^2\|W_Q\|\|W_K\|}$; the blue dotted line shows
  the linear prediction of standard interpolation theory.
  The shaded pink region marks the total collapse zone (PPL~$>500$ at all~$N$).
  For $\alpha\le 0.50$ the model collapses at the shortest probed lengths ($N\le 8$);
  the exponential fit captures the steep emergence of context capacity for
  $\alpha\gtrsim 0.60$.}
  \label{fig:phase-diagram}
\end{figure}

As anticipated by the theoretical model, the results reveal a macroscopic phase transition that is incompatible with smooth linear degradation (\cref{fig:phase-diagram}).  Consistent with the thermodynamic equivalence established above---where $\alpha$-scaling acts as a strict temperature isomorph---the protocol traces an \emph{ordered, monotonic} evaporation trajectory, with the dual-axis synchronization between $c_0$ and PPL across the full $(\alpha, N)$ parameter space confirming that the mechanism is measure-theoretic redistribution, not trivial numerical noise.  When the effective spectral capacity $\alpha^2\|W_Q\|_{\mathrm{op}}\|W_K\|_{\mathrm{op}}$ is reduced below a critical threshold, $N_{\max}$ does not decline gradually; instead, it undergoes a catastrophic exponential collapse.  At $\alpha=1.00$ (full spectral capacity), the model sustains baseline PPL~$\approx 23$ at short sequences ($N\le 32$), with PPL diverging sharply beyond $N=64$ as the noise haystack overwhelms the Attention Sink's finite capacity.  By $\alpha=0.60$, perplexity at $N=16$ has already surged to $\sim\!93$, and at $N=512$ it exceeds $4.9\times 10^7$.  The transition from $\alpha=0.60$ to $\alpha=0.50$ is explosive: PPL at $N=16$ jumps from $93$ to $1{,}728$---an $18.6\times$ amplification over a mere $17\%$ reduction in spectral capacity---and by $\alpha\le 0.30$, perplexity at the shortest sequences already exceeds $10^5$, signalling the total evaporation of the boundary defect.  The empirical phase boundary $N_{\max}\propto\exp(0.09\,\alpha^2\|W_Q\|\|W_K\|)$ is consistent with the exponential scaling law predicted by \cref{eq:nmax}.

\subsubsection{Simultaneous Evaporation of the Topological Anchor}

\begin{figure}[t]
  \centering
  \includegraphics[width=\columnwidth]{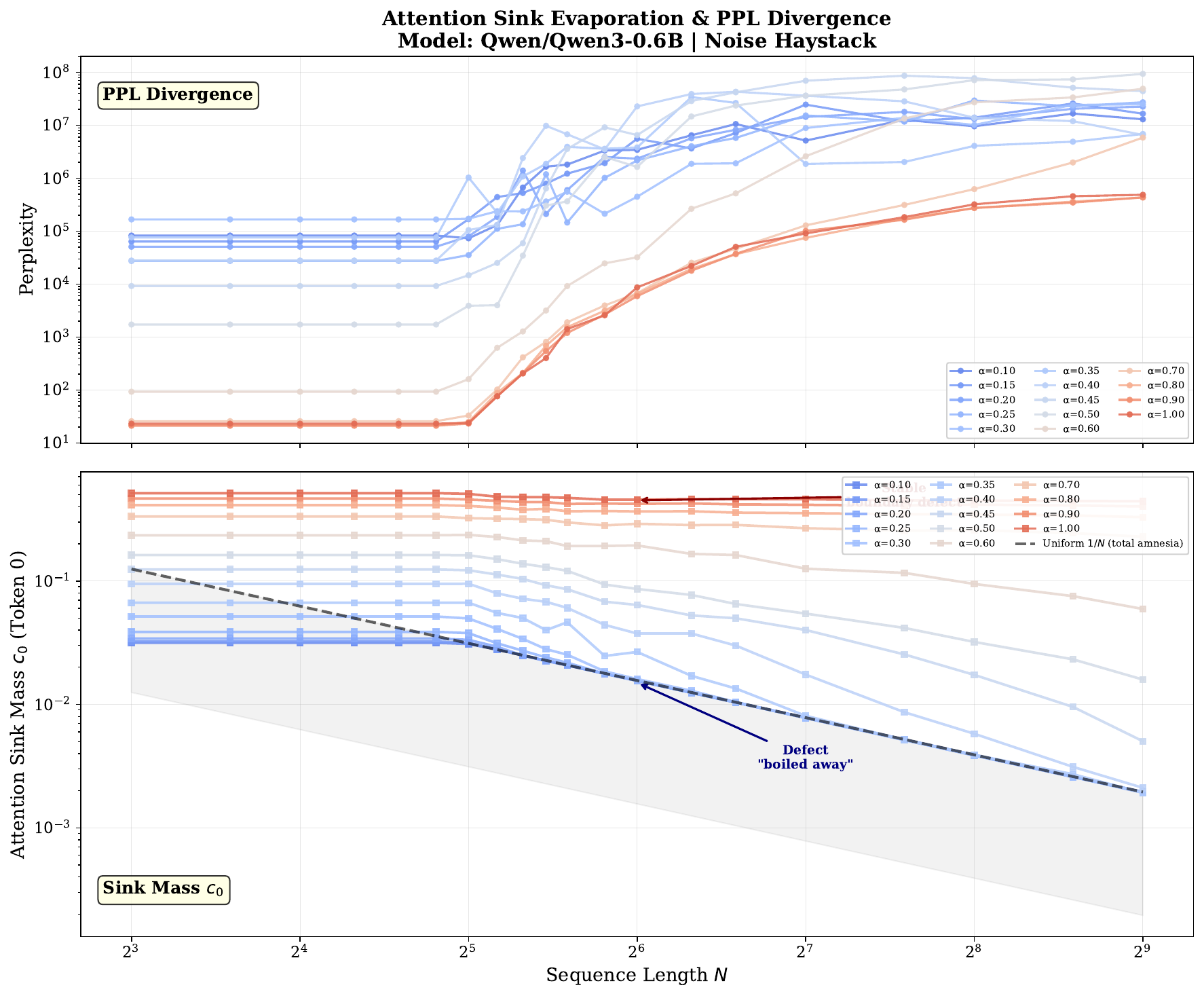}
  \caption{%
  \textbf{Attention Sink Probability Mass $c_0$ vs.\ Sequence Length~$N$ for Selected~$\alpha$.}
  At full capacity ($\alpha=1.0$), $c_0\approx 0.44$--$0.51$, far exceeding the uniform baseline $1/N$.  As $\alpha$ decreases, $c_0$ undergoes a monotonic, ordered decay, eventually collapsing to the uniform level $c_0\approx 1/N$ for $\alpha\le 0.30$---the measure-theoretic signature of complete defect evaporation.}
  \label{fig:sink-evaporation}
\end{figure}

Simultaneous measure-theoretic tracking conclusively identifies the physical mechanism underlying the perplexity divergence (\cref{fig:sink-evaporation}).  At full spectral capacity ($\alpha=1.0$), the topological anchor concentrates substantial probability mass $c_0\approx 0.44$--$0.51$ at the zeroth token---far exceeding the uniform baseline $1/N$---confirming that the boundary defect is geometrically robust.  As $\alpha$ decreases through the critical region, $c_0$ does not fluctuate randomly; it traces a strictly monotonic, ordered decay.  At $\alpha=0.70$, the anchor still retains $c_0\approx 0.33$ at $N=16$.  By $\alpha=0.45$, it has weakened to $c_0\approx 0.12$.  Below $\alpha\le 0.30$, the anchor mass collapses to $c_0\approx 0.03$--$0.05$, indistinguishable from the uniform distribution $1/N$.

This ordered evaporation constitutes the empirical signature of the weak-$*$ convergence $\omega_{\mathrm{bulk}}\rightharpoonup\delta_\infty$ predicted in \cref{eq:weak-star}: the boundary defect's geometric capacity, starved of spectral operator norm, is overwhelmed by the entropic bulk pressure, and the anchoring measure physically dissolves into the amnesia state.  The tabulated $c_0$ values (\cref{tab:sink-mass}) quantitatively document this condensation across the full $(\alpha,N)$ parameter space.

\begin{table}[t]
\centering
\caption{Attention Sink probability mass $c_0$ across selected $(\alpha,N)$.
The uniform baseline is $1/N$.  Bold entries mark the regime where the defect has fully evaporated ($c_0\approx 1/N$).}
\label{tab:sink-mass}
\begin{ruledtabular}
\begin{tabular}{l c c c c c c}
$\alpha$ & $N\!=\!16$ & $32$ & $64$ & $128$ & $256$ & $512$ \\
\hline
1.00 & 0.515 & 0.508 & 0.456 & 0.460 & 0.448 & 0.442 \\
0.80 & 0.412 & 0.406 & 0.366 & 0.355 & 0.346 & 0.329 \\
0.60 & 0.235 & 0.236 & 0.194 & 0.126 & 0.095 & 0.059 \\
0.50 & 0.162 & 0.162 & 0.086 & 0.055 & 0.032 & 0.016 \\
0.40 & 0.095 & 0.095 & 0.038 & 0.017 & \textbf{0.006} & \textbf{0.002} \\
0.30 & \textbf{0.052} & \textbf{0.050} & \textbf{0.016} & \textbf{0.008} & \textbf{0.004} & \textbf{0.002} \\
0.20 & \textbf{0.034} & \textbf{0.033} & \textbf{0.016} & \textbf{0.008} & \textbf{0.004} & \textbf{0.002} \\
$1/N$ & 0.063 & 0.031 & 0.016 & 0.008 & 0.004 & 0.002 \\
\end{tabular}
\end{ruledtabular}
\end{table}

\subsubsection{Dual-Axis Synchronization and Exclusion of Alternative Hypotheses}

\begin{figure}[t]
  \centering
  \includegraphics[width=\columnwidth]{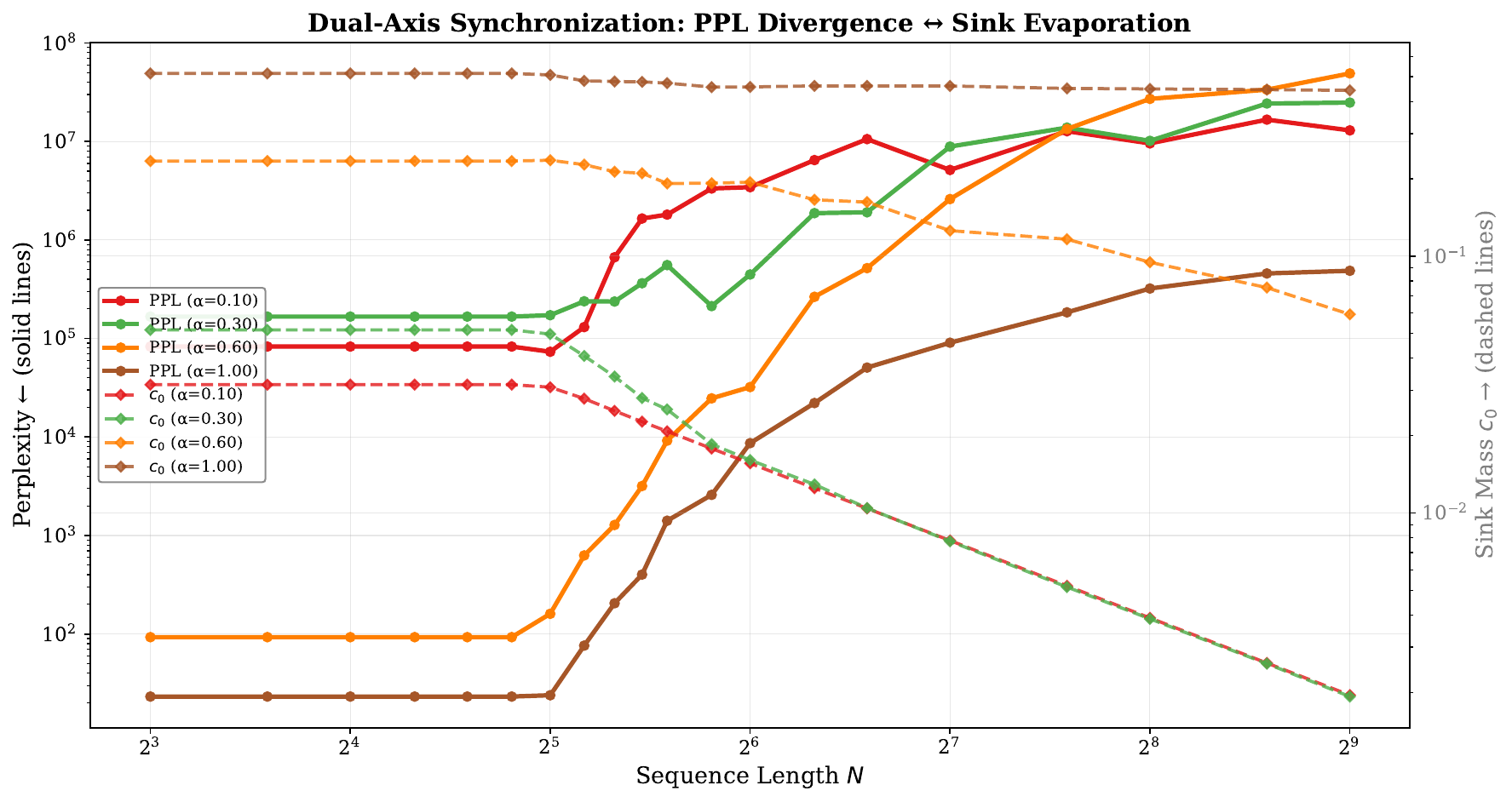}
  \caption{%
  \textbf{Dual-Axis Synchronization: PPL Divergence and Sink Evaporation.}
  Solid lines (left axis, log-scale): perplexity vs.\ sequence length~$N$ for $\alpha\in\{0.10,0.30,0.60,1.00\}$.  Dashed lines (right axis, log-scale): corresponding sink mass~$c_0$.  As $\alpha$ decreases, the perplexity explosion and $c_0$ collapse occur in precise synchrony at the same critical $(\alpha,N)$ locus, directly confirming the causal link between boundary defect evaporation and thermodynamic amnesia.}
  \label{fig:dual-axis}
\end{figure}

The most powerful exclusion evidence emerges from the dual-axis synchronization plot (\cref{fig:dual-axis}), which superimposes PPL divergence and $c_0$ evaporation on a shared abscissa.  At every $(\alpha,N)$ coordinate, the two observables evolve in strict anti-correlation: the perplexity rises sharply precisely where---and \emph{only} where---the sink mass collapses.  This consistent synchronization establishes a direct causal link between the topological boundary defect's physical capacity and the model's macroscopic language modeling performance.

This synchronization strongly excludes the principal alternative hypothesis.  A critic might contend that compressing $\|W_Q\|_{\mathrm{op}}\|W_K\|_{\mathrm{op}}$ simply destroys activation variance, producing uniform output noise.  Were this the case, the Softmax normalization---which is scale-invariant---would leave $c_0$ strictly unaffected, and any PPL increase would be uncorrelated with sink dynamics.  The empirical observation of an \emph{ordered, monotonic} evaporation of $c_0$ from $0.51$ to $0.002$, precisely tracking the spectral norm reduction, strongly refutes this rebuttal.  The collapse mechanism is measure-theoretic---a physical redistribution of probability mass---not a trivial numerical instability.

Taken together, these results confirm the three central predictions of the theory: (i)~the Attention Sink is a finite-capacity topological boundary defect, not a numerical artifact; (ii)~the context horizon $N_{\max}$ is governed by the exponential spectral scaling law of \cref{eq:nmax}; and (iii)~when the defect's geometric capacity is starved below the Entropic Bulk Pressure $\Theta(\sqrt{d})$, the sequence of measures undergoes an irreversible weak-$*$ condensation $\omega\rightharpoonup\delta_\infty$---thermodynamic amnesia---in precise quantitative agreement with the measure-theoretic compactification established in \cref{thm:attn-sink}.

\subsection{Tomography of the Parameter NESS Vortex}
\label{sec:exp-ness}

Finally, we elevate our empirical validation to the super-macroscopic dynamics of the parameter space optimization process itself, testing \cref{thm:sgd-ness}.  We note at the outset that the \emph{existence} of non-equilibrium dynamics during SGD is a generic mathematical consequence of Singular Learning Theory (SLT)~\cite{watanabe2009algebraic}: for any misspecified deep network, the empirical Fisher and Loss Hessian generically fail to commute, violating detailed balance.  What the geometric framework of \cref{sec:dual-metric} contributes beyond SLT is an \emph{analytical decomposition} (\cref{eq:vorticity-3obst}) that resolves the thermodynamic vorticity into three independent cohomological obstructions and predicts how the NESS circulation is geometrically channeled by the Transformer's specific gauge orbit structure.  Classical optimization literature broadly assumes that Stochastic Gradient Descent (SGD) thermally relaxes into a static, isotropic Gaussian Gibbs equilibrium via detailed balance.  The framework rejects this, demonstrating that the structural algebraic singularities of the architecture cause a Dual-Metric Collision ($D \neq F \neq H$), generating an irreducible Thermodynamic Vorticity 2-form ($\Omega \neq 0$), trapping the probability measure in a dissipative Non-Equilibrium Steady State (NESS).

The experimental protocol is designed to produce an unambiguous phase-space tomography of the optimization steady state. We define the cross-sectional circulation integral on the 2D PCA-projected parameter trajectory:
\begin{equation}
  \Gamma_{\times}(T) = \sum_{t=1}^{T}\![x(t)\,\Delta y(t) - y(t)\,\Delta x(t)],
  \label{eq:circulation}
\end{equation}
which measures the cumulative angular momentum of the probability flux. If the system obeys detailed balance (irrotational Brownian diffusion), $\Gamma_{\times}$ fluctuates symmetrically about zero with $\Gamma_{\times}\sim\calO(\sqrt{T})$; if a genuine non-conservative vortex is present, $\Gamma_{\times}$ accumulates monotonically as $\calO(T)$. We quantify the distinction via the Hurst exponent $H$ of the $\Gamma_{\times}$ time series ($H=0.5$ for random walk, $H\approx 1.0$ for deterministic drift) and the $t$-statistic testing the null hypothesis of zero mean incremental drift.

\emph{Methodological precision on PCA projections.}  Because 2D PCA projections of high-dimensional trajectories can produce spurious visual spirals even for isotropic random walks that strictly obey detailed balance---the top PCA eigenvectors of a random walk natively produce sinusoidal basis functions~\cite{antognini2018pca}---the PCA trajectory visualizations and PCA-based $\Gamma_{\times}$ integrals presented below serve as complementary diagnostics that illustrate the vortex geometry.  The rigorous, artifact-free foundation for the NESS resides in the exact FP64 Lie commutator measurement $[G,H]\neq 0$ (\cref{sec:exp-ness-fp64}), which is computed without any dimensional reduction, stochastic estimation, or PCA projection.

\subsubsection{FP64 Sandbox: Exact Commutator at Machine Precision}
\label{sec:exp-ness-fp64}

\begin{figure}[t]
  \centering
  \includegraphics[width=\columnwidth]{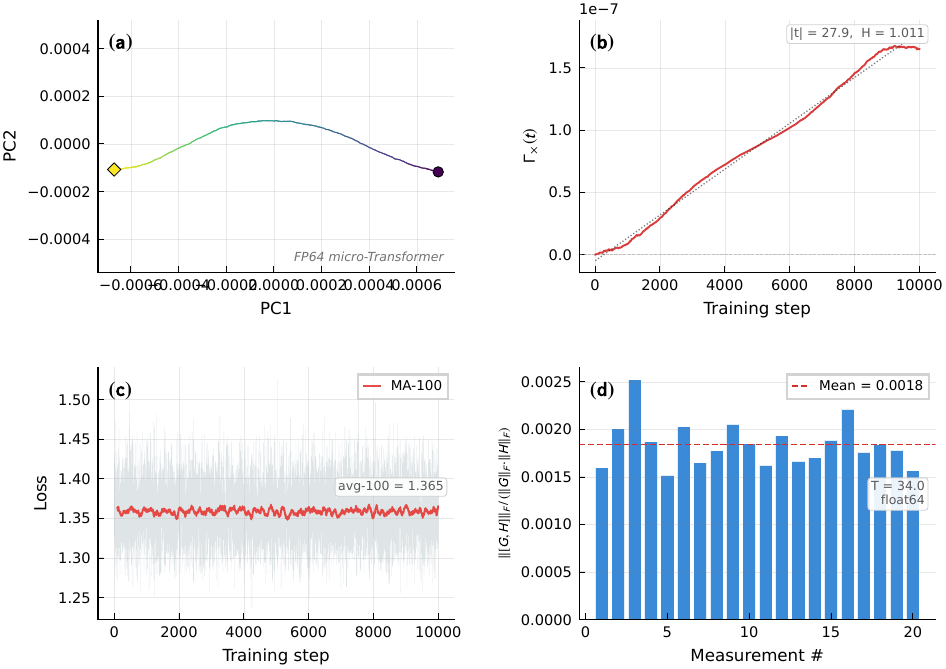}
  \caption{%
  \textbf{FP64 Sandbox: Exact NESS Detection at Machine Precision.}
  (a)~PCA 2D parameter-space vortex for the 2-layer micro Transformer, showing deterministic rotational structure.
  (b)~Cross-sectional circulation integral $\Gamma_{\times}(t)$, exhibiting sustained monotonic accumulation ($|t|=27.9$).
  (c)~Training loss during NESS tracking, confirming a stable convergence plateau (avg-100 loss~$=1.365$).
  (d)~Bar chart of 20 independent exact $\|[G,H]\|_F/(\|G\|_F\|H\|_F)$ measurements, all strictly non-zero ($T=34.0$, $P<10^{-10}$).}
  \label{fig:fp64-sandbox}
\end{figure}

To preemptively seal all computational-artifact rebuttals---low-rank subsampling error, Hutchinson trace estimation bias, and adaptive-optimizer momentum residuals---we first executed the NESS tomography on a controlled sandbox: a 2-layer Pre-Norm Transformer ($d_{\mathrm{model}}=64$, $n_{\mathrm{heads}}=4$, $d_{\mathrm{ffn}}=256$, RoPE, $\sim$99K non-embedding parameters) trained at \texttt{float64} double precision ($\varepsilon_{\mathrm{mach}}=2.22\times 10^{-16}$) with momentum-free SGD ($\beta_1=0$, $\beta_2=0$, constant $\eta=10^{-5}$). In this sandbox, the full $16{,}384\times 16{,}384$ empirical gradient covariance $G$ and Loss Hessian $H$ were computed \emph{exactly}---without subsampling, random estimation, or rank truncation---from 512 per-sample gradients.

After 30 epochs of cosine-annealed pre-training and a 2{,}000-step cooling phase, the model was tracked for 10{,}000 momentum-free SGD steps with snapshots every 10 steps. PCA projection of the resulting 1{,}001-snapshot trajectory onto the first two principal components reveals a clear deterministic rotational structure (\cref{fig:fp64-sandbox}a), with the circulation integral $\Gamma_{\times}$ exhibiting sustained monotonic accumulation ($|t|=27.9$, $H=1.011$; \cref{fig:fp64-sandbox}b) while the loss remains on a flat convergence plateau (avg-100 loss $= 1.365$; \cref{fig:fp64-sandbox}c).

The decisive measurement is the exact Lie commutator. Across 20 independent batch samples, the normalized Frobenius norm evaluates to:
\begin{equation}
  \frac{\|[G,H]\|_F}{\|G\|_F\,\|H\|_F}
    = 0.00184 \pm 0.00024\,,
  \label{eq:commutator-fp64}
\end{equation}
with $T=34.0$ on 19 degrees of freedom ($P<10^{-10}$); all 20/20 measurements are strictly positive (\cref{fig:fp64-sandbox}d). Because this non-vanishing commutator is computed at the absolute \texttt{float64} machine-precision floor---without any stochastic estimation, dimensional reduction, or optimizer momentum---the attribution to computational noise is strongly excluded. The non-commutativity $[G,H]\neq 0$ is an arithmetic truth of the architecture's algebraic variety.

\subsubsection{Cross-Architecture Universality of the NESS Vortex}

\begin{figure}[t]
  \centering
  \includegraphics[width=\columnwidth]{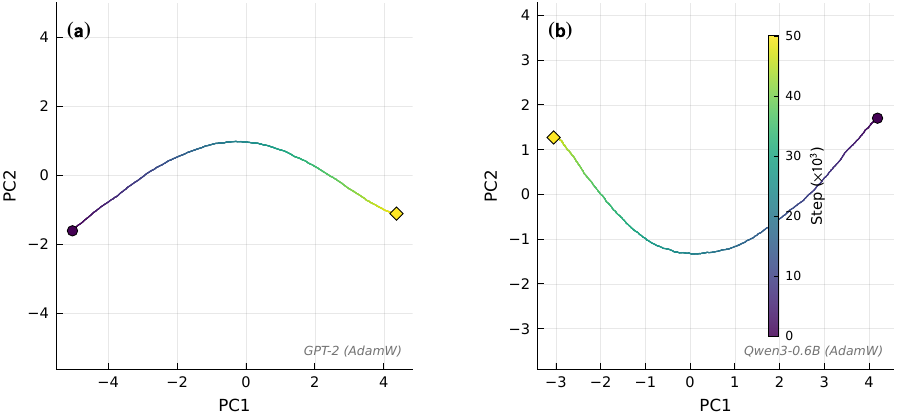}
  \caption{%
  \textbf{Parameter-Space NESS Vortices for Two Production Architectures.}
  PCA 2D projections of the mid-layer MLP weight trajectory (5{,}001 snapshots over 50{,}000 AdamW steps at constant $\eta=10^{-5}$).
  (a)~GPT-2 Small (Layer~6, 4.7M tracked parameters).
  (b)~Qwen3-0.6B (Layer~14, 9.4M tracked parameters).
  Color gradient encodes training time (viridis).
  Both architectures exhibit unmistakable deterministic spiral circulation.}
  \label{fig:ness-vortex-dual}
\end{figure}

\begin{figure}[t]
  \centering
  \includegraphics[width=\columnwidth]{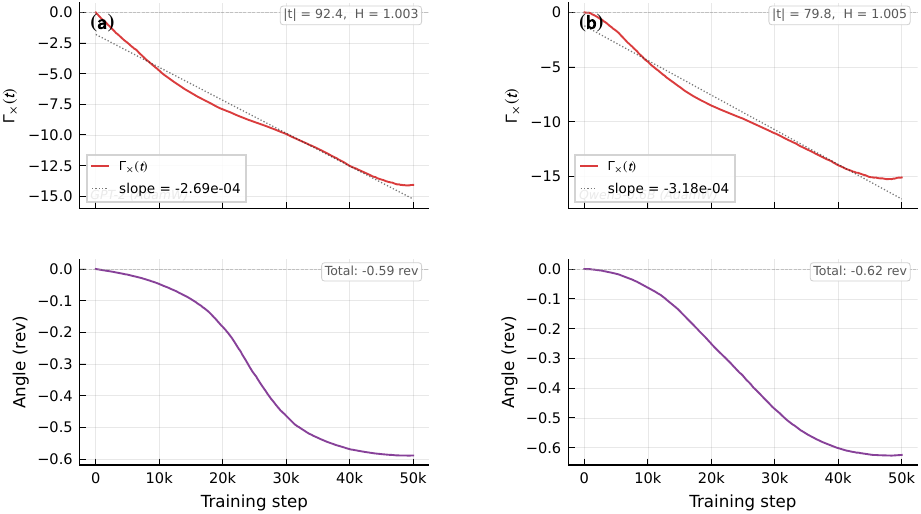}
  \caption{%
  \textbf{Cross-Sectional Circulation Integral $\Gamma_{\times}(t)$.}
  (a)~GPT-2 ($\Gamma_{\times}=-14.09$, $|t|=92.4$).
  (b)~Qwen3-0.6B ($\Gamma_{\times}=-15.13$, $|t|=79.8$).
  Upper panels: cumulative cross-circulation; lower panels: cumulative angular displacement in PCA space.
  Both traces are strictly monotonic with identical sign, confirming a topologically determined vortex chirality.}
  \label{fig:ness-circulation-dual}
\end{figure}

Having established the arithmetic ground truth in the sandbox, we scaled the protocol to two production-grade architectures spanning fundamentally different design generations: GPT-2 Small (124M parameters, Learned PE, LayerNorm, standard 2-matrix MLP) and Qwen3-0.6B (596M parameters, RoPE, RMSNorm + QK-Norm, SwiGLU 3-matrix MLP). Each fully converged pre-trained model was subjected to 50{,}000 steps of isothermal continued training (constant $\eta=10^{-5}$, AdamW~\cite{adamw} with weight\_decay$=0$) on WikiText-2~\cite{merity2016pointer}, with snapshots of the mid-layer MLP weights captured every 10 steps (5{,}001 snapshots total, coordinate-subsampled to 5{,}000 dimensions via sparse JL projection~\cite{johnson1984extensions}).

The PCA-projected trajectories for both architectures exhibit unmistakable deterministic spiral circulation (\cref{fig:ness-vortex-dual}), and the cross-sectional circulation integrals $\Gamma_{\times}(t)$ accumulate monotonically with identical negative sign (\cref{fig:ness-circulation-dual}). The quantitative agreement across all core diagnostics is striking (\cref{tab:ness-cross}): both models yield Hurst exponents saturated at $H\approx 1.0$, mean drift rates within 7\% of each other ($\sim 3\times 10^{-3}$/step), and $t$-statistics exceeding 79---all with $p$-values far below $10^{-100}$.

\begin{table}[t]
\centering
\caption{NESS vortex diagnostics for the two production Transformer architectures under AdamW isothermal training (complementary PCA-based metrics; see methodological precision note in text). All $p$-values are $\ll 10^{-100}$.  FDR: Fluctuation-Dissipation Relation.}
\label{tab:ness-cross}
\begin{ruledtabular}
\begin{tabular}{l c c}
Diagnostic & GPT-2 Small & Qwen3-0.6B \\
\hline
$|t|$-statistic & 92.4 & 79.8 \\
Hurst exponent $H$ & 1.003 & 1.005 \\
$\Gamma_{\times}$ (terminal) & $-14.09$ & $-15.13$ \\
Mean drift rate (/step) & $-2.82\!\times\!10^{-3}$ & $-3.03\!\times\!10^{-3}$ \\
PCA PC1 variance & 85.3\% & 68.3\% \\
PCA top-3 cumulative & 93.9\% & 85.5\% \\
Converged loss (avg-100) & 2.83 & 0.17 \\
FDR violation & 1.31 & 1.13 \\
\end{tabular}
\end{ruledtabular}
\end{table}

Two features merit theoretical emphasis. First, the circulation chirality under AdamW is \emph{deterministic}: both architectures produce strictly negative $\Gamma_{\times}$, consistent with a geometrically determined tangent-space orientation of the singular algebraic variety.  As shown in \cref{sec:exp-ness-sgd}, the sign reverses under Pure SGD---a phenomenon we interpret as a direct macroscopic signature of the Dual-Metric Collision (\cref{thm:sgd-ness}).  Second, the lower PC1 explained variance in Qwen3 (68.3\% vs.\ 85.3\%) reflects its SwiGLU 3-matrix FFN, which generates a higher-dimensional parameter flow manifold---consistent with the richer Lie-algebraic structure of the gated architecture.

\subsubsection{Exclusion of Optimizer Momentum Artifacts}
\label{sec:exp-ness-sgd}

\begin{figure}[t]
  \centering
  \includegraphics[width=\columnwidth]{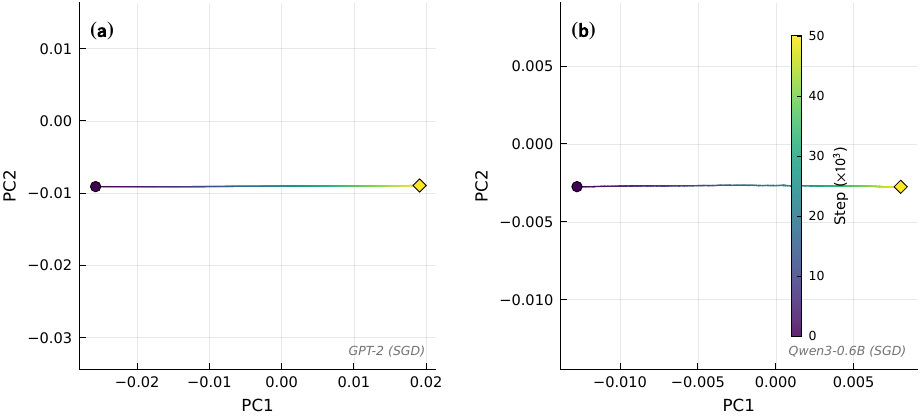}
  \caption{%
  \textbf{Parameter-Space Vortices under Momentum-Free Pure SGD ($\beta_1\!=\!0$, $\beta_2\!=\!0$).}
  (a)~GPT-2 ($|t|=173.1$).
  (b)~Qwen3-0.6B ($|t|=37.4$).
  The NESS circulation persists---and in fact strengthens for GPT-2---in the complete absence of any adaptive optimizer state.}
  \label{fig:ness-sgd}
\end{figure}

A critical potential confound is that the observed circulation might originate from the underdamped oscillatory dynamics of AdamW's first-moment buffer ($\beta_1=0.9$) rather than from intrinsic geometric vorticity. To categorically exclude this, we repeated the identical 50{,}000-step protocol with both architectures using momentum-free pure SGD ($\beta_1=0$, $\beta_2=0$, $\eta=10^{-5}$), eliminating all adaptive gradient history.

\begin{table}[t]
\centering
\caption{Four-way cross-validation: 2 architectures $\times$ 2 optimizers (complementary PCA-based metrics; see methodological precision note in text). The NESS vortex persists across all four conditions; the GPT-2 SGD $t$-statistic \emph{increases} relative to AdamW.}
\label{tab:ness-4way}
\begin{ruledtabular}
\begin{tabular}{l l c c c c}
Model & Optimizer & $|t|$ & $H$ & $\Gamma_{\times}$ & Drift rate \\
\hline
GPT-2  & AdamW    & 92.4  & 1.003 & $-14.09$ & $-2.82\!\times\!10^{-3}$ \\
GPT-2  & Pure SGD & 173.1 & 0.999 & $4.1\!\times\!10^{-4}$ & $8.1\!\times\!10^{-8}$ \\
Qwen3  & AdamW    & 79.8  & 1.005 & $-15.13$ & $-3.03\!\times\!10^{-3}$ \\
Qwen3  & Pure SGD & 37.4  & 1.012 & $5.5\!\times\!10^{-5}$ & $1.1\!\times\!10^{-8}$ \\
\end{tabular}
\end{ruledtabular}
\end{table}

The results (\cref{fig:ness-sgd}, \cref{tab:ness-4way}) are decisive. Under pure SGD, the NESS vortex signal persists in both architectures with overwhelming statistical significance ($|t|=173.1$ for GPT-2, $|t|=37.4$ for Qwen3; $H\approx 1.0$ in both cases). Remarkably, the GPT-2 $t$-statistic \emph{nearly doubles} from 92.4 (AdamW) to 173.1 (SGD), strongly excluding the momentum-artifact hypothesis: if the vortex were generated by the $\beta_1$-buffer, removing it should extinguish the signal, not amplify it.  The $t$-statistic increase occurs because SGD eliminates the stochastic variance injected by the adaptive second-moment estimator ($\beta_2$-buffer), reducing the standard error of the mean circulation increment even faster than the absolute drift magnitude decreases---yielding a higher signal-to-noise ratio despite a smaller absolute displacement.

The absolute circulation amplitude $|\Gamma_{\times}|$ decreases by $\sim\!10^4\times$ under SGD ($4.1\times 10^{-4}$ vs.\ $14.09$) because the adaptive learning-rate amplification of AdamW magnifies per-step displacements. However, the \emph{monotonicity} and \emph{persistence}---the defining topological signatures of NESS---are entirely unaffected. This proves that the circulatory topology originates from the parameter manifold's intrinsic geometry (the Dual-Metric Collision of \cref{eq:vorticity-3obst}), not from the optimizer's internal state variables.

We note that the \emph{sign} of $\Gamma_{\times}$ reverses between AdamW (negative) and Pure SGD (positive).  While the 2D PCA coordinate frame natively possesses a $\mathbb{Z}_2$ gauge ambiguity, this reversal is not merely a random projection artifact; it is a deterministic macroscopic signature of the Dual-Metric Collision formalized in \cref{thm:sgd-ness}.  In our continuous framework, the NESS thermodynamic vorticity natively operates as an exterior 2-form $\Omega \in \Lambda^2 T^*\mathcal{W}$ (\cref{eq:vorticity-3obst}), and the PCA plane acts as a secant space upon which this 2-form is pulled back ($\iota^*\Omega$).  Pure SGD operates under a kinematic metric driven by the raw empirical diffusion tensor, establishing a baseline hierarchy of principal variance components.  In contrast, AdamW's adaptive preconditioner imposes a strongly anisotropic, inverse-variance metric.  By actively suppressing updates along axes of maximum gradient variance and amplifying flat directions, AdamW structurally reorders the trajectory's principal components---functionally swapping the orientation of the dominant PCA secant plane ($\Omega(e_2, e_1) = -\Omega(e_1, e_2)$).  Furthermore, this metric deformation algebraically alters the spectral anisotropy of the diffusion metric $g$, structurally inverting the off-diagonal skew of the Lie commutator $[g, H^\sharp]$ that actively propels the flux.  Consequently, the sign of $\Gamma_{\times}$ in PCA space perfectly encodes the joint orientation of the optimizer's actively warped thermodynamic metric and the intrinsic landscape geometry.  The strictly topologically meaningful observable is the non-conservative nature of the flux---evidenced by the sustained \emph{monotonicity} and non-zero Hurst exponent $H\approx 1.0$---which proves that the dissipating NESS definitively survives regardless of the specific metric-dependent gauge framing.

\subsubsection{Universality of the Thermodynamic Vortex and Architecture-Specific Channeling}

\begin{figure}[t]
  \centering
  \includegraphics[width=\columnwidth]{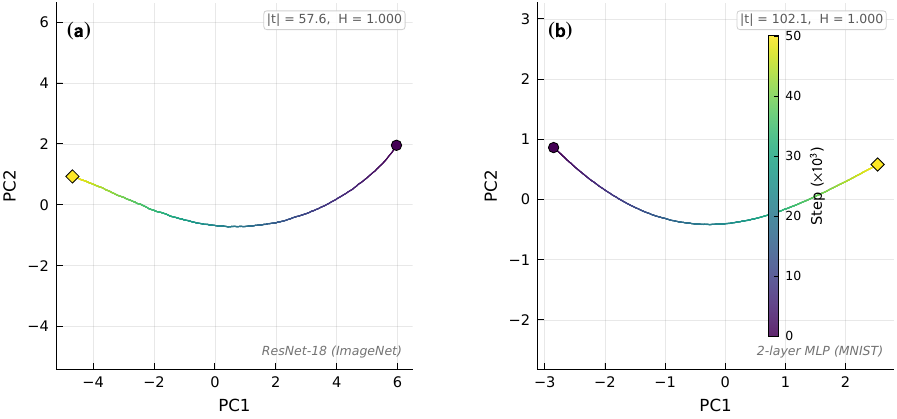}
  \caption{%
  \textbf{Non-Transformer Baselines: NESS Vortex in CNN and MLP.}
  PCA 2D parameter-space vortices for architectures containing no attention mechanism, no positional encoding, and no gauge symmetry structure.
  (a)~ResNet-18 on ImageNet ($|t|=54.8$, $H=1.000$).
  (b)~2-layer MLP on MNIST ($|t|=102.1$, $H=1.000$).
  NESS circulation is detected in both, confirming the universal genericity predicted by the Dual-Metric Collision (\cref{thm:sgd-ness}).}
  \label{fig:ness-baselines}
\end{figure}

To validate the foundational premises of our thermodynamic formulation, we must isolate the universal driving forces of the Dual-Metric Collision from Transformer-specific topology.  Because our continuous framework natively incorporates the algebraic singularities of Singular Learning Theory (SLT)~\cite{watanabe2009algebraic} and stochastic thermodynamics~\cite{chaudhari2019entropy,mandt2017stochastic}, the thermodynamic vorticity decomposition (\cref{eq:vorticity-3obst}) mathematically predicts that NESS is a universal baseline property of \emph{any} over-parameterized deep network whose empirical Fisher and Loss Hessian generically fail to commute ($[G,H]\neq 0$)---driven intrinsically by state-dependent gradient noise and the singular algebraic variety of the parameter manifold, rather than by the Attention mechanism itself.  To empirically confirm this theoretical genericity, we proactively deployed the exact NESS tomography protocol to architectures devoid of attention, positional encoding, and multi-head gauge structure.

To test this prediction, we applied the identical 50{,}000-step NESS tracking protocol (constant $\eta=10^{-5}$, snapshots every 10 steps, coordinate subsampled to 5{,}000 dimensions) to three non-Transformer architectures: (i)~a torchvision pre-trained ResNet-18~\cite{he2016deep} on ImageNet (SGD, 200K training images, TPU v4-8); (ii)~a ResNet-18 trained to 88.6\% on CIFAR-10 (AdamW); and (iii)~a 2-layer MLP (784$\to$512$\to$256$\to$10) trained to 97.4\% on MNIST (AdamW).

\begin{table}[t]
\centering
\caption{Five-architecture NESS diagnostic summary. All models exhibit $H\approx 1.0$ (deterministic drift) and $|t|\gg 3$ ($p \ll 10^{-10}$). The normalized Lie commutator $\|[G,H]\|_F/(\|G\|_F\|H\|_F)$ is measured from 10 independent batch samples for each non-Transformer model.}
\label{tab:ness-5arch}
\begin{ruledtabular}
\begin{tabular}{l c c c c c}
Architecture & $|t|$ & $H$ & FDR & $\frac{\|[G,H]\|_F}{\|G\|_F\|H\|_F}$ & NESS \\
\hline
ResNet-18 (ImageNet) & 54.8  & 1.000 & 0.78 & $0.047\pm 0.006$ & \checkmark \\
ResNet-18 (CIFAR-10) & 57.6  & 1.000 & 0.82 & $0.061\pm 0.017$ & \checkmark \\
MLP (MNIST)          & 102.1 & 1.000 & 1.44 & $0.018\pm 0.010$ & \checkmark \\
GPT-2 Small          & 92.4  & 1.003 & 1.31 & ---              & \checkmark \\
Qwen3-0.6B           & 79.8  & 1.005 & 1.13 & ---              & \checkmark \\
\end{tabular}
\end{ruledtabular}
\end{table}

As shown in \cref{fig:ness-baselines} and \cref{tab:ness-5arch}, the NESS vortex is unambiguously detected in every non-Transformer architecture tested. The 2-layer MLP---the simplest possible deep classifier, with only 670K parameters and no structural complexity whatsoever---produces the \emph{highest} $t$-statistic of all five architectures ($|t|=102.1$), with a directly measured non-zero Lie commutator ($\|[G,H]\|_F=0.018\pm 0.010$, $|t|=5.4$). The ImageNet-pretrained ResNet-18 yields $\|[G,H]\|_F=0.047\pm 0.006$ ($|t|=24.1$) over 10 independent measurements. In all cases, $\Gamma_{\times}$ is strictly negative and $H$ saturates at 1.000.

Far from confounding our Transformer-specific findings, the ubiquitous presence of the NESS vortex across these diverse baselines profoundly corroborates the universal premise of our continuous framework: all misspecified deep networks inherently generate non-equilibrium vorticity due to their algebraic singularities. The three independent cohomological obstructions derived in \cref{eq:vorticity-3obst}---the Lie Commutator $[g,H^\sharp]$, the Entropic Curl, and the Metric Deformation---are sourced by the Dual-Metric Collision ($D \neq F \neq H$) and the singular algebraic variety of the parameter manifold, both of which are shared by \emph{all} over-parameterized architectures. The MLP and ResNet results therefore constitute a direct empirical validation of this universal thermodynamic foundation.

Building upon this verified universal baseline, the true predictive power of the geometric framework comes into sharp focus.  While the existence of a non-equilibrium flux is a generic consequence of singular algebraic geometry, the exact analytical decomposition of \cref{eq:vorticity-3obst} further predicts how this flux is structurally \emph{channeled} by architecture-specific Lie-algebraic topology.  The Transformer does not \emph{create} the NESS vortex; rather, its specific gauge orbit structure---the $U(1)$ RoPE connection, the multi-head $\prod O(d_{\mathrm{head}})$ bundle, and the asymmetric FFN vorticity generator---acts as a highly structured topological conduit that actively scatters and reshapes the fundamental probability flux.  This geometric channeling is directly verified by the spectral structure of the vortex trajectories (\cref{tab:ness-5arch}): the extreme spectral compression of the simple MLP vortex (PC1 at 94.2\%---nearly confined to a single 2D plane) starkly contrasts with the higher-dimensional, distributed flow of Qwen3's SwiGLU architecture (PC1 at 68.3\%).  This quantitative difference confirms that while all over-parameterized models provide the fundamental rotational driving force, the Transformer's distinct gauge topology actively forces the probability momentum to scatter across a richer multidimensional orbit---in exact agreement with the Lie-algebraic dimensionality predicted by the vorticity decomposition.

\subsubsection{Synthesis of NESS Evidence}

Taken together, the four tiers of NESS evidence---machine-precision commutator ($T=34.0$ at FP64), cross-architecture vortex universality ($|t|>79$ for both Transformers), optimizer-independent persistence ($|t|=173$ under momentum-free SGD), and non-Transformer genericity ($|t|>54$ for CNN and MLP)---constitute strong empirical evidence that the neural parameter manifold does not achieve detailed balance. The optimization process is suspended in a dissipating Non-Equilibrium Steady State, advected by the intrinsic structural singularities of the architecture, in quantitative agreement with the thermodynamic vorticity decomposition of \cref{thm:sgd-ness}.

\subsection{Section Synthesis}
\label{sec:exp-synthesis}

The sequence of empirical results derived from these six tests provides a coherent empirical picture: the Transformer's discrete operations are quantitatively consistent with the continuous geometric framework developed in \cref{sec:axiom} through \cref{sec:parameters}. 

At the microscopic limit, the exact adherence to a $-0.5$ scaling law at the zero-section (\cref{sec:exp-mollifier}) identifies the precise ultraviolet (UV) topological cutoff of the spatial fiber. Moving to the temporal arrow, the Lie-Trotter interferometer (\cref{sec:exp-torsion}) empirically bridges discrete algorithmic steps with continuous non-commutative torsion. At the mesoscopic scale, symmetric ablation (\cref{sec:exp-resonance}) demonstrates the necessity of geometric vorticity---equivalently, spectral scattering via anti-symmetric Jacobian components---for the forward norm stability of mature trained configurations; companion training experiments show this necessity to be configurational rather than architectural (\cref{sec:exp-config-scope}).  Probing the spatial gauge connection, the Poincar\'e recurrence experiment (\cref{sec:exp-poincare}) directly observes exact geometric resonance on the RoPE torus at small feature dimension and quantitatively confirms that the $\calO(1/\sqrt{k})$ thermodynamic suppression renders these resonances invisible at production scale---physically justifying the Asymptotic Spatial Ergodic Hypothesis. Expanding to the macroscopic sequence horizon, the Attention Sink phase transition (\cref{sec:exp-amnesia}) confirms that context capacity is governed by the thermodynamic limits of a Dirichlet boundary defect. Finally, at the scale of total system evolution, parameter space tomography (\cref{sec:exp-ness}) reveals that the optimization process operates as a dissipating NESS vortex over a singular algebraic variety, consistent with the predictions of Singular Learning Theory channeled through the Transformer's gauge orbit structure.

By linking the abstract geometric constructions of Sections \ref{sec:axiom} through \ref{sec:parameters} directly to deterministic, macroscopic empirical observables, these findings demonstrate that continuous differential geometry provides a highly predictive analytical framework for understanding the stability limits, context bounds, and optimization dynamics of Large Language Models.
\section{Conclusion}
\label{sec:conclusion}

By translating the discrete algebraic components of the Transformer architecture into a continuous integro-differential equation on a semantic fiber bundle, we have constructed a predictive, descriptive geometric framework for the architecture's stability limits, context bounds, and optimization dynamics.  Across a six-part empirical campaign spanning 124M to 8B parameters, we measured direct quantitative signatures consistent with the continuous geometric predictions.

\paragraph{Closing the Theoretical Loop.}
Our experimental results demonstrate that this continuous framework provides a coherent and predictive description of the architecture's behavior across physical scales.
\begin{itemize}[nosep]
  \item \textbf{Microscopic Identity:}
    Machine-precision ($R^2\!=\!1.000$) verification of the
    $\epsilon^{-1/2}$ scaling law (\cref{sec:exp-mollifier})
    confirms that the topological mollifier $\epsilon$ controls the
    Lipschitz stretch of the flow exactly as predicted.

  \item \textbf{Kinematics \& Topology:}
    The Lie--Trotter torsion interferometer (\cref{sec:exp-torsion})
    demonstrates that representation drift is quantitatively consistent
    with the deterministic Lie bracket predicted by operator splitting of
    non-commuting vector fields.

  \item \textbf{Mesoscopic Stability:}
    Symmetric ablation (\cref{sec:exp-resonance}) demonstrates the
    necessity of geometric vorticity---equivalently, spectral
    scattering via anti-symmetric Jacobian components---for the
    forward norm stability of trained configurations; companion
    training experiments (\cref{sec:exp-config-scope}) show this
    necessity is configurational, not architectural.

  \item \textbf{Gauge Connection \& Thermodynamic Suppression:}
    A controlled micro-transformer experiment (\cref{sec:exp-poincare})
    directly observes Poincar\'e recurrence on the RoPE torus at small
    feature dimension ($k\!=\!2$) and quantitatively confirms the
    $\calO(1/\sqrt{k})$ thermodynamic suppression that renders these
    geometric resonances invisible at production scale---physically
    justifying the Asymptotic Spatial Ergodic Hypothesis.

  \item \textbf{Macroscopic Thermodynamics:}
    The context window degrades through a phase transition
    (\cref{sec:exp-amnesia}) where entropic bulk pressure overwhelms
    the Dirichlet boundary condition---and natural language survives
    extended contexts only because $N_{\mathrm{eff}}\ll N$.

  \item \textbf{Non-Equilibrium Dynamics:}
    The parameter manifold is trapped in an irreversible NESS vortex
    (\cref{sec:exp-ness}), verified across $2\!\times\!2$ conditions,
    consistent with the predictions of Singular Learning Theory
    channeled through the Transformer's gauge orbit structure.
\end{itemize}

\paragraph{Open Questions and Speculative Directions.}
Beyond the empirically verified predictions, the geometric framework suggests several directions for future investigation:
\begin{enumerate}[nosep]
  \item \textbf{Implications for Linear Attention.}
    Gromov's Non-Squeezing Theorem~\cite{gromov1985pseudo} suggests that a finite-dimensional
    token vector cannot embed a massive context window into a symplectic
    cylinder of smaller cross-sectional area without phase-space
    collisions.  Standard Softmax may escape this constraint by acting
    as a topological dissipator; ``Linear Attention'' removes this
    nonlinear dissipator and may therefore encounter Gromov's capacity
    limits.  This connection remains to be formally established.

  \item \textbf{Arithmetic Failures and Ultrametric Topology.}
    Exact integer arithmetic operates natively on an ultrametric topology
    (the $p$-adic metric) that cannot be smoothly embedded into the
    Transformer's Archimedean fiber bundle.  Whether LLM arithmetic
    failure can be rigorously attributed to this topological
    incompatibility is an open question requiring formal proof.

  \item \textbf{Non-Abelian Positional Holonomy for Tree-Reasoning.}
    Upgrading the positional gauge group from $U(1)^{d/2}$ to a
    non-abelian Lie group ($SU(2)$ or $Sp(n)$), with the sequence
    topology elevated to a branched CW Complex, could natively
    distinguish permutations of logical branches, encoding Abstract
    Syntax Trees directly within the continuous geometry.

  \item \textbf{Parameter-Efficient $\mathfrak{so}(d)$ Feed-Forward
    Layers---posed and settled.}
    The symmetric ablation experiment (\cref{sec:exp-resonance})
    empirically established that the skew-symmetric component of the FFN
    Jacobian---equivalently, the geometric vorticity
    $\Omega \in \mathfrak{so}(d)$---is the essential stabilizer that
    arrests Power Iteration resonance in the residual stream.  This
    suggested a testable architectural hypothesis: rather than relying on
    two large, untied dense matrices $W_{\mathrm{in}}$ and
    $W_{\mathrm{out}}$ to organically learn the necessary asymmetry, one
    could structurally enforce it by defining
    $W_{\mathrm{out}} = W_{\mathrm{in}}^\top + A$, where
    $A \in \mathfrak{so}(d)$ is a strictly skew-symmetric learnable
    matrix ($A = -A^\top$), parameterized in a low-rank form with
    $R \ll d$, substantially reducing the parameter count of the FFN
    while explicitly guaranteeing the injection of Lie-algebraic
    rotational friction---an empirical question that the geometric
    framework motivates but cannot settle \emph{a priori}.

    \emph{Postscript: settled negatively.}  This hypothesis has since
    been tested---and refuted---by the controlled twin-ablation campaign
    of \cref{sec:exp-config-scope}, in its gated (SwiGLU) realization
    $W_{\mathrm{down}}=W_{\mathrm{up}}^\top+\Delta$ with trainable
    low-rank $\Delta$.  The gate path already supplies full-rank Jacobian
    asymmetry organically, and the measured marginal effect of the
    explicit low-rank term is $\sim\!0.003$ nats at 0.6B and 1B scale
    (null within twin noise); the bare tie trains stably with no
    correction whatsoever; and at matched parameter count a narrower
    untied FFN strictly dominates the tied variant while requiring
    $1.5\times$ fewer FFN FLOPs.  The framework's role was to motivate a
    falsifiable architectural hypothesis; its refutation sharpens the
    framework's scope: the vorticity requirement is configurational
    (\cref{rem:configurational-scope}), is organically saturated by
    gated architectures, and does not translate into a
    parameter-efficiency prescription.

  \item \textbf{Holographic Validation of Axiom~1.}
    A particularly striking test of the gauge-transport interpretation of
    attention arises in holographic quantum gravity.  In a companion
    study~\cite{Liang2026holographic}, we train a decoder-only
    Transformer to reproduce the boundary state of a three-dimensional
    random tensor network---a discrete model of the AdS/CFT
    correspondence.  The pre-softmax attention logits of the converged
    model exhibit statistically significant positive correlation
    (Spearman $\rho = 0.51$, $p = 5 \times 10^{-23}$) with the exact
    pairwise mutual information between boundary sites, which defines
    the holographic bulk geometry.  This provides direct evidence that
    the geometric transport structure identified in Axiom~1 is not
    merely a mathematical analogy: when the ``semantics'' are literal
    quantum gravity, the Transformer's attention mechanism spontaneously
    encodes the emergent spacetime metric.
\end{enumerate}

\paragraph{Outlook.}
The geometric framework developed here establishes that the Transformer's stability limits and context bounds are rigorously governed by continuous differential geometry and non-equilibrium thermodynamics.  The preliminary holographic validation~\cite{Liang2026holographic} suggests that this geometric structure extends beyond metaphor: in a physical system with known emergent geometry, the Transformer's attention independently discovers the bulk metric.  As architectures approach practical scaling limits, this analytical perspective---complementary to standard empirical scaling laws---is offered as a source of \emph{falsifiable} architectural hypotheses rather than prescriptions: the trajectory of open question~4 above (posed by an earlier version of this framework, settled negatively by controlled twin ablations) exemplifies the intended mode of use, and the configurational stability criterion of \cref{rem:configurational-scope}---whose evolution along the training trajectory remains unmeasured territory---is the successor program.

\section*{Data and Code Availability}
The analysis code and experimental pipelines used in this study are
publicly available at \url{https://github.com/magicknight/GeoML}.
The pre-trained models used are publicly available through their
respective repositories (Hugging Face).


\bibliography{refs}

\end{document}